\title{Differentially Private Algorithms for Graphs Under Continual Observation}
\author{Hendrik Fichtenberger}{University of Vienna, Austria}{hendrik.fichtenberger@univie.ac.at}{https://orcid.org/0000-0003-3246-5323}{}
\author{Monika Henzinger}{University of Vienna, Austria}{monika.henzinger@univie.ac.at}{https://orcid.org/0000-0002-5008-6530}{}
\author{Lara Ost}{University of Vienna, Austria}{lara.ost@univie.ac.at}{}{}
\authorrunning{H. Fichtenberger and M. Henzinger and W. Ost}
\keywords{differential privacy, continual observation, dynamic graph algorithms}
\definecolor{col1}{RGB}{213, 94,  0}
\definecolor{col2}{RGB}{  0,158,115}
\definecolor{col3}{RGB}{  0,114,178}
\definecolor{col4}{RGB}{213, 94,178}
\newtheorem{fact}[theorem]{Fact}
\newcommand{\Vdel}[1]{\partial {V^-_{#1}}}
\newcommand{\Vins}[1]{\partial {V^+_{#1}}}
\newcommand{\Edel}[1]{\partial {E^-_{#1}}}
\newcommand{\Eins}[1]{\partial {E^+_{#1}}}
\newcommand{\graphseq}{\mathcal{G}}     \newcommand{\algo}{\mathcal{A}}         \newcommand{\range}{\mathrm{Range}}     \newcommand{\eps}{\varepsilon}          \newcommand{\Lap}{\mathrm{Lap}}         \newcommand{\reals}{\mathbb{R}}         \newcommand{\nats}{\mathbb{N}}          \newcommand{\integers}{\mathbb{Z}}      \newcommand{\TT}{^\top}                                      \newcommand{\psstart}{\mathrm{start}}
\newcommand{\psend}{\mathrm{end}}
\newcommand{\GS}{\operatorname{GS}}     \newcommand{\GSstatic}{\operatorname{GS_{static}}}     \newcommand{\bin}{\operatorname{bin}}   
\newcommand{\wmst}{{w_{\mathrm{MST}}}}    \newcommand{\wcut}{{w_{\mathrm{CUT}}}}    \newcommand{\wmatch}{{w_{\mathrm{M}}}}    
\newcommand{\flfrac}[2]{\left\lfloor\frac{#1}{#2}\right\rfloor}  
\NewDocumentCommand{\apply}{m m}{#1 \oplus #2}
\newcommand{\etal}{et al\@ifnextchar.{}{.\@}}
\newcommand{\afoot}{\textsuperscript{a}}
\begin{document}
\maketitle

\begin{abstract}
 Differentially private algorithms protect individuals in data analysis scenarios by ensuring that there is only a weak correlation between the existence of the user in the data and the result of the analysis. Dynamic graph algorithms maintain the solution to a problem (e.g., a matching) on an evolving input, i.e., a graph where nodes or edges are inserted or deleted over time. They output the value of the solution after each update operation, i.e., continuously. We study (event-level and user-level) differentially private algorithms for graph problems under continual observation, i.e., differentially private dynamic graph algorithms.
 We present event-level private algorithms for partially dynamic counting-based problems such as triangle count that improve the additive error by a polynomial factor (in the length $T$ of the update sequence)  on the state of the art, resulting in the first algorithms with additive error polylogarithmic in $T$.

 We also give $\eps$-differentially private and partially dynamic algorithms for minimum spanning tree, minimum cut, densest subgraph, and maximum matching. The additive error of our improved MST algorithm is $O(W \log^{3/2}T / \eps)$, where $W$ is the maximum weight of any edge, which, as we show, is tight up to a $(\sqrt{\log T} / \eps)$-factor. For the other problems, we present a partially-dynamic algorithm with multiplicative error $(1+\beta)$ for any constant $\beta > 0$ and additive error $O(W \log(nW) \log(T) / (\eps \beta) )$. Finally, we show that the additive error for a broad class of dynamic graph algorithms with user-level privacy must be linear in the value of the output solution's range.
\end{abstract}

\section{Introduction}

Differential privacy aims to protect individuals whose data becomes part of an increasing number of data sets and is subject to analysis. A differentially private algorithm guarantees that its output depends only very little on an individual's contribution to the input data. Roughly speaking, an algorithm is \emph{$\epsilon$-differentially private} if the probability that it outputs $O$ on data set $\mathcal{D}$ is at most an $e^\epsilon$-factor of the probability that it outputs $O$ on any adjacent data set $\mathcal{D}'$. Two data sets are \emph{adjacent} if they differ only in the data of a single user.
Differential privacy was introduced in the setting of databases \cite{DwoDif06,DwoCal06}, where users (entities) are typically represented by rows and data is recorded in columns. An important notion that allowed for the development of generic techniques and tools (like the Laplace and the exponential mechanism) is the \emph{sensitivity} of a function $f$: the static sensitivity $\rho$ of $f$ is the maximum $\lvert f(\mathcal D) - f(\mathcal D') \rvert$ over all adjacent pairs $\mathcal{D},\mathcal{D}'$. Differential privacy was later generalized to a more challenging setting, where data evolves over time~\cite{dwork10,chan11}: a differentially private algorithm under \emph{continual observation} must provide the same privacy guarantees as before, but for a sequence (or stream) of data sets instead of just a single data set. Often, this sequence results from updates to the original data set that arrive over time. In this setting, the presence or absence of a single user in one update can affect the algorithm's output on all future data sets, i.e., two adjacent databases can differ on all future outputs and, thus, have infinite sensitivity.

In this paper, we study differentially private graph algorithms under continual observation, i.e., for \emph{dynamic} graph problems. The input is a sequence of graphs that results from node or edge updates, i.e., insertions or deletions. \emph{Partially dynamic} algorithms only allow either insertions or deletions, \emph{fully dynamic} algorithms allow both. After each update, the algorithm has to output a solution for the current input, i.e., the algorithm outputs a sequence of answers that is equally long as the input sequence.
For differentially private graph algorithms two notions of \emph{adjacency of graph sequences} exist: node-adjacency and edge-adjacency. Two graph sequences are \emph{edge-adjacent} if they only differ in a single insertion or deletion of an edge. Similarly, two graph sequences are \emph{node-adjacent} if they only differ in an insertion or deletion of a node.\footnote{Of course, a graph can also be represented by a database, where, e.g., every row corresponds to an edge, but as we present algorithms that solve graph algorithmic problems we use the graph-based terminology through the paper.}
Some of our algorithms assume that they are given as input an upper bound on some graph parameter, such as   the maximum degree $D$ or the maximum edge weight $W$ in the graph and that it is guaranteed that the input graph respects these bounds. Our algorithms do not check whether this guarantee is indeed respected by the input sequence. If no bound on the maximum degree is given, the algorithms assume that $D$ equals the trivial upper bound of $n$.

We initiate the study of differentially private algorithms for \emph{non-local} partially dynamic graph problems. 
We consider a problem \emph{non-local} if its (optimum) value cannot be derived from the frequency histogram of constant-size subgraphs of the input graph and call it \emph{local} otherwise.
Non-local problems include the cost of the minimum spanning tree, the weight of the global and $s$-$t$ minimum cut, and the density of the densest subgraph. We also give improved algorithms for local graph problems and show various lower bounds on the additive error for differentially private dynamic graph algorithms.

\begin{table}
\caption{Additive errors for partially-dynamic $\eps$-differentially private algorithms with failure probability $\delta$. We use $D$ for the maximum degree and $W$ for the maximum edge weight, $n$ for the maximum number of nodes of any graph in the input sequence, and $\Lambda = \log(1/\delta)/\eps$.  All these are publicly known parameters. The upper bounds follow from \cref{cor:graph-bin-mech} and \cref{tbl:song-sensitivity}\opt{conf}{ on page~\pageref{tbl:song-sensitivity}}.
See \cref{sec:edge-dp-lower-bounds} for results on event-level lower bounds and \cref{sec:lower-bound-user-level} for user-level lower bounds.}
\label{tbl:sensitivity-results}
\centering
\begin{tabulary}{\linewidth}{LRRRR}
        \toprule
        Graph function            & \multicolumn{2}{c}{partially dynamic} & \multicolumn{2}{c}{fully dynamic} \\
                                  & edge-adj.      & node-adj.            & edge-adj.             & edge-adj. \\
                                  & event-level    & event-level          & event-level           & user-level    \\
        \midrule
        min. spanning tree     & \makecell[tr]{$\Omega(W\log T)$, \\ $O(W\log^{3/2} T \cdot \Lambda)$}
                               & \makecell[tr]{$\Omega(W\log T)$, \\ $O(DW\log^{3/2} T \cdot \Lambda)$}
                               & $\Omega(W\log T)$                                             & $\Omega(nW)$ \\
				\midrule
\makecell[tl]{min. cut,\\ max. matching}            & $\Omega(W\log T)$ & $\Omega(W\log T)$ & $\Omega(W\log T)$     & $\Omega(nW)$ \\
					\midrule
        edge count             & \makecell[tr]{$\Omega(\log T)$, \\ $O(\log^{3/2}T \cdot \Lambda)$}
                               & \makecell[tr]{$\Omega(D \log T)$, \\ $O(D\log^{3/2}T \cdot \Lambda)$}
                               & \makecell[tr]{$\Omega(\log T)$, \\ $O(\log^{3/2}T \cdot \Lambda)$}  & $\Omega(n^2)$ \\
				\midrule
        high-degree nodes      & \makecell[tr]{$\Omega(\log T)$, \\ $O(\log^{3/2}T \cdot \Lambda))$}
                               & \makecell[tr]{$\Omega(D \log T)$, \\ $O(D\log^{3/2}T \cdot \Lambda)$}
                               & $\Omega(\log T)$                                                 & $\Omega(n)$ \\
					\midrule
        degree histogram       & \makecell[tr]{$\Omega(\log T)$, \\ $O(D\log^{3/2}T \cdot \Lambda))$}
                               & \makecell[tr]{$\Omega(D \log T)$, \\ $O(D^2\log^{3/2}T \cdot \Lambda)$}
                               & $\Omega(\log T)$                                                 & $\Omega(n)$ \\
					\midrule
        triangle count         & \makecell[tr]{$\Omega(\log T)$, \\ $O(D\log^{3/2}T \cdot \Lambda))$}
                               & \makecell[tr]{$\Omega(D \log T)$, \\ $O(D^2\log^{3/2}T \cdot \Lambda)$}
                               & $\Omega(\log T)$                                                 & $\Omega(n^3)$ \\
						\midrule
        $k$-star count         & \makecell[tr]{$\Omega(\log T)$, \\ $O(D^k\log^{3/2}T \cdot \Lambda)$}
                               & \makecell[tr]{$\Omega(D \log T)$, \\ $O(D^k\log^{3/2}T \cdot \Lambda)$}
                               & $\Omega(\log T)$                                                 & $\Omega(n^{k+1})$ \\
        \bottomrule
\end{tabulary}
\end{table}

{\bf Local problems.}
The only prior work on differentially private dynamic algorithms is an algorithm by Song et al.~~\cite{song18} for various \emph{local} graph problems such as counting high-degree nodes, triangles and other constant-size subgraphs.  Even though not explicitly stated, they make the same assumption with respect to the maximum degree $D$ as our algorithms.
We present an algorithm  for these local problems  that improves the additive error by a factor of $\sqrt{T}/\log^{3/2}T$, where $T$ is the length of the update sequence.
We also give the first differentially private partially-dynamic algorithm for the value of the minimum spanning tree. 
\Cref{tbl:sensitivity-results} lists upper and lower bounds for these results, where $n$ is the number of nodes in the graph, $W$ is the maximum edge weight (if applicable), $D$ is the maximum node degree, $\epsilon$ is an arbitrarily small positive constant, and $\delta$ is the failure probability of the algorithm. 
We state below our main contributions in more detail. 
The update time of all our algorithms is linear in $\log T$  plus the time needed to solve the corresponding non-differentially private dynamic graph problem.

\begin{theorem}[see \cref{sec:global-sens-mech}]\label{thm:1}
Let $\eps, \delta > 0$.
        There exist an $\eps$-edge-differentially as well as an $\eps$-node-differentially private algorithm for partially-dynamic minimum spanning tree, edge count, the number of high-degree nodes, the degree histogram, triangle count and $k$-star count that with
				probability at least $1-\delta$ give an answer with additive error as shown in \Cref{tbl:sensitivity-results}. {The algorithms whose accuracy bound in that table contains the parameter $D$ assume that $D$ is a publicly known upper bound on the maximum degree of the input graph. The other algorithms do not need this assumption.}
\end{theorem}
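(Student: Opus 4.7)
The plan is to reduce each problem to continual counting via a prefix-sum encoding, and then apply the binary tree mechanism whose accuracy guarantee is stated in \cref{cor:graph-bin-mech}. For a partially dynamic input sequence $G_0,G_1,\dots,G_T$ and a graph function $f$, I would write
\[
 f(G_t) \;=\; f(G_0) \;+\; \sum_{i=1}^{t} \Delta_i, \qquad \Delta_i \;:=\; f(G_i) - f(G_{i-1}),
\]
and publish privatized prefix sums of the scalar stream $(\Delta_1,\dots,\Delta_T)$ instead of the values $f(G_t)$ themselves. The value $f(G_0)$ can be released in the clear, since in all problems listed the initial graph is public (empty or given).

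The key observation is that partially dynamic sequences interact cleanly with adjacency: because only insertions (or only deletions) are allowed, any two edge-adjacent sequences differ in exactly one update position $i^\star$, so the corresponding difference vectors differ in the single coordinate $\Delta_{i^\star}$. The same is true for node-adjacency. Hence the $\ell_\infty$-sensitivity of the difference stream equals the static, one-step sensitivity $\rho_f$ of $f$ under the respective adjacency notion. These per-update sensitivities are precisely those collected in \cref{tbl:song-sensitivity} (from Song et al.): for example, $\rho_f = W$ for MST under edge-adjacency and $\rho_f = O(DW)$ under node-adjacency, $\rho_f = O(D^k)$ for $k$-star count under both adjacencies (with an extra $D$-factor appearing for triangle count and degree histogram when passing from edge- to node-adjacency). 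For MST specifically, one should justify the bound by the standard swap argument: inserting or deleting one edge changes the MST weight by at most the weight of one edge, i.e., by at most $W$; a single node update affects at most $D$ incident edges, giving $DW$.

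Once the per-update sensitivity $\rho_f$ is fixed, I would invoke \cref{cor:graph-bin-mech} on the stream $(\Delta_i)_i$ with privacy parameter $\eps$ and failure probability $\delta$. The binary tree mechanism releases every prefix sum with additive error $O(\rho_f\cdot \log^{3/2}T \cdot \log(1/\delta)/\eps) = O(\rho_f\cdot \log^{3/2}T \cdot \Lambda)$ with probability at least $1-\delta$, and the privacy guarantee transfers to the output sequence since $f(G_0)$ is data-independent and the map from adjacent input sequences to difference streams only changes a single coordinate by at most $\rho_f$. Plugging the entries of \cref{tbl:song-sensitivity} into this template yields exactly the upper bounds stated in \cref{tbl:sensitivity-results} for each of the six functions and both adjacency notions.

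The main obstacle I expect is not in the mechanism itself but in verifying the per-update sensitivity bounds $\rho_f$, especially for MST and for the subgraph-counting problems under node-adjacency. For MST this requires the swap/cycle argument above; for triangle count and $k$-star count it requires bounding the number of subgraphs incident to a single node or edge by $O(D)$, $O(D^2)$, or $O(D^k)$ as appropriate. These are routine once one restricts attention to bounded-degree inputs, which is the setting of \cref{tbl:song-sensitivity}, so the proof reduces to a bookkeeping step once the reduction to continual counting is in place.
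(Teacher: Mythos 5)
Your plan has the right mechanism (difference sequence plus the binary tree / counting scheme of Chan et~al.), and the final numbers agree with \cref{tbl:sensitivity-results}, but the central claim on which you hang the privacy transfer is incorrect, and it is precisely the point the paper has to work to get around.

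You write that, for partially dynamic edge-adjacent sequences, ``the corresponding difference vectors differ in the single coordinate $\Delta_{i^\star}$,'' so that the $\ell_\infty$-sensitivity of the difference stream is the static one-step sensitivity. That is false for almost all of the functions in the theorem. Take triangle count under edge-adjacency: $\graphseq$ inserts $e^\star$ at time $t^\star$ and $\graphseq'$ does not. Beyond the jump at $t^\star$, \emph{every} later time at which an edge is inserted that closes a triangle through $e^\star$ makes $\Delta f_{\graphseq}$ and $\Delta f_{\graphseq'}$ disagree again; there can be $\Theta(T)$ such times. The same happens for MST (an edge inserted at a later time may replace $e^\star$ in the tree in $\graphseq$ but cause a different MST weight change in $\graphseq'$), for degree histogram, for $k$-star count, and for essentially everything except plain edge count under edge-adjacency. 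Because of this, adjacent inputs to the counting mechanism are \emph{not} adjacent streams in the usual sense, and you cannot simply invoke the standard binary-mechanism privacy guarantee.

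What makes the approach work anyway is that the \emph{$\ell_1$} distance between the two difference sequences, $\sum_{t}\lvert \Delta f_{\graphseq}(t) - \Delta f_{\graphseq'}(t)\rvert$ --- which the paper calls the continuous global sensitivity of $\Delta f$ and which is exactly what \cref{tbl:song-sensitivity} bounds --- stays bounded independently of $T$ (for these specific functions and for partially dynamic updates). The paper's key technical step (\cref{lem:graph-obs1}) shows that the binary mechanism's $\eps/\log T$ noise per p-sum still yields $\eps$-DP under this weaker premise, by partitioning the set of p-sums whose values differ into $O(\log T)$ groups covering disjoint time intervals, so that within each group the total perturbation telescopes to at most the continuous global sensitivity. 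Your proposal skips this lemma entirely and substitutes an $\ell_\infty$ claim that does not hold, so the privacy argument does not go through as written. (Your error bound happens to match because for these monotone, partially dynamic problems the $\ell_1$ bound and the ``one-step'' bound coincide numerically, but that coincidence is a statement that needs proof --- it is exactly the content of the sensitivity lemmas for MST and the subgraph counts --- not an identity you can take for granted.)
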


{\bf Non-local problems.}
For non-local problems we present an algorithm that, by allowing a small multiplicative error, can obtain differentially private partially dynamic algorithms for a broad class of problems that includes the aforementioned problems. \Cref{tbl:monotone-results} lists our results for some common graph problems.
The algorithm achieves the following performance.

\begin{theorem}[see \cref{thm:technical-monotone}]\label{thm:2}
        Let $\eps, \beta, \delta, r > 0$ and let $f$ be a function with range $[1,r]$ that is monotone on all input sequences and has static global sensitivity\footnote{Defined formally in~\cref{subsec:defs}.} $\rho$, where $r$ and $\rho$ are publicly known parameters. There exists an $\eps$-differentially private dynamic algorithm with multiplicative error $(1+\beta)$, additive error $O(\rho \log(r) \log(T) / \log(1+\delta))$ and failure probability $\delta$ that computes $f$. 
\end{theorem}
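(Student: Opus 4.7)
The plan is to exploit monotonicity of $f$ to argue that only $k = O(\log r / \log(1+\beta))$ ``significant'' changes can occur in any $(1+\beta)$-approximation of $f$ along the sequence, and to use the Sparse Vector Technique (SVT) to detect these changes while paying privacy budget only for the detections rather than for every one of the $T$ time steps. Concretely, set geometric thresholds $\tau_0 = 1 < \tau_1 = 1+\beta < \cdots < \tau_k = (1+\beta)^k \ge r$. Since $f$ is monotone and bounded in $[1,r]$, the value $f(G_t)$ crosses each $\tau_i$ at most once over the whole sequence, so at most $k$ ``above threshold'' events ever occur.

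Concretely, the algorithm maintains a pointer $i$, starting at $i=0$, and runs a single instance of SVT with total privacy budget $\eps$ and capacity $c=k$. At each time $t$, it queries SVT with the query $q_t = f(G_t)$ against the current threshold $\tau_i$. If SVT returns $\bot$, the algorithm re-emits the last released estimate; if it returns $\top$, the algorithm emits $\tau_i$, increments $i$, and proceeds. Once $i=k$, it keeps emitting $\tau_k$ for the remaining time steps. Privacy is immediate: SVT with $c=k$ top answers and budget $\eps$ is $\eps$-differentially private under the event-level neighboring relation, since $q_t$ has sensitivity $\rho$.

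For utility, the standard SVT analysis (e.g., Dwork--Roth) says that with Laplace noise of scale $O(k\rho/\eps)$ on both the threshold and the queries, every comparison is correct up to additive error $\eta = O(k\rho \log(T/\delta)/\eps) = O(\rho \log(r)\log(T/\delta)/(\eps\log(1+\beta)))$ with probability $\ge 1-\delta$, across all $T$ queries. Condition on this event. When the $i$-th crossing is detected at time $t$, we have $f(G_t) \in [\tau_i - \eta, \tau_i + \eta]$, and between consecutive detections $f$ stays in the interval bounded by consecutive (noisy) thresholds. Consequently the released value $\widehat{f}_t$ satisfies $\widehat{f}_t \le f(G_t) \le (1+\beta)\widehat{f}_t + O(\eta)$, which is the claimed multiplicative--additive guarantee. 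A union bound over the $k$ detections is unnecessary because the SVT bound is already uniform in $T$.

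The main obstacle is the regime where $f$ is small compared to the per-step sensitivity (that is, when $\beta \tau_i < \rho$, so a single update might cross several consecutive thresholds). This is handled by simply releasing the largest $\tau_j$ below the noisy query answer whenever SVT fires—equivalently, allowing the pointer $i$ to advance by more than one in a single step, charged to a single top answer of SVT—since the total number of tops is still bounded by $k$. A second, routine detail is that the ``below'' answers of SVT cost no privacy, so we can safely emit an output at every one of the $T$ time steps. With these adjustments the analysis above goes through and yields the claim.
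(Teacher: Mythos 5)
Your plan is essentially the same as the paper's: geometric thresholds $\tau_i = (1+\beta)^i$, a single SVT instance with capacity $c = \log_{1+\beta}(r)$, a monotonically advancing pointer, and privacy charged only to $\top$-answers, whose total over the whole run is bounded by $c$ because of monotonicity of $f$.

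Where you differ is exactly the one place you flag as "the main obstacle," and there is a real gap in how you patch it. You propose, when SVT fires, to release "the largest $\tau_j$ below the noisy query answer" and to advance the pointer by several steps "charged to a single top answer of SVT." Neither of these is possible with the SVT variant you and the paper both invoke (\cref{alg:svt}): that mechanism only reveals the bit $\top/\bot$, not the noisy value $f_i(D)+\nu_i$, so you have no way to decide how far to advance; and the privacy proof of SVT fundamentally charges one unit of budget per $\top$-answer, so you cannot advance the pointer $m$ levels while spending one. The paper's fix is a \emph{while-loop} inside each time step: repeatedly query SVT against $\tau_i, \tau_{i+1}, \ldots$ until a $\bot$ is returned, incrementing the pointer once per $\top$. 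This does consume $m$ units of budget for a jump over $m$ thresholds, but — and this is the observation you correctly make, just attach to the wrong mechanism — since the pointer is monotone and bounded by $k = \log_{1+\beta}(r)$, the total number of $\top$'s over all $T$ steps is still at most $k$, so capacity $c = k$ suffices. Without the while-loop (your one-query-per-step version), the pointer can lag behind a large jump for many steps, and during that lag the released value is off by up to $\tau_{i+m} - \tau_i$, which is not covered by the claimed additive error $\alpha$; so the while-loop is not a cosmetic detail but is needed for the utility guarantee.

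A minor secondary point: you say noise of scale $O(k\rho/\eps)$ is added to both the threshold and the queries, whereas in \cref{alg:svt} the threshold noise $\zeta$ has scale $O(\rho/\eps)$ and only the per-query noise $\nu_i$ has scale $O(c\rho/\eps)$. This does not change the asymptotics of the error bound, but it is worth stating precisely since it affects the constants and the way the failure probability is split across $\zeta$ and the $T$ query noises.
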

Note that for partially dynamic graph algorithms it holds that $T = O(n^2)$. 

{\bf Lower bounds.}
We complement these upper bounds by also giving some lower bounds on the additive error of any differentially private dynamic graph algorithm. For the problems in~\Cref{tbl:sensitivity-results} we show lower bounds
of $\Omega(W \log T)$, resp.~$\Omega(\log T)$, where $W$ is assumed to be a publically known maximum allowed edge weight. Note that these lower bounds apply to the partially dynamic as well as to the fully dynamic setting.

The above notion of differential privacy is also known as \emph{event-level} differential privacy, where two graph sequences differ in at most one ``event'', i.e., one update operation. A more challenging notion is \emph{user-level} differential privacy.
 Two graph sequences are edge-adjacent \emph{on user-level} if they differ in \emph{any} number of updates for a single edge (as opposed to \emph{one} update for a single edge in the case of the former \emph{event-level} adjacency). Note that requiring user-level edge-differential privacy is a more stringent requirement on the algorithm than event-level edge-differential privacy.\footnote{Node-adjacency on user-level is defined accordingly but not studied in this paper.}
We show strong lower bounds for edge-differentially private algorithms on user-level for a broad class of dynamic graph problems.

\begin{theorem}[informal, see \cref{thm:technical-lower-bound}]
        Let $f$ be a function on graphs, and let $G_1, G_2$ be arbitrary graphs. There exists a $T \geq 1$ so that any $\eps$-edge-differentially private dynamic algorithm on user-level that computes $f$ must have additive error $\Omega(\lvert f(G_1) - f(G_2) \rvert)$ on input sequences of length at least $T$.
\end{theorem}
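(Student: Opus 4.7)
Plan: My approach is the standard two-point indistinguishability blueprint for continual-observation DP lower bounds, adapted to user-level edge-adjacency. The key observation is that under this adjacency, two compared sequences must agree on every update involving any edge other than a single designated edge $e^*$, but may disagree arbitrarily on the updates involving $e^*$. This unbounded freedom on $e^*$ is what I would try to convert into a large output gap.

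Step 1 (construction). Given $G_1,G_2$, I would choose a sufficiently large $T$ together with a ``scaffold'' graph $H$ and a distinguished edge $e^*$ so that toggling $e^*$ in $H$ realizes the full gap, i.e.\ $f(H)\approx f(G_1)$ and $f(H\cup\{e^*\})\approx f(G_2)$ (or vice versa). Then build two sequences $S_1,S_2$ of length $T$ that spend $T-O(1)$ steps performing identical insertions to construct $H$, and differ only in whether $e^*$ is (repeatedly) inserted/deleted so that at the query time $t^*\le T$ the current graphs are $H$ in $S_1$ and $H\cup\{e^*\}$ in $S_2$. By design $S_1,S_2$ are user-level edge-adjacent, and their true function values at $t^*$ differ by $\Omega(\lvert f(G_1)-f(G_2)\rvert)$.

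Step 2 (privacy versus accuracy). User-level $\eps$-edge-DP yields $\Pr[\mathcal{A}(S_1)_{t^*}\in A]\le e^{\eps}\Pr[\mathcal{A}(S_2)_{t^*}\in A]$ for every measurable $A$. Setting $A$ to be the ball of radius $\alpha$ around $f(G_1)$, an accuracy hypothesis on $S_1$ forces $\Pr[\mathcal{A}(S_1)_{t^*}\in A]$ to be close to $1$, and the DP inequality then forces $\Pr[\mathcal{A}(S_2)_{t^*}\in A]\ge e^{-\eps}\cdot\Omega(1)$, contradicting accuracy on $S_2$ unless $\alpha\ge c\cdot \lvert f(G_1)-f(G_2)\rvert$ for a constant $c=c(\eps)$. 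This is exactly the classical two-point testing argument.

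The main obstacle is Step 1: for arbitrary $G_1,G_2$ that may differ in many edges, making a \emph{single}-edge toggle on a common scaffold account for the entire gap $\lvert f(G_1)-f(G_2)\rvert$ is the nontrivial part. A straightforward edit-path hybrid $G_1=H_0,\dots,H_k=G_2$ only yields $\Omega(\lvert f(G_1)-f(G_2)\rvert/k)$, which is too weak. I expect the technical statement therefore requires a function-tailored scaffold---e.g.\ embedding many disjoint gadgets whose joint $f$-value is controlled by $e^*$, or exploiting the unbounded update budget that user-level adjacency grants to $e^*$ to drive the scaffold between two very different configurations---so that a single edge's history can swing $f$ by essentially its full range. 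This gadget-based construction is where the real combinatorial work lies.
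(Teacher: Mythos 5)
Your Step~1 cannot be carried out, and this is not merely a matter of finding a clever gadget: it is structurally impossible. If two graph sequences $S_1 = (G_1^{(1)},\dots,G_T^{(1)})$ and $S_2 = (G_1^{(2)},\dots,G_T^{(2)})$ are user-level edge-adjacent, then by \cref{def:user-edge-adj} they are joined by a chain of event-level edge-adjacent sequences all adjacent \emph{on the same designated edge} $e^*$, and by the remark following \cref{def:edge-adjacency}, two event-level edge-adjacent sequences have graphs at each index that differ in at most one edge. Propagating this along the chain, $G_t^{(1)}$ and $G_t^{(2)}$ can differ only in whether $e^*$ is present, for \emph{every} time $t$. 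Hence $\lvert f(G_t^{(1)}) - f(G_t^{(2)}) \rvert \le \GSstatic(f)$ at every time step. Your two-point test therefore cannot yield anything beyond the trivial $\Omega(\GSstatic(f))$ lower bound, which for the problems at hand is far smaller than $\lvert f(G_1) - f(G_2) \rvert$ (for minimum cut, $\GSstatic = O(W)$ whereas the spread is $\Theta(nW)$). Exploiting the ``unbounded update budget'' on $e^*$ does not help: it only changes how often $e^*$ is toggled, never how many edges the two time-aligned graphs differ by.

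The paper's proof (\cref{thm:technical-lower-bound}) is a packing argument, not a two-point test, and the distinction is exactly what rescues the statement. It constructs $2^{T/(2\ell)}$ graph sequences $\graphseq_b$, one per bit string $b$, each of which repeatedly transforms $G_1$ into $G_2$ and back via the minimum-length update path, interleaved with placeholder toggles of a spared edge $e$; the bit $b_i$ only shifts where inside a phase the transformation happens. The large $f$-gap $s = \Omega(\lvert f(G_1) - f(G_2)\rvert)$ is realized \emph{between two constructed sequences} $\graphseq_b, \graphseq_{b'}$, which are not user-level edge-adjacent and to which the DP inequality is never applied directly. Instead, each $\graphseq_b$ is shown to be within $O(\ell)$ user-level edge-adjacency operations of a single fixed reference sequence $\graphseq'$, so group privacy gives $\Pr[\algo(\graphseq') \in O_b] \ge e^{-O(\eps \ell)}(1-\delta)$, where the accurate-output events $O_b$ are pairwise disjoint because of the pairwise $f$-gaps. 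Choosing $T$ large enough (so that $2^{T/(2\ell)} > e^{O(\eps \ell)} / (1-\delta)$) makes these probabilities sum to more than $1$, a contradiction. If you want to salvage your outline you would have to replace the two-point step entirely by such an exponential packing; the scaffold construction and gadget engineering you anticipate cannot compensate for the fact that user-level edge-adjacent sequences never open up a per-timestep gap larger than the static sensitivity.
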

This theorem leads to the lower bounds for fully dynamic algorithms stated in \Cref{tbl:sensitivity-results}.

\begin{table}
\caption{Private algorithms with failure probability $\delta$ with additional multiplicative error of $(1+\beta)$ for arbitrary $\beta > 0$. We use $\Lambda = 1/(\eps \delta\log(1+\beta))$, $D$ for the maximum degree, $W$ for the maximum edge weight and $n$ for the maximum number of nodes of any graph in the input sequence. All these are publicly known parameters.}
\label{tbl:monotone-results}
\centering
\begin{tabulary}{\linewidth}{LRR}
        \toprule
        Graph function    & \multicolumn{2}{c}{partially dynamic, event-level} \\
                          & edge-adjacency & node-adjacency       \\
        \midrule
        minimum cut       & $O(W\log(nW) \log(T)\cdot \Lambda)$ & $O(DW\log(nW) \log(T)\cdot \Lambda)$ \\
        densest subgraph  & $O(\log(n) \log(T)\cdot \Lambda)$ & $O(\log(n) \log(T)\cdot \Lambda)$ \\
        minimum $s,t$-cut & $O(W\log(nW) \log(T)\cdot \Lambda)$ & $O(DW\log(nW) \log(T)\cdot \Lambda)$ \\
        maximum matching  & $O(W\log(nW) \log(T)\cdot \Lambda)$ & $O(W\log(nW) \log(T)\cdot \Lambda)$ \\
        \bottomrule
\end{tabulary}
\end{table}

\textbf{\sffamily Technical contribution.} 
{\bf Local problems.} Our algorithms for local problems (Theorem~\ref{thm:1}) incorporate a counting scheme by Chan et al.~\cite{chan11} and the \emph{difference sequence technique} by Song et al.~\cite{song18}. The difference sequence technique addresses the problem that two adjacent graphs might
differ on all outputs starting from the point in the update sequence where their inputs differ. 
More formally, let $f_{\graphseq}(t)$ be the output of the algorithm after operation $t$  in the  graph sequence $\graphseq$.
Then the \emph{continuous global sensitivity} $\sum_{t=1}^T |f_{\graphseq}(t) - f_{\graphseq'}((t)|$ might be $\Theta(\rho T)$. Using the ``standard'' Laplacian mechanism for such a large sensitivity would, thus, lead to an additive error linear in $T$. 
The idea of~\cite{song18} is to use instead the \emph{difference sequence of $f$} defined as $\Delta f(t) = f(t) - f(t-1)$, as they observed that for various local graph properties  the continuous global sensitivity of the difference sequence, i.e., $\sum_{t=1}^T |\Delta f_{\graphseq}(t) - \Delta f_{\graphseq'}((t)|$ can be bounded by a function independent of $T$. 
However, their resulting partially-dynamic algorithms still have an additive error linear in $\sqrt T$.
We show how to combine the continuous global sensitivity of the difference sequence with the binary counting scheme of
Chan et al.~\cite{chan11} to achieve partially-dynamic algorithms with additive error linear in $\log^{3/2}T$.

Furthermore we show that the approach based on the continuous global sensitivity of the difference sequence fails, if the presence or absence of a node or edge can significantly change the target function's value for all of the subsequent graphs. 
In particular, we show that for several graph problems like minimum cut and maximum matching changes in the function value between adjacent graph sequences can occur at every time step even for partially dynamic sequences, resulting in a continuous global sensitivity of the difference sequence that is linear in $T$. This implies that this technique cannot be used to achieve differentially private dynamic algorithms for these problems.

{\bf Non-local problems.} We leverage the fact that the sparse vector technique~\cite{DwoCom09} provides negative answers to threshold queries with little effect on the additive error to approximate monotone functions $f$ on graphs under continual observation (e.g., the minimum cut value in an incremental graph) with multiplicative error $(1 +\beta)$: If $r$ is the maximum value of $f$, we choose thresholds $(1+\beta), \ldots, (1+\beta)^{\log_{1+\beta}(r)}$ for the queries. This results in at most $\log_{1+\beta}(r)$ positive answers, which affect the additive error linearly, while the at most $T$ negative answers affect the additive error only logarithmically instead of linearly.

{\bf Lower bounds.} Dwork et al.~\cite{dwork10} had given a lower bound for counting in binary streams. We reduce this problem to partially dynamic graph problems on the event-level to achieve the event-level lower bounds.

For the user-level lower bounds we assume by contradiction that an $\epsilon$-differentially private dynamic algorithm $\cal A$ with ``small'' additive error exists and
construct an exponential number of graph sequences that are all user-level ``edge-close'' to a simple graph sequence $\graphseq'$. Furthermore any two such graph sequences have at least one position with two very different graphs such that $\cal A$ (due to its small additive error) must return two different outputs at this position, which leads to two different output sequences if $\cal A$ answers within its error bound.
Let $O_i$ be the set of accurate output sequences of $\cal A$
on one of the graph sequences $G_i$. By the previous condition  $O_i \cap O_j = \emptyset$ if $i \not= j$.
As $G_i$ is ``edge-close'' to $\graphseq'$, 
there is a relatively large probability (depending on the degree of ``closeness'') that $O_i$ is output when $\cal A$ runs on $\graphseq'$. This holds for all $i$. 
However, since  $O_i \cap O_j = \emptyset$ if $i \not= j$ and we have constructed exponentially many graph sequences $G_i$, the sum of these probabilities over all $i$ adds up to a value larger than 1, which gives a contradiction.
The proof is based on ideas of a lower bound proof for databases in~\cite{dwork10}.

\opt{conf}{All missing proofs can be found in the full version of the paper at http://arxiv.org/abs/2106.14756.}

\textbf{\sffamily Related Work.} Differential privacy, developed in~\cite{DwoDif06,DwoCal06}, is the de facto gold standard of privacy definitions and several lines of research have since been investigated~\cite{BarPri07,McSMec07,KasWha08,DwoDif09,GupDif10,BluLea13}. In particular, differentially private algorithms for the release of various graph statistics such as subgraph counts~\cite{KarPri11,BloDif13,CheRec13,KasAna13,ZhaPri15}, degree distributions~\cite{HayAcc09,DayPub16,ZhaDif21}, minimum spanning tree~\cite{NisSmo07}, spectral properties \cite{WanDif13,AroDif19}, cut problems~\cite{GupDif10,AroDif19,EliDif19}, and parameter estimation for special classes of graphs~\cite{LuExp14} have been proposed. Dwork et al.~\cite{dwork10} and Chan et al.~\cite{chan11} extended the analysis of differentially private algorithms to the regime of continual observation, i.e., to input that evolves over time. Since many data sets in applications are evolving data sets, this has lead to results for several problems motivated by practice~\cite{ChaDif12,KelDif14,NyDif14,ErdPri15,WanRea18}. Only one prior work analyzes evolving graphs: Song et al.~\cite{song18} study problems in incremental bounded-degree graphs that are functions of local neighborhoods. Our results improve all bounds for undirected graphs initially established in~\cite{song18} by a factor of $\sqrt{T} / \log^{3/2} T$ in the additive error.

\section{Preliminaries}
\label{sec:preliminaries}

\subsection{Graphs and Graph Sequences}\label{subsec:defs}
We consider undirected graphs $G = (V,E)$, which change dynamically.
Graphs may be edge-weighted, in which case $G = (V,E,w)$, where $w: E \to \nats$.
The evolution of a graph is described by a \emph{graph sequence} $\graphseq = (G_1,G_2,\dots)$,
where $G_t = (V_t,E_t)$ is derived from $G_{t-1}$ by applying updates, i.e., inserting or deleting nodes or edges.
We denote by $|\graphseq|$ the length of $\graphseq$, i.e., the number of graphs in the sequence.
At time $t$ we delete a set of nodes $\Vdel{t}$ along with the corresponding edges $\Edel{t}$ and
insert a set of nodes $\Vins{t}$ and edges $\Eins{t}$.
More formally, $V_t = (V_{t-1} \setminus \Vdel{t}) \cup \Vins{t}$ and $E_t = (E_{t-1} \setminus \Edel{t}) \cup \Eins{t}$, with initial node and edge sets $V_0$, $E_0$, which may be non-empty.
If a node $v$ is deleted, then all incident edges are deleted at the same time,
i.e., if $v \in \Vdel{t}$, then $(u,v) \in \Edel{t}$ for all $(u,v) \in E_{t-1}$.
Both endpoints of an edge inserted at time $t$ need to be in the graph at time $t$, i.e.,
$\Eins{t} \subseteq ((V_{t-1} \setminus \Vdel{t}) \cup \Vins{t}) \times ((V_{t-1} \setminus \Vdel{t}) \cup \Vins{t})$.
The tuple $(\Vins{t}, \Vdel{t}, \Eins{t}, \Edel{t})$ is the \emph{update} at time $t$. For any graph $G$ and any update $u$, let $\apply{G}{u}$ be the graph that results from applying $u$ on $G$.

A graph sequence is \emph{incremental} if $\Edel{t} = \Vdel{t} = \emptyset$ at all time steps $t$.
A graph sequence is \emph{decremental} if $\Eins{t} = \Vins{t} = \emptyset$ at all time steps $t$.
Incremental and decremental graph sequences are called \emph{partially dynamic}.
Graph sequences that are neither incremental nor decremental are \emph{fully dynamic}.

Our goal is to continually release the value of a \emph{graph function} $f$ which takes a graph as input and outputs a real number.
In other words, given a graph sequence $\graphseq = (G_1,G_2,\dots)$ we want to compute the sequence $f(\graphseq) = (f(G_1),f(G_2),\dots)$.
We write $f(t)$ for $f(G_t)$.
Our algorithms will compute an update to the value of $f$ at each time step, i.e., we compute $\Delta f(t) = f(t) - f(t-1)$. We call the sequence $\Delta f$ the \emph{difference sequence of $f$}.

Given a graph function $g$ the \emph{continuous global sensitivity} $\GS(g)$ of $g$ is defined as the maximum value of $||g(S) - g(S')||_1$ over all adjacent graph sequences $S$, $S'$.
We will define adjacency of graph sequences below.
In our case, we are often interested in the continuous global sensitivity of the difference sequence of a graph function $f$, which is given by the maximum value of $\sum_{t=1}^T |\Delta f_{\graphseq}(t) - \Delta f_{\graphseq'}(t)|$, where $\Delta f_{\graphseq}$ and $\Delta f_{\graphseq'}$ are the difference sequences of $f$ corresponding to adjacent graph sequences $\graphseq$ and $\graphseq'$.

Two graphs are \emph{edge-adjacent} if they differ in one edge.
We also define global sensitivity of functions applied to a single graph. Let $g$ be a graph function.
Its \emph{static global sensitivity} $\GSstatic(g)$ is the maximum value of $|g(G) - g(G')|$ over all edge-adjacent graphs $G$, $G'$.

\subsection{Differential Privacy}
The range of an algorithm $\algo$, $\range(\algo)$, is the set of all possible output values of $\algo$.
We denote the Laplace distribution with mean $\mu$ and scale $b$ by $\Lap(\mu, b)$. If $\mu = 0$, we write $\Lap(b)$.
\begin{definition}[$\eps$-differential privacy]
        \label{def:diff-priv}
        A randomized algorithm $\algo$ is \emph{$\eps$-differentially private} if for any two \emph{adjacent} databases
        $B$, $B'$
        and any $S \subseteq \range(\algo)$ we have
                $\Pr[\algo(B) \in S] \leq e^\eps\cdot\Pr[\algo(B') \in S]$.
        The parameter $\eps$ is called the \emph{privacy loss} of $\algo$.
\end{definition}

To apply \cref{def:diff-priv} to graph sequences we now define adjacency for graph sequences.
First, we define edge-adjacency, which is useful if the data to be protected is associated with the edges in the graph sequence.
Then, we define node-adjacency, which provides stronger privacy guarantees.

\begin{definition}[Edge-adjacency]
        \label{def:edge-adjacency}
        Let $\graphseq$, $\graphseq'$ be graph sequences as defined above with associated sequences of updates
        $(\Vdel{t})$, $(\Vins{t})$, $(\Edel{t})$, $(\Eins{t})$ and
        $(\Vdel{t}')$, $(\Vins{t}')$, $(\Edel{t}')$, $(\Eins{t}')$.
        Let $\Vdel{t} = \Vdel{t}'$ and $\Vins{t} = \Vins{t}'$ for all $t$.
        Let the initial node and edge sets for $\graphseq$ and $\graphseq'$ be $V_0 = V_0'$ and $E_0 = E_0'$.
        Assume w.l.o.g.\ that $\Edel{t}' \subseteq \Edel{t}$ and $\Eins{t}' \subseteq \Eins{t}$ for all $t$.
        The graph sequences $\graphseq$ and $\graphseq'$ are \emph{adjacent on $e^*$} if $|\graphseq| = |\graphseq'|$, there exists an edge $e^*$
        and one of the following statements holds:
        \begin{enumerate}
                \item $\Edel{t} = \Edel{t}'\,\forall\,t$ and 
                      $\exists t^*$ such that $\Eins{t^*} \setminus \Eins{t^*}' = \{e^*\}$ and
                      $\Eins{t} = \Eins{t}'\,\forall\,t\neq t^*$;
                \item $\Eins{t} = \Eins{t}'\,\forall\,t$ and 
                      $\exists t^*$ such that $\Edel{t^*} \setminus \Edel{t^*}' = \{e^*\}$ and
                      $\Edel{t} = \Edel{t}'\,\forall\,t\neq t^*$;
\end{enumerate}
\end{definition}

\emph{Remark.}  If $\graphseq$ and $\graphseq'$ are edge-adjacent, then for any index $i$ the graphs at index $i$ in the two sequences are edge-adjacent.

Two edge-adjacent graph sequences differ in either the insertion or the deletion of a single edge $e^*$.
There are several special cases that fit into this definition.
For example, we may have $\Vdel{t} = \Vdel{t}' = \Vins{t} = \Vins{t'} = \emptyset$, so only edge updates would be allowed.
Similarly, we can use the definition in the incremental setting by assuming $\Vdel{t} = \Vdel{t}' = \Edel{t} = \Edel{t}' = \emptyset$.

The definition of node-adjacency is similar to that of edge-adjacency, but poses additional constraints on the edge update sets.

\begin{definition}[Node-adjacency]
        \label{def:node-adjacency}
        Let $\graphseq$, $\graphseq'$ be graph sequences as defined above with associated sequences of updates
        $(\Vdel{t})$, $(\Vins{t})$, $(\Edel{t})$, $(\Eins{t})$ and
        $(\Vdel{t}')$, $(\Vins{t}')$, $(\Edel{t}')$, $(\Eins{t}')$.
        Assume w.l.o.g.\ that $\Vdel{t}' \subseteq \Vdel{t}$ and $\Vins{t}' \subseteq \Vins{t}$ for all $t$.
        The graph sequences $\graphseq$ and $\graphseq'$ are \emph{adjacent on $v^*$} if $|\graphseq| = |\graphseq'|$, there exists a node $v^*$
        and one of the following statements holds:
        \begin{enumerate}
                \item \label{item:node-adjacency-1}
                      $\Vdel{t} = \Vdel{t}'\,\forall\,t$ and 
                      $\exists t^*$ such that $\Vins{t} \setminus \Vins{t}' = \{v^*\}$ and
                      $\Vins{t} = \Vins{t}'\,\forall\,t\neq t^*$;
                \item $\Vins{t} = \Vins{t}'\,\forall\,t$ and 
                      $\exists t^*$ such that $\Vdel{t} \setminus \Vdel{t}' = \{v^*\}$ and
                      $\Vdel{t} = \Vdel{t}'\,\forall\,t\neq t^*$;
\end{enumerate}
        Additionally, all edges in $\Eins{t}$ and $\Edel{t}$ are incident to at least one node in $\Vins{t}$ and $\Vdel{t}$, respectively.
        Lastly, we require that $\Eins{t}'$ ($\Edel{t}'$) is the maximal subset of $\Eins{t}$ ($\Edel{t}$) that does not contain edges incident to $v^*$.
\end{definition}

We define the following notions of differential privacy based on these definitions of adjacency.
\begin{definition}
        An algorithm is \emph{$\eps$-edge-differentially private} (on event-level) if it is $\eps$-differentially private when considering edge-adjacency. An algorithm is \emph{$\eps$-node-differentially private} (on event-level) if it is $\eps$-differentially private when considering node-adjacency.
\end{definition}

When explicitly stated, we consider a stronger version of $\eps$-differential privacy, which provides adjacency on user-level. While adjacency on event-level only allows two graph sequences to differ in a single update, user-level adjacency allows any number of updates to differ as long as they affect the same edge (for edge-adjacency) or node (for node-adjacency), respectively.
\begin{definition}
        \label{def:user-edge-adj}
        Let $\graphseq = (G_1, \ldots), \graphseq' = (G'_1, \ldots)$ be graph sequences. The two sequences are edge-adjacent \emph{on user-level} if there exists an edge $e^*$ and a sequence of graph sequences $\mathcal{S} = (\graphseq_1, \ldots, \graphseq_\ell)$ so that $\graphseq_1 = \graphseq$,$\graphseq_\ell = \graphseq'$ and, for any $i \in [\ell-1]$, $\graphseq_i$ and $\graphseq_{i+1}$ are edge-adjacent on $e^*$. An algorithm is $\eps$-edge-differentially private on user-level if it is $\eps$-differentially private when considering edge-adjacency on user-level.
\end{definition}

\subsection{Counting Mechanisms}
Some of our algorithms for releasing differentially private estimates of functions on graph sequences rely on algorithms for counting in streams.

A \emph{stream} $\sigma = \sigma(1) \sigma(2) \cdots$ is a string of \emph{items} $\sigma(i) \in \{L_1,\dots,L_2\} \subseteq \integers$,
where the $i$-th item is associated with the $i$-th \emph{time step}.
A \emph{binary stream} has $L_1 = 0$ and $L_2 = 1$.
We denote the length of a stream, i.e., the number of time steps in the stream, by $|\sigma|$.
Stream $\sigma$ and $\sigma'$ are adjacent if $|\sigma| = |\sigma'|$ and if there exists one and only one $t^*$ such that
$\sigma(t^*) \neq \sigma'(t^*)$ and $\sigma(t) = \sigma'(t)$ for all $t \neq t^*$.

A \emph{counting mechanism} $\algo(\sigma)$ takes a stream $\sigma$ and outputs a real number for every time step.
For all time steps $t$, $\algo$'s output at time $t$ is independent of all $\sigma(i)$ for $i > t$.
At each time $t$ a counting mechanism should estimate the count $c(t) = \sum_{i=1}^t \sigma(i)$.

Following Chan \etal\ \cite{chan11} we describe our mechanisms in terms of \emph{p-sums}, which are partial sums of the stream over a time interval.
For a p-sum $p$ we denote the beginning and end of the time interval by $\psstart(p)$ and $\psend(p)$, respectively.
With this notation the value of $p$ is $\sum_{t=\psstart(p)}^{\psend(p)} \sigma(t)$.
To preserve privacy we add noise to p-sums and obtain \emph{noisy p-sums}: given a p-sum $p$, a noisy p-sum is $\hat{p} = p+\gamma$,
where $\gamma$ is drawn from a Laplace-distribution.

\subsection{Sparse Vector Technique}

The \emph{sparse vector technique} (SVT) was introduced by Dwork et~al. \cite{DwoCom09} and was subsequently improved \cite{HarMul10,RotInt10}. SVT can be used to save privacy budget whenever a sequence of threshold queries $f_1, \ldots, f_T$ is evaluated on a database, but only $c \ll T$ queries are expected to exceed the threshold. Here, a threshold query asks whether a function $f_i$ evaluates to a value above some threshold $t_i$ on the input database. Using SVT, only queries that are answered positively reduce the privacy budget. We use the following variant of SVT, which is due to Lyu et~al.

\begin{lemma}[\cite{LyuUnd16}]
        \label{thm:svt}
        Let $\mathcal{D}$ be a database, $\epsilon, \rho, c > 0$ and let $(f_1, t_1), \ldots$ be a sequence of mappings $f_i$ from input databases to $\reals$ and thresholds $t_i \in \reals$, which may be generated adaptively one after another so that $\rho \geq \max_i {\GSstatic(f_i)}$. \Cref{alg:svt} is $\epsilon$-private.
\end{lemma}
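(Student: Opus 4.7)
The plan is to follow the standard change-of-variables argument for the sparse vector technique, adapted to the specific noise calibration used in Algorithm~\ref{alg:svt} (which, in the Lyu et~al.\ variant, draws one threshold noise $\nu \sim \Lap(2\rho/\eps)$ at the start and independent per-query noises $\eta_i \sim \Lap(4c\rho/\eps)$, releasing $\top$ when $f_i(D) + \eta_i \geq t_i + \nu$ and $\bot$ otherwise, and halting after $c$ positive answers). Fix two adjacent databases $D, D'$ and an output transcript $a = (a_1, \ldots, a_T) \in \{\top,\bot\}^T$ containing exactly $k \leq c$ entries equal to $\top$. Writing $\Pr[\mathcal{A}(D) = a]$ as an integral over the value of $\nu$ and a product of independent probabilities over the $\eta_i$, the goal is to bound the ratio $\Pr[\mathcal{A}(D) = a]/\Pr[\mathcal{A}(D') = a]$ by $e^\eps$.

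The central trick is to couple the two executions by a shift of variables. First, I would shift the threshold noise by $\Delta_\nu := \rho$, i.e.\ replace $\nu$ by $\nu + \rho$ in the computation on $D'$. Since $|f_i(D) - f_i(D')| \leq \rho$ for every query, this shift is enough to guarantee that every index $i$ for which $\mathcal{A}(D)$ returned $\bot$ also returns $\bot$ on $D'$ \emph{regardless of the per-query noise at that index}, because the threshold is raised by $\rho$ while $f_i$ dropped by at most $\rho$. Hence negative answers are ``free'' in the sense that they only contribute the cost of the threshold shift itself, which, by the density of $\Lap(2\rho/\eps)$, is at most a factor of $e^{\rho/(2\rho/\eps)} = e^{\eps/2}$.

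Next, for each of the $k$ positive-answer indices, I would shift the corresponding noise $\eta_i$ by $\Delta_i := 2\rho$: this compensates both the $\rho$ threshold shift and the at-most-$\rho$ change in $f_i(D) - f_i(D')$, ensuring the inequality $f_i(D') + (\eta_i + 2\rho) \geq t_i + (\nu + \rho)$ holds whenever the original inequality held on $D$. By the Laplace density with scale $4c\rho/\eps$, each such shift multiplies the density ratio by at most $e^{2\rho/(4c\rho/\eps)} = e^{\eps/(2c)}$, and since there are at most $c$ positive answers the total multiplicative cost from these shifts is at most $e^{\eps/2}$. Combining the two contributions yields the desired bound $e^\eps$.

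The main technical obstacle is the bookkeeping around halting and the asymmetry between positive and negative answers: one must verify that the shifted-variable execution on $D'$ produces \emph{exactly} the transcript $a$, including stopping at the same step, and that the change-of-variables is a measure-preserving bijection so that the ratio of densities is what we actually need. Once this is set up cleanly, the final inequality follows from multiplying the threshold-shift factor by the product over the (at most) $c$ positive-answer shift factors and integrating out the negative-answer noises (which contribute a factor of $1$ by the argument above), so the algorithm is $\eps$-differentially private as claimed.
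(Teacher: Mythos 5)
The paper does not prove this lemma itself; it cites it directly from Lyu et~al.\ \cite{LyuUnd16}, and the algorithm reproduced in \cref{alg:svt} is their ``Algorithm~1'' variant with $\eps_1 = \eps_2 = \eps/2$, so $\zeta \sim \Lap(2\rho/\eps)$ and $\nu_i \sim \Lap(4c\rho/\eps)$, exactly the calibration you assume. Your sketch is a correct rendering of the change-of-variables argument that Lyu et~al.\ (and the standard SVT analyses before them) use: couple the two executions by shifting $\zeta$ by $\rho$ and each positive-answer noise $\nu_i$ by $2\rho$, which ensures (a) every $\bot$-index on $D$ remains a $\bot$ on $D'$ pointwise, so the $\bot$-factors in the product form are bounded by $1$, and (b) every $\top$-index on $D$ remains a $\top$ on $D'$, costing $e^{2\rho/(4c\rho/\eps)} = e^{\eps/(2c)}$ per positive answer; multiplying by the $e^{\eps/2}$ factor from the $\zeta$-shift and by at most $c$ positive-answer factors gives $e^\eps$. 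Your closing caveat about halting and the measure-preserving change of variables is precisely the bookkeeping the cited proof carries out, and it goes through since the shifts preserve the transcript step by step (in particular the stopping point). In short, the proposal is correct and takes essentially the route of the cited reference; the paper itself offers no alternative proof to compare against.
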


\SetKwFunction{FnInitSvt}{InitializeSvt}
\SetKwFunction{FnSvt}{ProcessSvtQuery}
\begin{algorithm}
        \Fn{\FnInitSvt{$\mathcal{D}, \rho, \epsilon, c$}}{
                $\epsilon_1 \gets \epsilon / 2, \zeta \gets \Lap(\rho / \epsilon_1)$, $\epsilon_2 \gets \epsilon - \epsilon_1, count \gets 0$ \;
        }
	\Fn{\FnSvt{$f_i, t_i$}}{
                $\nu_i \gets \Lap(2 c \rho / \epsilon_2)$ \;
                \If{$count \geq c$}{
                        \Return{abort} \;
                }
                \If{$f_i(\mathcal D) + \nu_i \geq t_i + \zeta$}{
                        $count \gets count + 1$, \Return $\top$ \;       
                }
                \Else{
                        \Return $\bot$ \;
                }
	}
	\caption{\label{alg:svt} SVT algorithm \cite{LyuUnd16}}
\end{algorithm}

\section{Mechanisms Based on Continuous Global Sensitivity}
\label{sec:global-sens-mech}
Some of our mechanisms for privately estimating graph functions are based on mechanisms for counting in streams.
In both settings, we compute the sum of a sequence of numbers and we will show that the mechanisms for counting can be transferred to the graph setting.
However, there are differences in the analysis.
In counting, the input streams differ at only one time step. This allows us to bound the difference in the true value between adjacent inputs and leads to low error.
In the graph setting, the sequence of numbers can vary at many time steps. Here however, we use properties of the counting mechanisms to show that the total difference for this sequence can still be bounded, which results in the same error as in the counting setting.

We first generalize the counting mechanisms by Chan \etal\ \cite{chan11} to streams of integers with bounded absolute value,
and then transfer them to estimating graph functions.

\subsection{Non-Binary Counting}
We generalize the counting mechanisms of Chan \etal\ \cite{chan11} to streams of numbers in $\{-L,\dots,L\}$, for some constant $L$.
We view these algorithms as releasing noisy p-sums from which the count can be estimated.
The generic algorithm is outlined in \cref{alg:generic-counting}
\opt{full,conf}{on page~\pageref{alg:generic-counting}}\opt{confpre}{in \cref{sec:fig-appendix}}.

The algorithm releases a vector of noisy p-sums over $T$ time steps, such that at every time step the noisy p-sums needed to estimate the count up to this time are available.
Each of the noisy p-sums is computed exactly once.
See the proof of \cref{cor:graph-bin-mech} for an example on how to use p-sums.

In order to achieve the desired privacy loss the mechanisms need to meet the following requirements.
Let $\algo$ be a counting mechanism. We define $\range(\algo) = \reals^k$, where $k$ is the total number of p-sums used by $\algo$
and every item of the vector output by $\algo$ is a p-sum.
We assume that the time intervals represented by the p-sums in the output of $\algo$ are deterministic and only depend on the length $T$ of the input stream. 
For example, consider any two streams $\sigma$ and $\sigma'$ of length $T$.
The $\ell$-th element of $\algo(\sigma)$ and $\algo(\sigma')$ will be p-sums of the same time interval $[\psstart(\ell),\psend(\ell)]$.
We further assume that the p-sums are computed independently from each other in the following way: $\algo$ computes the true p-sum and then adds noise from $\Lap(z\cdot\eps^{-1})$, where $z$ is a sensitivity parameter.\newcommand{\CountingAlgorithm}[1][htbp]{
\begin{algorithm}[#1]
        \caption{Generic counting mechanism}
        \label{alg:generic-counting}

        \textbf{Input:} Privacy parameter $\eps$, stream $\sigma$, $|\sigma| = T$, of items from the range $\{L_1,\dots,L_2\}$, where $L_1 < L_2$ are publicly known integers \\
        \textbf{Output:} Vector of noisy p-sums $a \in \reals^k$, released over $T$ time steps \\
        \textbf{Initialization:} Determine which p-sums to compute based on $T$ \\
        At each time step $t \in \{1,\dots,T\}$: \;
        \hspace{1cm} Compute new p-sums $p_i,\dots,p_j$ for $t$ \;
        \hspace{1cm} \textbf{For} $\ell = i,\dots,j$: \;
        \hspace{2cm}       $\hat{p}_\ell = p_\ell + \gamma_\ell, \quad \gamma_\ell \sim \Lap((L_2-L_1)\eps^{-1})$ \;
        \hspace{1cm} Release new noisy p-sums $\hat{p}_i,\dots,\hat{p}_j$ \;
\end{algorithm}
}\opt{full}{\CountingAlgorithm}\opt{conf}{\CountingAlgorithm}\newcommand{\errorcorollary}{
To analyze the error of the algorithms we need the following Corollary from Chan \etal\ \cite{chan11}:
\begin{corollary}[Corollary 2.9 from~\cite{chan11}]
        \label{cor:measure-conc}
        Suppose $\gamma_i$ are independent random variables, where each $\gamma_i$ has Laplace distribution $\Lap(b_i)$.
        Let $Y := \sum_i\gamma_i$ and $b_M = \max_i b_i$. Let $\nu \geq \sqrt{\sum_i b_i^2}$.
        Suppose $0 < \delta < 1$ and $\nu > \max\left\{ \sqrt{\sum_i b_i^2}, b_M\sqrt{\ln \frac{2}{\delta}} \right\}$.
        Then, $\Pr\left[|Y| > \nu\sqrt{8\ln\frac{2}{\delta}}\right] \leq \delta$.

        To simplify our presentation and improve readability, we choose $\nu := \sqrt{\sum_i b_i^2}\cdot\sqrt{\ln\frac{2}{\delta}}$
        and use the following slightly weaker result: with probability at least $1-\delta$, the quantity $|Y|$ is at most $O(\sqrt{\sum_i b_i^2} \log \frac{1}{\delta})$.
\end{corollary}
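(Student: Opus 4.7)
I would derive the tail bound via a Chernoff argument using the Laplace moment generating function. For $\gamma \sim \Lap(b)$ one has $\mathbb{E}[e^{h\gamma}] = 1/(1-h^2 b^2)$ whenever $|h|<1/b$, so by independence of the $\gamma_i$, $\mathbb{E}[e^{hY}] = \prod_i 1/(1-h^2 b_i^2)$ for every $h\in(0,1/b_M)$. Markov's inequality applied to $e^{hY}$ and $e^{-hY}$, combined with the symmetry of the Laplace distribution, yields
\begin{equation*}
\Pr[|Y| > t] \;\leq\; 2\, e^{-ht}\prod_i \frac{1}{1 - h^2 b_i^2}.
\end{equation*}
Restricting further to $h\leq 1/(b_M\sqrt{2})$ ensures $h^2 b_i^2 \leq 1/2$ for every $i$, so applying the elementary inequality $-\ln(1-x)\leq 2x$ on $[0,1/2]$ gives
\begin{equation*}
\Pr[|Y| > t] \;\leq\; 2\exp\!\bigl(-ht + 2h^2 S^2\bigr), \qquad S^2 := \sum_i b_i^2.
\end{equation*}

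I would then split the optimization over $h$ into two regimes. The unconstrained minimizer of the exponent is $h^\ast = t/(4S^2)$. If $h^\ast \leq 1/(b_M\sqrt{2})$, substituting $h=h^\ast$ produces the sub-Gaussian tail $2\exp(-t^2/(8S^2))$; plugging in $t = \nu\sqrt{8\ln(2/\delta)}$ and using $\nu\geq S$ collapses this immediately to $\delta$. In the complementary regime $h^\ast > 1/(b_M\sqrt{2})$, I would set $h$ to the boundary $1/(b_M\sqrt{2})$ and exploit the fact that $h^\ast>1/(b_M\sqrt{2})$ forces $S^2/b_M^2 < t/(2\sqrt{2}\,b_M)$. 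The resulting sub-exponential tail $2\exp(-t/(2\sqrt{2}\,b_M))$ falls below $\delta$ exactly when $\nu \geq b_M\sqrt{\ln(2/\delta)}$, which is the second hypothesis.

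The main obstacle is that a single Gaussian-type optimization would require $S^2$ to dominate $b_M^2\ln(2/\delta)$, a condition that neither hypothesis nor their conjunction delivers. The two hypotheses on $\nu$ are therefore best viewed as separate: $\nu\geq S$ closes the Gaussian regime, and $\nu\geq b_M\sqrt{\ln(2/\delta)}$ closes the sub-exponential regime where the Chernoff optimum is clipped to the edge of the Laplace MGF's domain. Beyond this case split, the remaining work is pure constant-tracking: the factor $\sqrt{8}$ inside the logarithm in the statement arises from the slack in $-\ln(1-x)\leq 2x$ and could be tightened (via $-\ln(1-x)\leq x+x^2$) without changing the shape of the bound. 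The simplified form advertised at the end of the statement then follows by specializing $\nu$ to $\sqrt{S^2\ln(2/\delta)}$ and absorbing constants into the $O(\cdot)$.
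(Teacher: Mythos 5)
The paper does not prove this corollary itself; it is imported verbatim as Corollary~2.9 from Chan et~al.~\cite{chan11}. Your reconstruction is correct and is essentially the same Chernoff-method argument that Chan et~al.\ use: the Laplace moment generating function, independence to factorize, the elementary bound $-\ln(1-x)\leq 2x$ on $[0,1/2]$ to pass to a sub-Gaussian-looking exponent, and a two-regime optimization of the Chernoff parameter $h$ with the clipping threshold $h\leq 1/(b_M\sqrt{2})$ dictated by the domain of the Laplace MGF. Your case split is exactly the right one, and your arithmetic checks out: in the Gaussian regime the exponent is $-t^2/(8S^2)$, and plugging $t=\nu\sqrt{8\ln(2/\delta)}$ with $\nu\geq S$ gives $\delta$; in the clipped regime the exponent is $\leq -t/(2\sqrt{2}b_M)$ and the hypothesis $\nu\geq b_M\sqrt{\ln(2/\delta)}$ again gives $\delta$. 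Your diagnosis of why both hypotheses are needed is also accurate, and your final specialization $\nu=\sqrt{S^2\ln(2/\delta)}$ satisfies both hypotheses (the second because $S\geq b_M$ always, the first for $\delta\leq 2/e$) and yields the advertised $O(S\log(1/\delta))$ bound. In short: correct proof, same route as the source.
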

}\opt{full}{\errorcorollary}The next lemma is stated informally in~\cite{chan11}.
\begin{restatable}[Observation 1 from~\cite{chan11}]{lemma}{obsonelemma}
        \label{lem:observation1}
        Let $\algo$ be the counting mechanism given in \cref{alg:generic-counting} 
        that releases $k$ noisy p-sums,
        such that the count at any time step can be computed as the sum of at most $y$ p-sums and
        every item is part of at most $x$ p-sums.
        Assume that the algorithm receives a binary input stream, i.e.~with $L_1=0$ and $L_2=1$.
        Then, $\algo$ is $(x\cdot\eps)$-differentially private,
        and the error is $O(\eps^{-1}\sqrt{y}\log\frac{1}{\delta})$ with probability at least $1-\delta$ at each time step.
\end{restatable}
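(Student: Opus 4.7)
The claim has two parts, privacy and accuracy, and both follow from viewing the output vector of $\algo$ as a list of $k$ independently noised p-sums with Laplace noise of scale $1/\eps$.

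\emph{Privacy.} The plan is to fix two adjacent binary streams $\sigma,\sigma'$ differing only at a single position $t^*$, and look at which coordinates of the output vector can be affected. Because the p-sum time intervals $[\psstart(\ell),\psend(\ell)]$ are deterministic in $T$, the coordinates indexed by $\ell$ with $t^*\notin[\psstart(\ell),\psend(\ell)]$ are identically distributed under $\sigma$ and $\sigma'$; only the at most $x$ coordinates whose interval contains $t^*$ can differ. For each such coordinate, the true p-sums on $\sigma$ and $\sigma'$ differ by $|\sigma(t^*)-\sigma'(t^*)|\le 1=L_2-L_1$, so the noise $\Lap((L_2-L_1)/\eps)$ makes the release of that single coordinate $\eps$-differentially private by the standard Laplace mechanism. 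Since the noises $\gamma_\ell$ are independent across $\ell$, basic composition over the at most $x$ affected coordinates yields $(x\eps)$-differential privacy for the entire output vector, and post-processing (the count at any time is a fixed linear combination of released p-sums) preserves this.

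\emph{Accuracy.} For the utility bound, fix a time $t$ and write the estimated count as the sum of at most $y$ noisy p-sums; the estimation error is then $Y=\sum_{\ell\in S_t}\gamma_\ell$ with $|S_t|\le y$ and each $\gamma_\ell\sim\Lap(1/\eps)$ independent. I would then invoke the concentration bound for sums of independent Laplace variables (\cref{cor:measure-conc}) with $b_\ell=1/\eps$, which gives $\sqrt{\sum_\ell b_\ell^2}\le\sqrt{y}/\eps$ and hence $|Y|=O(\eps^{-1}\sqrt{y}\log(1/\delta))$ with probability at least $1-\delta$, as claimed.

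\emph{Where the real work is.} Both steps are short and essentially bookkeeping once the right viewpoint is adopted. The only step that needs care is the privacy argument: one must be explicit that the p-sum \emph{schedule} (which intervals are summed) depends only on $T$ and not on the stream contents, so that the coordinates unaffected by $t^*$ are genuinely identically distributed under $\sigma$ and $\sigma'$. Without this the composition bound would have to range over all $k$ coordinates rather than just the $x$ affected ones, and the tight $x\eps$ loss would be lost. All remaining details (the exact form of basic composition, post-processing invariance, and the substitution into \cref{cor:measure-conc}) are routine.
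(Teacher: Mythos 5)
Your proof is correct and follows essentially the same approach as the paper's: both restrict attention to the at most $x$ p-sums whose time interval contains the differing position $t^*$, observe that the remaining coordinates are identically distributed, bound each affected coordinate's contribution by $e^{\eps}$ via the Laplace mechanism with sensitivity $L_2 - L_1 = 1$, and conclude via the sum-of-Laplace concentration bound for accuracy. The only difference is presentational — the paper inlines the composition argument as an explicit joint-density ratio calculation $\prod_{i\in I} p_i(s_i)/p_i'(s_i) \le e^{x\eps}$ and then integrates, whereas you cite basic composition and post-processing as black boxes; these are the same argument.
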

\newcommand{\obsoneproof}{
\begin{proof}
        The error follows from \cref{cor:measure-conc} since the noise at each time step is the sum of at most $y$ independent random variables with distribution $\Lap(1/\eps)$.

        We show that the vector of all noisy p-sums output by $\algo$ preserves $(x\cdot\eps)$-differential privacy.
        Then, since the count is estimated by summing over $(x\cdot\eps)$-differentially private noisy p-sums the count itself is $(x\cdot\eps)$-differentially private.
        As described above, $\range(\algo) = \reals^k$, where $k$ is the number of p-sums that the algorithm computes in total.

        Let $\sigma$, $\sigma'$ be two adjacent streams.
        We consider the joint distribution of the random variables $Z_1,\dots,Z_k$
        that correspond to the noisy p-sums output by $\algo$.
        Let $p_i(z_i)$, $p_i'(z_i)$ for $z_i \in \range(Z_i)$ be the probability density function of $Z_i$ when the input to $\algo$ is $\sigma$ and $\sigma'$, respectively.
        Note that each of these densities is the probability density function of a Laplace distribution.
        Let $p(z_1,\dots,z_k)$, $p'(z_1,\dots,z_k)$ be the joint probability density when the input to $\algo$ is $\sigma$ and $\sigma'$, respectively.
        Note that the random variables $Z_i$ are continuous, since the noise added to the p-sums is drawn from a continuous distribution over $\reals$.
        Since the $Z_i$ are independent, we have
        \begin{equation*}
                p(z_1,\dots,z_k)  = \prod_{i=1}^k p_i(z_i)
        \end{equation*}
        and
        \begin{equation*}
                p'(z_1,\dots,z_k) = \prod_{i=1}^k p_i'(z_i).
        \end{equation*}

        We now show that these joint densities differ by at most an $\exp(x\cdot\eps)$-factor for all possible outputs of $\algo$.
        Let $s = (s_1,\dots,s_k)\TT \in \range(\algo)$.
        Note that $\algo$ outputs a vector of noisy p-sums from which the count can be computed, so each item in $s$ is a noisy p-sum.
        Let $c = (c_1,\dots,c_k)\TT$ be the noiseless p-sums of stream $\sigma$ corresponding to the noisy p-sums computed by $\algo$.
        Let $I$ be the set of indices to (noisy) p-sums involving time $t'$.
        By assumption, $|I| \leq x$, since any item participates in at most $x$ (noisy) p-sums.
        \begin{align}
                \frac{p(s)}{p'(s)} &= \prod_{i=1}^k \frac{p_i(s_i)}{p_i'(s_i)} 
                                   = \prod_{i \in I} \frac{p_i(s_i)}{p_i'(s_i)} \notag\\
                                   &= \prod_{i \in I} \frac{\exp(-\eps |c_i - s_i|)}{\exp(-\eps |c_i \pm 1 - s_i|)}  \notag\\
                                   &= \prod_{i \in I} \exp(\eps (|c_i \pm 1 - s_i| - |c_i - s_i|))  \notag\\
                                   &\leq \prod_{i \in I} \exp(\eps |c_i \pm 1 - s_i - c_i + s_i|)  \notag\\
                                   &= \exp(\eps)^{|I|} \leq \exp(x\cdot\eps). \label{eq:obs1-density-proof}
        \end{align}
        The first equality follows from the assumption that the noisy p-sums are independent, as discussed above.
        Since all noiseless p-sums that do not involve the item at $t'$ are equal,
        the densities $p_j(s_j)$ and $p_j'(s_j)$ cancel out for $j \notin I$,
        so we can take the product over just the noisy p-sums corresponding to indices in $I$.
        Plugging in the density function of the Laplace distribution and applying the reverse triangle inequality yields the first inequality.
        The second inequality follows from the assumption $|I| \leq x$.

        Now, for any $S \subseteq \range(\algo)$ we have
        \begin{equation*}
                \Pr[\algo(\sigma) \in S] = \int_{S} p(s) ds \leq \int_{S} e^{x\cdot\eps} p'(s)ds = e^{x\cdot\eps} \Pr[A(\sigma') \in S],
        \end{equation*}
        where the inequality follows from \eqref{eq:obs1-density-proof}.
        Thus, $\algo$ is indeed $(x\cdot\eps)$-differentially private.
\end{proof}
} \opt{full}{\obsoneproof}

To extend the counting mechanisms by Chan \etal{} to non-binary streams of values in $\{-L,\dots,L\}$, we only need to account for the increased sensitivity in the scale of the Laplace distribution. By \cref{lem:extended-obs1}, we can use the mechanisms of Chan \etal~\cite{chan11} to compute the sum of a stream of numbers in $\{-L,\dots,L\}$, but gain a factor $2L$ in the error. \opt{confpre}{The proofs appear in \cref{sec:counting-proofs}.}

\begin{restatable}[Extension of \cref{lem:observation1}]{lemma}{obsoneextlemma}
        \label{lem:extended-obs1}
        Let $\algo$ be the counting mechanism given in \cref{alg:generic-counting} 
        that releases $k$ noisy p-sums,
        such that the count at any time step can be computed as the sum of at most $y$ p-sums and
        every item is part of at most $x$ p-sums.
        Assume that the algorithm receives an integer input stream with the guarantee that all input values lie between $-L$ and $L$, where $L$ is a publicly known integer, i.e.~, with $L_1 = -L$ and $L_2 = L$.
        If the input stream complies with the guarantee, then $\algo$ is $(x\cdot\eps)$-differentially private,
        and the error is $O(L\eps^{-1}\sqrt{y}\log\frac{1}{\delta})$ with probability at least $1-\delta$ at each time step.
\end{restatable}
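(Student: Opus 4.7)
The argument is a direct generalization of the proof of \cref{lem:observation1} and proceeds along the same two axes: privacy and accuracy. The only essential change is that an item swap can now move any affected p-sum by as much as $2L$ (e.g.\ from $-L$ to $+L$) instead of just $1$, and this is exactly compensated by scaling the Laplace noise up by the same factor $2L$.

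For privacy, I would again view $\algo$ as outputting a vector in $\reals^k$ whose components are independent noisy p-sums $\hat p_i = p_i + \gamma_i$ with $\gamma_i \sim \Lap(2L/\eps)$. Fix two adjacent input streams $\sigma, \sigma'$ differing only at one index $t^*$, and let $I$ be the set of indices of p-sums whose time interval contains $t^*$; by hypothesis $|I| \le x$. For $i \notin I$ the corresponding p-sums agree, so the marginal densities at any output point cancel. For $i \in I$, the two (noiseless) p-sums $c_i, c_i'$ differ by $|c_i - c_i'| = |\sigma(t^*) - \sigma'(t^*)| \le 2L$. Plugging in the density of $\Lap(2L/\eps)$ and applying the reverse triangle inequality gives, exactly as in \eqref{eq:obs1-density-proof},
\begin{equation*}
\frac{p(s)}{p'(s)} = \prod_{i \in I} \frac{\exp\bigl(-\tfrac{\eps}{2L}|c_i - s_i|\bigr)}{\exp\bigl(-\tfrac{\eps}{2L}|c_i' - s_i|\bigr)} \le \prod_{i \in I} \exp\!\left(\tfrac{\eps}{2L}\,|c_i - c_i'|\right) \le \exp(\eps)^{|I|} \le \exp(x\eps),
\end{equation*}
so integrating over any $S \subseteq \range(\algo)$ yields $(x\eps)$-differential privacy of the released p-sum vector. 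Since the count at any time step is computed as a deterministic function of these p-sums, post-processing preserves the same bound.

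For accuracy, the estimate at any time $t$ is a sum of at most $y$ noisy p-sums, so the error is a sum of at most $y$ independent $\Lap(2L/\eps)$ random variables. Applying \cref{cor:measure-conc} with $b_i = 2L/\eps$ gives, with probability at least $1-\delta$,
\begin{equation*}
\Bigl|\sum_i \gamma_i\Bigr| = O\!\left(\sqrt{y \cdot (2L/\eps)^2}\,\log\tfrac{1}{\delta}\right) = O\!\left(L\,\eps^{-1}\sqrt{y}\,\log\tfrac{1}{\delta}\right),
\end{equation*}
which is the claimed bound.

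The main (and really only) thing to be careful about is that the bound $|c_i - c_i'| \le 2L$ used in the privacy calculation is tight: a single item in $\{-L,\dots,L\}$ can swing by $2L$, which forces the $\Lap(2L/\eps)$ choice of scale rather than the more natural-looking $\Lap(L/\eps)$. Everything else (independence of the $\gamma_i$'s, that items outside $I$ contribute trivially to the density ratio, and the application of \cref{cor:measure-conc}) transfers verbatim from the proof of \cref{lem:observation1}.
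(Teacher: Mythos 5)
Your proposal is correct and follows essentially the same route as the paper's proof: it bounds the ratio of joint densities over the (at most $x$) p-sums containing the differing time step via the $\Lap(2L/\eps)$ density and the reverse triangle inequality, noting that a single item can change by at most $2L$, and then invokes \cref{cor:measure-conc} for the error bound. No substantive gaps or differences.
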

\newcommand{\obsoneextproof}{
\begin{proof}
        The error follows from \cref{cor:measure-conc}.

        We proceed as in the proof of \cref{lem:observation1}.
        Let $\sigma$, $\sigma'$ be two adjacent streams that differ only at time $t'$,
        i.e., $\sigma(t) = \sigma'(t)$ for all $t \neq t'$ and $\sigma(t') \neq \sigma'(t')$.
        We denote by $\delta' = \sigma'(t') - \sigma(t')$ the difference in the items at time $t'$.
        Note that $|\delta'| \leq 2L$, since $\sigma(t),\,\sigma'(t) \in \{-L,\dots,L\}$ for all $t$.
        Let $c = (c_1,\dots,c_k)\TT$ be the noiseless p-sums of stream $\sigma$ corresponding to the noisy p-sums computed by $\algo$.
        Let $I$ be the set of indices to (noisy) p-sums involving time $t'$.
        By assumption, $|I| \leq x$, since any item participates in at most $x$ (noisy) p-sums.

        As in the proof of \cref{lem:observation1} we consider the joint probability density of the output of $\algo$ when the input is $\sigma$ and $\sigma'$.
        Note that the noisy p-sums output by $\algo$ are independent continuous random variables with range $\reals$.
        We have
        \begin{equation*}
                p(z) = \prod_{i=1}^k p_i(z_i)
        \end{equation*}
        and
        \begin{equation*}
                p'(z) = \prod_{i=1}^k p_i'(z_i),
        \end{equation*}
        where the $p_i$ and $p_i'$ are the probability density function of the individual noisy p-sums with input stream $\sigma$ and $\sigma'$, respectively.

        Let $s = (s_1,\dots,s_k)\TT \in \range(\algo)$.
        \begin{align*}
                \frac{p(s)}{p'(s)}
                                   &= \prod_{i=1}^k \frac{p_i(s_i)}{p_i'(s_i)} \\
                                   &= \prod_{i \in I} \frac{p_i(s_i)}{p_i'(s_i)} \\
                                   &= \prod_{i \in I} \frac{\exp\left(-\frac{\eps |c_i - s_i|}{2L}\right)}
                                                           {\exp\left(-\frac{\eps |c_i + \delta' - s_i|}{2L}\right)} \\
                                   &= \prod_{i \in I} \exp\left(\frac{\eps (|c_i + \delta' - s_i| - |c_i - s_i|)}{2L}\right) \\
                                   &\leq \prod_{i \in I} \exp\left(\frac{\eps |c_i + \delta' - s_i - c_i + s_i|}{2L}\right) \\
                                   &= \exp\left(\frac{\eps |\delta'|}{2L}\right)^{|I|} \leq \exp(x\cdot\eps).
        \end{align*}
        Thus, $\algo$ is $x\cdot\eps$-differentially private, since
        \begin{equation*}
                \Pr[\algo(\sigma) \in S] = \int_{S} p(s) ds \leq \int_{S} e^{x\cdot\eps} p'(s)ds = e^{x\cdot\eps} \Pr[A(\sigma') \in S]
        \end{equation*}
        for any $S \subseteq \range(\algo)$.
\end{proof}
} \opt{full}{\obsoneextproof}

\subsection{Graph Functions via Counting Mechanisms}
\begin{algorithm}
        \caption{Generic graph sequence mechanism}
        \label{alg:generic-graph}

        \KwIn{privacy loss $\eps$, publicly known continual global sensitivity $\Gamma$, graph sequence $\graphseq = (G_1,\dots,G_T)$}
        \KwOut{vector of noisy p-sums $a \in \reals^k$, released over $T$ time steps}
        \textbf{Initialization:} Determine which p-sums to compute based on $T$ \;
        At each time step $t \in \{1,\dots,T\}$: \;
        \hspace{1cm} Compute $f(t)$ and $\Delta f(t) = f(t) - f(t-1)$, $f(0) := 0$ \;
        \hspace{1cm} Compute new p-sums $p_i,\dots,p_j$ for the sequence $\Delta f$ \;
        \hspace{1cm} \textbf{For} $\ell = i,\dots,j$: \;
        \hspace{2cm}       $\hat{p}_\ell = p_\ell + \gamma_\ell, \quad \gamma_\ell \sim \Lap(\Gamma\eps^{-1})$ \;
        \hspace{1cm} Release new noisy p-sums $\hat{p}_i,\dots,\hat{p}_j$
\end{algorithm}
We adapt the counting mechanisms to continually release graph functions by following the approach by Song \etal\ \cite{song18}.
\cref{alg:generic-graph} outlines the generic algorithm.
It is similar to \cref{alg:generic-counting}, with the difference that the stream of numbers to be summed is computed from a graph sequence $\graphseq$.
The algorithm is independent of the notion of adjacency of graph sequences, however the additive error is linear in the continuous global sensitivity of the difference sequence $\Delta f$.

In counting binary streams we considered adjacent inputs that differ at exactly one time step.
In the graph setting however, the stream of numbers that we sum, i.e., the difference sequence $\Delta f$, can differ in multiple time steps between two adjacent graph sequences.
We illustrate this with a simple example.
Let $f$ be the function that counts the number of edges in a graph and consider two node-adjacent incremental graph sequences $\graphseq$, $\graphseq'$.
$\graphseq$ contains an additional node $v^*$, that is not present in $\graphseq'$.
Whenever a neighbor is added to $v^*$ in $\graphseq$, the number of edges in $\graphseq$ increases by more than the number of edges in $\graphseq'$.
Thus, every time a neighbor to $v^*$ is inserted, the difference sequence of $f$ will differ between $\graphseq$ and $\graphseq'$.

To generalize this, consider two adjacent graph sequences $\graphseq$, $\graphseq'$ that differ in an update at time $t'$.
Let $\Delta f_\graphseq$ and $\Delta f_{\graphseq'}$ be the difference sequences used to compute the graph function $f$ on $\graphseq$ and $\graphseq'$.
As discussed above we may have $\Delta f_{\graphseq}(t) \neq \Delta f_{\graphseq'}(t)$ for all $t \geq t'$.
Thus, more than $x$ p-sums can be different, which complicates the proof of the privacy loss.
However, we observe that the set $P$ of p-sums with differing values can be partitioned into $x$ sets $P_1,\dots,P_x$, where the p-sums in each $P_i$ cover disjoint time intervals.
By using a bound on the continuous global sensitivity of the difference sequence $\Delta f$, this will lead to a privacy loss of $x\cdot\eps$.
\opt{full}{We now prove that the results on privacy loss and error transfer from the counting mechanisms to the graph mechanisms.}
\opt{conf}{The following lemma formalizes our result.}

\begin{restatable}{lemma}{graphobsonelemma}
        \label{lem:graph-obs1}
        Let $f$ be a function whose difference sequence has continuous global sensitivity $\Gamma$ and $\Gamma$ is publicly known.
        Let $0 < \delta < 1$ and $\eps > 0$.
        Let $\algo$ be a mechanism to estimate $f$ as in \cref{alg:generic-graph} that releases $k$ noisy p-sums
        and satisfies the following conditions:
        \begin{enumerate}
                \item \label{enum:condition-y} at any time step the value of a graph function $f$ can be estimated as the sum of at most $y$ noisy p-sums,
\item \label{enum:condition-independent} $\algo$ adds independent noise from $\Lap(\Gamma/\eps)$ to every p-sum,
                \item \label{enum:condition-disjoint} the set $P$ of p-sums computed by the algorithm can be partitioned
                        into at most $x$ subsets $P_1,\dots,P_x$, such that in each partition $P_x$ all p-sums cover disjoint time intervals.
                        That is, for all $P_i \in \{P_1,\dots,P_x\}$ and all $j,k \in P_i$, $j \neq k$, it holds that
                                (1) $\psstart(j) \neq \psstart(k)$ and
                                (2) $\psstart(j) < \psstart(k) \implies \psend(j) < \psstart(k)$.
        \end{enumerate}
        Then, $\algo$ is $(x\cdot\eps)$-differentially private,
        and the error is $O(\Gamma\eps^{-1}\sqrt{y}\log\frac{1}{\delta})$ with probability at least $1-\delta$ at each time step.
\end{restatable}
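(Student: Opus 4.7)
The plan is to adapt the proof of \cref{lem:extended-obs1} to the graph setting. The error bound is immediate: by condition~\ref{enum:condition-y} the estimate of $f(t)$ is a sum of at most $y$ independent Laplace random variables each with scale $\Gamma/\eps$, so \cref{cor:measure-conc} yields the claimed $O(\Gamma\eps^{-1}\sqrt{y}\log(1/\delta))$ bound with probability at least $1-\delta$. The real work is in the privacy analysis, where (unlike the binary counting case of \cref{lem:observation1}) the two difference sequences $\Delta f_\graphseq$ and $\Delta f_{\graphseq'}$ associated with adjacent graph sequences $\graphseq$, $\graphseq'$ may disagree at many time steps, not just one.

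For the privacy analysis, let $c_i$ and $c_i'$ denote the noiseless values of the $i$-th p-sum when the input is $\graphseq$ and $\graphseq'$, respectively. By condition~\ref{enum:condition-independent} the noise on distinct p-sums is independent and Laplace with scale $\Gamma/\eps$, so the joint density factorizes and, at any $s = (s_1,\dots,s_k)\TT \in \range(\algo)$, the reverse triangle inequality gives
\begin{equation*}
        \frac{p(s)}{p'(s)} = \prod_{i=1}^k \frac{\exp(-\eps|c_i - s_i|/\Gamma)}{\exp(-\eps|c_i' - s_i|/\Gamma)} \leq \exp\!\left(\frac{\eps}{\Gamma}\sum_{i=1}^k |c_i - c_i'|\right).
\end{equation*}
It therefore suffices to show $\sum_{i=1}^k |c_i - c_i'| \leq x \cdot \Gamma$; integrating the density inequality over any measurable $S \subseteq \range(\algo)$ then yields the $(x\cdot\eps)$-differential privacy guarantee in the usual way.

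The bound on $\sum_i |c_i - c_i'|$ is where condition~\ref{enum:condition-disjoint} is used. Partition the set of all p-sums into the $x$ subsets $P_1,\dots,P_x$ whose time intervals are pairwise disjoint. Fix one such subset $P_j$; because its intervals are disjoint, every time step $t$ lies in at most one p-sum of $P_j$, so
\begin{equation*}
        \sum_{i \in P_j} |c_i - c_i'| \;\leq\; \sum_{i \in P_j} \sum_{t = \psstart(i)}^{\psend(i)} |\Delta f_\graphseq(t) - \Delta f_{\graphseq'}(t)| \;\leq\; \sum_{t} |\Delta f_\graphseq(t) - \Delta f_{\graphseq'}(t)| \;\leq\; \Gamma,
\end{equation*}
where the last inequality is the definition of continuous global sensitivity of the difference sequence. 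Summing over the $x$ partitions gives $\sum_i |c_i - c_i'| \leq x\Gamma$, as required.

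The main obstacle is exactly the step just described: the naive bound $\sum_i |c_i - c_i'| \leq k\Gamma$ would lose a factor of $k$ in the privacy loss, since in the graph setting a single differing update at time $t^*$ can perturb every p-sum whose interval contains $t^*$ and, through the difference sequence, every p-sum whose interval lies after $t^*$. Condition~\ref{enum:condition-disjoint} rescues the analysis by ensuring that each time step is ``counted'' at most $x$ times across all p-sums, so that the $\ell_1$-bound $\Gamma$ on $\Delta f_\graphseq - \Delta f_{\graphseq'}$ translates directly into the target bound $x\Gamma$ on the total p-sum discrepancy.
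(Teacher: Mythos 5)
Your proof is correct and follows essentially the same route as the paper's: both bound the density ratio by $\exp((\eps/\Gamma)\sum_i |c_i - c_i'|)$, then use condition~\ref{enum:condition-disjoint} to partition the p-sums into $x$ groups with pairwise disjoint intervals, applying the $\ell_1$-bound $\Gamma$ on $\Delta f_\graphseq - \Delta f_{\graphseq'}$ within each group to obtain $\sum_i |c_i - c_i'| \leq x\Gamma$. The only cosmetic difference is that the paper first restricts the product to the set $I$ of p-sums where $c_i \neq c_i'$ before partitioning, while you partition all $k$ p-sums directly; this changes nothing since the zero terms are harmless.
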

\newcommand{\graphobsoneproof}{
\begin{proof}
        As before the error follows from \cref{cor:measure-conc} and condition \ref{enum:condition-y}.

        Let $\graphseq = (G_1,\dots,G_T)$, $\graphseq' = (G_1',\dots,G_T')$ be two adjacent graph sequences and
        let $f(t) = f(G_t)$, $f'(t) = f(G_t')$.
        By definition of adjacent graph sequences (see \cref{def:edge-adjacency,def:node-adjacency}) we have $f(0) = f'(0) = f((V_0,E_0))$.
        At every time step the algorithm computes $\Delta f_{\graphseq}(t) = f(t) - f(t-1)$ and $\Delta f_{\graphseq'}(t) = f'(t) - f'(t-1)$.

        Note that the noisy p-sums computed by $\algo$ are independent continuous random variables with joint distribution
        \begin{equation*}
                p(z) = \prod_{i=1}^k p_i(z_i)
        \end{equation*}
        and
        \begin{equation*}
                p'(z) = \prod_{i=1}^k p_i'(z_i)
        \end{equation*}
        when the input stream is $\graphseq$ and $\graphseq'$, respectively. As in the proofs of \cref{lem:observation1,lem:extended-obs1}, the $p_i$ and $p_i'$ are the marginal probability density functions.
        The random variables are continuous, since the noise added to the p-sums is drawn from a continuous distribution over $\reals$.
        
        Furthermore, let $c = (c_1,\dots,c_k)\TT$ and $c' = (c_1',\dots,c_k')\TT$ be the noiseless p-sums corresponding to the noisy p-sums output by $\algo$ on inputs $\graphseq$ and $\graphseq'$, respectively.
        For each time step $t \in \{1,\dots,T\}$ we define $\delta(t) = \Delta f_{\graphseq'}(t) - \Delta f_{\graphseq}(t)$.
        Note that $\sum_{t=1}^T |\delta(t)| \leq \Gamma$, by definition of the continuous global sensitivity $\Gamma$ of the difference sequence.
        In this proof we use $\psstart(i)$ and $\psend(i)$ to denote the beginning and end of the time interval corresponding to the p-sums with index $i$.
        For each $i \in \{1,\dots,k\}$ we define $\delta_i = c_i' - c_i = \sum_{t=\psstart(i)}^{\psend(i)} \delta(t)$.
        Let $I = \{i \mid \delta_i \neq 0\}$ be the indices of p-sums where the values for the two graph sequences are different.
        Lastly, let $s = (s_1,\dots,s_k)\TT \in \range(\algo)$. 

        \begin{align}
                \frac{p(s)}{p'(s)}
                                   &= \prod_{i=1}^k \frac{p_i(s_i)}{p_i'(s_i)} \notag\\
                                   &= \prod_{i \in I} \frac{p_i(s_i)}{p_i'(s_i)} \notag\\
                                   &= \prod_{i \in I} \frac{\exp\left(-\frac{\eps |c_i - s_i|}{\Gamma}\right)}
                                                           {\exp\left(-\frac{\eps |c_i + \delta_i - s_i|}{\Gamma}\right)} \notag\\
                                   &= \prod_{i \in I} \exp\left(
                                                                   \frac{\eps}{\Gamma} \cdot
                                                                   \left(
                                                                           |c_i + \delta_i - s_i| - |c_i - s_i|
                                                                   \right)
                                                          \right) \notag\\
                                   &\leq \prod_{i \in I} \exp\left(\frac{\eps}{\Gamma}\cdot|\delta_i| \right) \label{eq:graph-loss-1}
        \end{align}
        We use condition \ref{enum:condition-disjoint} and partition $I$ into sets of indices $I_1,\dots,I_x$ such that for all $j \in \{1,\dots,x\}$ the p-sums corresponding to indices in $I_j$ cover disjoint time intervals.
        For each set $I_j$ we then have
        \begin{equation}
                \label{eq:graph-psum-bound}
                \sum_{i\in I_j} |\delta_i| \leq \sum_{i\in I_j}\sum_{t=\psstart(i)}^{\psend(i)} |\delta(t)| \leq \sum_{t=1}^T |\delta(t)| \leq \Gamma.
        \end{equation}
        Combining \eqref{eq:graph-loss-1} and \eqref{eq:graph-psum-bound} yields
        \begin{align*}
                \frac{p(s)}{p'(s)}
                                        &\leq \prod_{i \in I} \exp\left(
                                                                        \frac{\eps}{\Gamma}\cdot |\delta_i|
                                                                  \right) \notag\\
                                        &= \prod_{j = 1}^x \prod_{i \in I_j} \exp\left(
                                                                        \frac{\eps}{\Gamma}\cdot |\delta_i|
                                                                  \right) \notag\\
                                        &= \prod_{j = 1}^x \exp\left(
                                                                     \frac{\eps}{\Gamma}\cdot
                                                                     \sum_{j \in I_j} |\delta_i|
                                                               \right) \notag\\
                                        &\leq \prod_{j = 1}^x \exp\left(\frac{\eps}{\Gamma} \cdot \Gamma\right) = \exp(x\cdot\eps). \label{eq:graph-loss-2}
        \end{align*}
        
        Thus, $\algo$ is $x\cdot\eps$-differentially private, since
        \begin{equation*}
                \Pr[\algo(\sigma) \in S] = \int_{S} p(s) ds \leq \int_{S} e^{x\cdot\eps} p'(s)ds = e^{x\cdot\eps} \Pr[A(\sigma') \in S]
        \end{equation*}
        for any $S \subseteq \range(\algo)$.
        This concludes the proof of \cref{lem:graph-obs1}.
\end{proof}
} \opt{full}{\graphobsoneproof}

\newcommand{\binmechfigure}[1][]{
\begin{figure}[#1]
        \centering
        \begin{tikzpicture}

\begin{scope}
        [bmnode/.style={circle, draw, minimum size = 7mm, inner sep=0pt},
         activenode/.style={circle, draw, ultra thick, pattern = horizontal lines, minimum size = 7mm, inner sep=0pt}]

        \node (1)  at (0.00, 0.00) [bmnode] {};
        \node (2)  at (1.50, 0.00) [activenode] {};
        \node (12) at (0.75, 1.29) [activenode] {};
        \draw (1) to (12);
        \draw (2) to (12);

        \node (3)  at (3.00, 0.00) [bmnode] {};
        \node (4)  at (4.50, 0.00) [activenode] {};
        \node (34) at (3.75, 1.29) [activenode] {};
        \draw (3) to (34);
        \draw (4) to (34);

        \node (14) at (2.25, 2.58) [activenode] {};
        \draw (12) to (14);
        \draw (34) to (14);

        \node (5)  at (6.00, 0.00) [activenode] {};
        \node (6)  at (7.50, 0.00) [bmnode] {};
        \node (56) at (6.75, 1.29) [activenode] {};
        \draw (5) to (56);
        \draw (6) to (56);

        \node (7)  at (9.00, 0.00) [bmnode] {};
        \node (8)  at (10.50, 0.00) [bmnode] {};
        \node (78) at (9.75, 1.29) [bmnode] {};
        \draw (7) to (78);
        \draw (8) to (78);

        \node (58) at (8.25, 2.58) [activenode] {};
        \draw (56) to (58);
        \draw (78) to (58);

        \node (18) at (5.25, 3.87) [activenode] {};
        \draw (14) to (18);
        \draw (58) to (18);

        \node at (-0.75, -1) {$t=$};
        \node at ( 0.00, -1) {1};
        \node[font=\bfseries] at ( 1.50, -1) {2};
        \node at ( 3.00, -1) {3};
        \node[font=\bfseries] at ( 4.50, -1) {4};
        \node[font=\bfseries] at ( 6.00, -1) {5};
        \node at ( 7.50, -1) {6};
        \node at ( 9.00, -1) {7};
        \node at (10.50, -1) {8};

        \begin{pgfonlayer}{bg}
                \node[draw, rectangle, dashed, thick, fill = col1!50, rounded corners=0.25cm,
                        fit = (2) (4) (5), label={[col1]right:$P_0$}] {};
                \node[draw, rectangle, dashed, thick, fill = col2!50, rounded corners=0.25cm,
                        fit = (12) (34) (56), label={[col2]right:$P_1$}] {};
                \node[draw, rectangle, dashed, thick, fill = col3!50, rounded corners=0.25cm,
                        fit = (14) (58), label={[col3]right:$P_2$}] {};
                \node[draw, rectangle, dashed, thick, fill = col4!50, rounded corners=0.25cm,
                        fit = (18), label={[col4]right:$P_3$}] {};
        \end{pgfonlayer}
\end{scope}

\end{tikzpicture}
         \caption{For a stream of length 8 the binary mechanism computes p-sums as shown.
                The leaves of the tree correspond to p-sums over the individual time steps;
                the root corresponds to a p-sum over the full stream.
                Now assume that in \cref{alg:generic-graph} $\Delta f$ differs at time steps 2, 4 and 5 (marked in bold) for adjacent graph sequences.
                The p-sums involving these time steps (shaded circles) can be partitioned into 4 sets $P_0,P_1,P_2,P_3$,
                indicated by dashed rectangles. $P_i$ contains the p-sums of length $2^i$.}
        \label{fig:binary-disjoint}
\end{figure}
} \opt{full}{\binmechfigure[tbp]}
\opt{confpre}{The proof of this \namecref{lem:graph-obs1} and of the following \namecref{cor:graph-bin-mech} appear in \cref{sec:graph-mech-proofs}. }We can compute the p-sums in \cref{alg:generic-graph} as in the binary mechanism \cite{chan11} to release $\eps$-differentially private estimates of graph functions.
\begin{restatable}[Binary mechanism]{corollary}{binmechcorollary}
        \label{cor:graph-bin-mech}
        Let $f$ be a function whose difference sequence has continuous global sensitivity $\Gamma$  and $\Gamma$ is publicly known.
        Let $0 < \delta < 1$ and $\eps > 0$.
        For each $T \in \nats$ there exists an $\eps$-differentially private algorithm to estimate $f$ on a graph sequence
        which has error $O(\Gamma\eps^{-1}\cdot \log^{3/2} T \cdot \log \delta^{-1})$ with probability at least $1-\delta$. 
\end{restatable}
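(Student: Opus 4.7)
The plan is to instantiate the generic graph sequence mechanism (\cref{alg:generic-graph}) with the binary-tree p-sum scheme from the binary mechanism of Chan \etal\ \cite{chan11}, and then to verify that the three structural hypotheses of \cref{lem:graph-obs1} are satisfied. This reduces the statement essentially to a calculation of the parameters $x$ and $y$ in that lemma.

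First, I would describe the p-sum structure. Conceptually, view the $T$ time steps as the leaves of a complete binary tree of depth $\lceil \log_2 T \rceil$ (padding with zeros to the next power of two); each internal node of the tree corresponds to a p-sum over the dyadic interval of its leaves, as illustrated in \cref{fig:binary-disjoint}. For every time step $t$, when that time step arrives the algorithm computes $\Delta f(t)$ and then completes every p-sum whose right endpoint is $t$, adding fresh independent noise $\Lap(\Gamma/\eps')$ for some scaled privacy parameter $\eps'$ to be chosen below. The count (i.e., the estimate of $f(t) = \sum_{s\le t}\Delta f(s)$) at time $t$ is then the sum of the noisy p-sums corresponding to the standard dyadic decomposition of the interval $[1,t]$.

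Next, I would verify the three conditions of \cref{lem:graph-obs1}. Condition~\ref{enum:condition-independent} (independent Laplace noise) is immediate from the construction. For condition~\ref{enum:condition-y}, the dyadic decomposition of any prefix $[1,t]$ consists of at most $\lceil \log_2(T+1)\rceil$ intervals, so $y = O(\log T)$. For condition~\ref{enum:condition-disjoint}, I partition the collection of all p-sums by tree level: the p-sums at level $i$ correspond to pairwise disjoint intervals of length $2^i$. Since the tree has depth $O(\log T)$, this gives $x = O(\log T)$ partitions, each consisting of p-sums with pairwise disjoint (and, within a level, non-nested) time intervals, exactly as required.

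Applying \cref{lem:graph-obs1} with these parameters yields that the mechanism is $(x\cdot\eps')$-differentially private with per-time-step error $O(\Gamma(\eps')^{-1}\sqrt{y}\log(1/\delta))$ with probability at least $1-\delta$. To obtain overall $\eps$-DP, I set $\eps' = \eps/x = \Theta(\eps/\log T)$. Substituting back, the error becomes
\[
O\!\left(\Gamma \cdot \frac{\log T}{\eps}\cdot \sqrt{\log T}\cdot \log\tfrac{1}{\delta}\right) \;=\; O\!\left(\Gamma\,\eps^{-1}\,\log^{3/2}T\,\log\tfrac{1}{\delta}\right),
\]
which is exactly the stated bound. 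The only subtlety, and the one place where care is genuinely required, is condition~\ref{enum:condition-disjoint}: one must observe that even though an individual time step participates in $O(\log T)$ p-sums (so naively the output could change in many coordinates between adjacent graph sequences), the per-level disjointness lets \cref{lem:graph-obs1} charge the total $\ell_1$-deviation of $\Delta f$ against $\Gamma$ once per level rather than once per p-sum; everything else is bookkeeping.
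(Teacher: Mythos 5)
Your proposal is correct and follows essentially the same route as the paper's proof: instantiate \cref{alg:generic-graph} with the binary mechanism of Chan et al., check the three conditions of \cref{lem:graph-obs1} with $x, y = O(\log T)$ (partitioning p-sums by dyadic level to get the disjointness condition), and rescale the privacy parameter to $\eps/x$ to obtain the $\log^{3/2} T$ bound.
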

\newcommand{\binmechproof}{
\begin{proof}
        We show that the binary mechanism by Chan \etal\ \cite{chan11} can be employed to construct an algorithm as in \cref{alg:generic-graph}.
        
        The binary mechanism divides the $T$ time steps of the input sequence into $\lfloor T/2^i \rfloor$ intervals of length $2^i$ for each $i = 0,\dots,\log T$.
        For each interval the binary mechanism computes a p-sum.
        The p-sum of length $2^i$ used to compute the output at time step $t$ is indexed as
        \begin{equation*}
                q_i(t) = \sum_{j=0}^{i-1} \flfrac{T}{2^i} + \flfrac{t}{2^i}.
        \end{equation*}
        Let $a=(a_1,\dots,a_k)$ be the noisy p-sums of the difference sequence of a graph function $f$ for any graph sequence of length $T$.
        The value $f(t)$ at time $t$ is computed as the sum
        \begin{equation*}
                \sum_{i=0}^{\log T} \bin_i(t) \cdot a_{q_i(t)},
        \end{equation*}
        where $\bin_i(t)$ is the $i$-th bit in the binary representation of $t$, with $i=0$ for the least significant bit.
        
        We have $y = \log T$, since the sum at any time $t$ is computed from at most $\log T$ noisy p-sums corresponding to the at most $\log T$ bits of the binary representation of $t$.
        Every time step $t$ participates in p-sums of length $2^i$ for all $i = 0,\dots,\log T$. Thus, $x = \log T + 1$.
        The set $P$ of all p-sums can be partitioned into $x$ subsets of disjoint p-sums as illustrated in \cref{fig:binary-disjoint}\opt{conf}{ in \cref{sec:fig-appendix}}:
        for each $i = 0,\dots,\log T$ the set $P_i$ contains the p-sums of length exactly $2^i$.
        Thus, using the binary mechanism to sum the noisy p-sums we obtain an algorithm that satisfies the conditions of \cref{lem:graph-obs1}.

        To obtain an $\eps$-differentially private algorithm we set the privacy parameter in \cref{lem:graph-obs1} to $\eps/x$.
        The error is then $O(\Gamma\eps^{-1}\cdot x \cdot \sqrt{y}\cdot\log\delta^{-1})$ with probability at least $1-\delta$.
        The claim follows from $x,y = O(\log T)$.
\end{proof}
} \opt{full}{\binmechproof}

\subsection{Bounds on Continuous Global Sensitivity}
\label{sec:bounds-global-sensitivity}
Song \etal\ \cite{song18} give bounds on the continuous global sensitivity of the difference sequence for several graph functions in the incremental setting in terms of the maximum degree $D$.
\Cref{tbl:song-sensitivity}\opt{confpre}{ in \cref{sec:gs-bounds-proofs}} summarizes the results on the continuous global sensitivity of difference sequences for a variety of problems in the partially dynamic and fully dynamic setting, both for edge- and node-adjacency.
Note that bounds on the continuous global sensitivity of the difference sequence for incremental graph sequences hold equally for decremental graph sequences as for every incremental graph sequence there exists an equivalent decremental graph sequence which deletes nodes and edges in the reverse order.

In the partially dynamic setting, the continuous global sensitivity based approach works well for graph functions that can be expressed as the sum of local functions on the neighborhood of nodes.
For non-local problems the approach is less successful.
For the weight of a minimum spanning tree the continuous global sensitivity of the difference sequence is independent of the length of the graph sequence. However, for minimum cut and maximum matching this is not the case.
In the fully-dynamic setting the approach seems not to be useful.
Here, we can show that even for estimating the number of edges the continuous global sensitivity of the difference sequence scales linearly with $T$ under node-adjacency.
When considering edge-adjacency we only have low sensitivity for the edge count.
\opt{confpre}{The proofs are in \cref{sec:gs-bounds-proofs}.}

\newcommand{\sensitivitytable}{
\begin{table}
\caption{Global sensitivity of difference sequences for graphs with maximum degree $D$\opt{confpre}{. Proofs are in \cref{sec:gs-bounds-proofs,sec:mst-algo-proofs}.}}
\label{tbl:song-sensitivity}
\centering
\begin{tabulary}{\linewidth}{LRRRR}
        \toprule
        Graph Function $f$    & \multicolumn{4}{c}{Continuous Global Sensitivity of $\Delta f$}           \\
        \midrule
                              & \multicolumn{2}{c}{Partially Dynamic} & \multicolumn{2}{c}{Fully Dynamic} \\
                              & node-adjacency                        & edge-adjacency         & node-adj. & edge-adj. \\
        \midrule
        edge count\afoot        & $D$                                   & $1$                    & $\geq T$  & 2         \\
        high-degree nodes\afoot & $2D + 1$                              & $4$                    & $\geq T$  & $\geq T$ \\
        degree histogram\afoot  & $4D^2 + 2D + 1$                       & $8D$                   & $\geq 2T$ & $\geq 2T$ \\
        triangle count\afoot    & $\binom{D}{2}$                        & $D$                    & $\geq T$  & $\geq T$ \\
        $k$-star count\afoot    & $D \binom{D-1}{k-1} + \binom{D}{k}$   & $2\cdot \bigl(\binom{D}{k} - \binom{D-1}{k}\bigr)$
                                                                                               & $\geq T$  & $\geq T$ \\
        minimum spanning tree & $2DW$                                 & $2W-2$                 & $\geq T$  & $\geq T$ \\
        minimum cut           & $\geq T$                              & $\geq T$               & $\geq T$  & $\geq T$ \\
        maximum matching      & $\geq T$                              & $\geq T$               & $\geq T$  & $\geq T$ \\
        \bottomrule
        \multicolumn{5}{l}{\afoot\footnotesize{Bounds for partially dynamic node-adjacency from Song et al.\ \cite{song18}}}
\end{tabulary}
\end{table}
} \opt{full}{\sensitivitytable}
\opt{conf}{\sensitivitytable}

Using \cref{cor:graph-bin-mech} we obtain $\eps$-differentially private mechanisms with additive error that scales with $\log^{3/2} T$,
compared to the factor $\sqrt{T}$ in \cite{song18}.
Note that we recover their algorithm when using the Simple Mechanism II by Chan \etal\ \cite{chan11} to sum the difference sequence.

\newcommand{\GSproofs}{
For edge-adjacency we obtain the following bounds on the sensitivity of the difference sequence.
\begin{lemma}
        \label{lem:pd-edge-sens}
        When considering edge-adjacency in the incremental setting, the continuous global sensitivity of the difference sequence of the
        \begin{enumerate}
                \item number of high-degree nodes is 4;
                \item degree histogram is $8D$;
                \item edge count is 1;
                \item triangle count is $D$;
                \item $k$-star count is $2\cdot \bigl(\binom{D}{k} - \binom{D-1}{k}\bigr)$.
        \end{enumerate}
\end{lemma}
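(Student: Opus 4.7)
My plan is to parameterize the two adjacent sequences by their single ``marginal edge'' and reduce the continuous global sensitivity to a univariate (or, in the histogram case, vector) total-variation bound per function. Let $\graphseq$ and $\graphseq'$ be edge-adjacent incremental sequences such that $\graphseq$ additionally inserts $e^* = \{u, v\}$ at time $t^*$; then $G_t = G'_t$ for $t < t^*$ and $G_t = G'_t \cup \{e^*\}$ for $t \geq t^*$. Set $g(t) := f(G_t) - f(G'_t)$. Since
\begin{equation*}
\Delta f_{\graphseq}(t) - \Delta f_{\graphseq'}(t) = g(t) - g(t-1)
\end{equation*}
and $g(t) = 0$ for $t < t^*$, the quantity I need to bound is just the total variation of $g$ on $[t^*-1,T]$ (with $g(t^*-1)=0$). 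This identity also handles the vector-valued histogram case componentwise with the $L_1$-norm.

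I would first dispatch the three straightforward cases. For the edge count, $g \equiv 1$ on $[t^*, T]$, so the total variation is $1$. For the triangle count, $g(t)$ equals the number of triangles in $G_t$ containing $e^*$, i.e., the number of common neighbors of $u$ and $v$ in $G'_t$; in the incremental setting this is monotone non-decreasing and bounded by $D$, so its total variation is at most $D$. For the $k$-star count, adding $e^*$ only affects stars centered at $u$ or $v$, and Pascal's identity $\binom{n+1}{k}-\binom{n}{k}=\binom{n}{k-1}$ yields
\begin{equation*}
g(t) = \binom{d'_u(t)}{k-1} + \binom{d'_v(t)}{k-1},
\end{equation*}
where $d'_x(t) := \deg_{G'_t}(x) \leq D - 1$ because $\deg_{G_t}(x) \leq D$. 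Both summands are monotone non-decreasing and each is bounded by $\binom{D-1}{k-1} = \binom{D}{k} - \binom{D-1}{k}$, so the total variation is at most $2\bigl(\binom{D}{k} - \binom{D-1}{k}\bigr)$.

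For the number of high-degree nodes at threshold $\tau$, only $u$ and $v$ have differing degrees between the sequences, and the algebraic identity $\mathbf{1}[d+1 \geq \tau] - \mathbf{1}[d \geq \tau] = \mathbf{1}[d = \tau-1]$ gives
\begin{equation*}
g(t) = \mathbf{1}[d'_u(t) = \tau - 1] + \mathbf{1}[d'_v(t) = \tau - 1].
\end{equation*}
Since $d'_u(t)$ is monotone non-decreasing, the set $\{t : d'_u(t) = \tau - 1\}$ is an interval, so each indicator switches on at most once and off at most once; its total variation is at most $2$. Summing for $u$ and $v$ gives the bound of $4$.

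The main obstacle will be the degree histogram, where $g$ is vector-valued in $\integers^{D+1}$ and I must bound $\sum_t \|g(t) - g(t-1)\|_1$. For $t \geq t^*$ the support of $g(t)$ lies in $\{d'_u(t), d'_u(t)+1, d'_v(t), d'_v(t)+1\}$, so I would split $g = g^u + g^v$ into the contributions from $u$ and $v$ and apply the triangle inequality. At $t=t^*$, $g^u$ appears out of nothing and contributes $\|\Delta g^u(t^*)\|_1 = 2$. For each subsequent $t$ at which $d'_u$ strictly increases from $a$ to $b \geq a+1$, a short case split (the two subcases $b = a+1$ and $b \geq a+2$) shows $\|\Delta g^u(t)\|_1 \leq 4$. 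Because $d'_u$ is monotone non-decreasing and bounded by $D-1$, it strictly increases at most $D-1$ times after $t^*$, so $u$'s total contribution is at most $2 + 4(D-1) \leq 4D$; adding the symmetric bound for $v$ yields the claimed $8D$.
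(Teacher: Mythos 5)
Your proof is correct, and the central organizing device — setting $g(t) := f(G_t) - f(G'_t)$, noting the telescoping identity $\Delta f_{\graphseq}(t) - \Delta f_{\graphseq'}(t) = g(t) - g(t-1)$, and thereby reducing the continuous global sensitivity of the difference sequence to the total variation of $g$ — is a genuinely cleaner framing than what the paper does. The paper's proof reasons directly and somewhat informally about where and by how much $\Delta f_{\graphseq}$ and $\Delta f_{\graphseq'}$ can diverge (e.g., ``endpoints of $e^*$ may cross the threshold as soon as $e^*$ is inserted, \ldots may become high-degree at a later update, increasing by 1 each''), whereas your formulation makes the monotonicity of $g$ explicit in the edge/triangle/$k$-star cases (so the total variation telescopes to $g(T)$) and reduces the high-degree case to the number of sign changes of an indicator of an interval. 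In the degree-histogram case the paper counts ``up to $D$ insertions'' affecting ``2 histogram bins in each sequence'' and multiplies, which is essentially your $4$-per-event argument but stated without the decomposition $g = g^u + g^v$ or the case split on the size of the degree jump; your version is both more rigorous (it correctly handles multi-edge updates that increase $d'_u$ by more than one per step) and slightly tighter ($8D-4 \leq 8D$). The closed forms $g(t) = \mathbf{1}[d'_u(t) = \tau-1] + \mathbf{1}[d'_v(t) = \tau-1]$ for high-degree nodes and $g(t) = \binom{d'_u(t)}{k-1} + \binom{d'_v(t)}{k-1}$ via Pascal's rule for $k$-stars are also nicer than the paper's verbal argument and match its stated constants exactly. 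In short: same bounds, same underlying combinatorics, but you package it as a single total-variation lemma applied five times, which the paper does not.
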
 
\begin{proof}
        In the following, $\graphseq$ and $\graphseq'$ are edge-adjacent incremental graph sequences that differ in the insertion of an edge $e^* = \{u^*, v^*\}$.
        
        \begin{enumerate}
                \item In $\graphseq$, the endpoints of $e^*$ may cross the degree-threshold $\tau$ as soon as $e^*$ is inserted, increasing the number of high-degree nodes by 2.
                In $\graphseq'$, the same nodes may become high-degree nodes at a later update, increasing the number of high-degree nodes by 1 each.
                Thus, the difference sequences differ by at most 4.

                \item The degrees of $u^*$ and $v^*$ are one less in $\graphseq'$ compared to $\graphseq$, once $e^*$ is inserted.
                There are up to $D$ insertions of edges incident to $u^*$, $v^*$.
                For each of these updates, the increase in the degree of $u^*$ affects 2 histogram-bins in $\graphseq$ and $\graphseq'$, each.
                The same holds for the increase in the degree of $v^*$.
                Thus, the continuous global sensitivity of the difference sequence of the degree histogram is $8D$.

                \item The edge-update sets $\Eins{t}$ and $\Eins{t}'$ have exactly the same size at all times, except when $e^*$ is inserted.
                Thus, the global sensitivity of the difference sequence the edge count is the number of edges that are inserted into $\graphseq$, but not into $\graphseq'$, which is 1.

                \item In $\graphseq'$, the $u^*$ and $v^*$ are not part of shared triangles.
                In $\graphseq$ however, they may become part of up to $D$ triangles.
                Thus, the continuous global sensitivity of the difference sequence of the triangle count is $D$.

                \item In $\graphseq$, after inserting $e^*$, $u^*$ and $v^*$ may be the center of up to $\binom{D}{k}$-many $k$-stars.
                Then, they were the center of $\binom{D-1}{k}$-many $k$-stars before the insertion of $e^*$.
                Any $k$-star in $\graphseq'$ is also present in $\graphseq$.
                Thus, the continuous global sensitivity of the difference sequence of the $k$-star count is the number of $k$-stars that the insertion of $e^*$ can contribute, which is $2 \bigl(\binom{D}{k} - \binom{D-1}{k}\bigr)$. \qedhere
        \end{enumerate}
\end{proof}

The following lemmata provide evidence that the difference sequence based approach is not useful for fully dynamic graphs.
\begin{lemma}
        \label{lem:high-degree-fd-sens}
        Let $f_\tau((V,E)) = |\{v \in V \mid \deg(v) \geq \tau\}|$.
        The difference sequence of $f_\tau$ has continuous global sensitivity at least $T$ when considering edge-adjacent or node-adjacent fully dynamic graph sequences.
\end{lemma}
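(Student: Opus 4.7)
The plan is to exhibit two fully dynamic graph sequences $\graphseq$ and $\graphseq'$ that are adjacent in the required sense and for which the difference sequences $\Delta f_\tau$ disagree by at least a constant at every one of the $T$ time steps. The key idea is that in the fully dynamic setting we can toggle an edge on and off at every step, and pair this with a single ``extra'' edge (or node) that shifts the degree of one vertex by one. This shift will push that vertex back and forth across the threshold $\tau$ in one sequence but keep it on the high side in the other.

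For the edge-adjacency case I would proceed as follows. Fix a node $v$ together with $\tau - 1$ auxiliary neighbors giving it a baseline degree of $\tau - 1$, and a further node $u$. Let $e^* = \{v, u\}$ be the distinguishing edge: $\graphseq$ contains $e^*$ from time $0$, whereas $\graphseq'$ never contains $e^*$. Pick an additional auxiliary node $w$ that is not yet adjacent to $v$. At odd time steps $t$, insert the edge $\{v,w\}$; at even time steps, delete it. In $\graphseq$ the degree of $v$ oscillates between $\tau$ and $\tau+1$, so $v$ is always a high-degree node and $\Delta f_\tau^{\graphseq}(t) = 0$ for every $t \geq 1$. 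In $\graphseq'$ the degree of $v$ oscillates between $\tau-1$ and $\tau$, so $v$ crosses the threshold at every step and $\lvert \Delta f_\tau^{\graphseq'}(t) \rvert = 1$ for every $t \geq 1$. Summing gives $\sum_{t=1}^{T} \lvert \Delta f_\tau^{\graphseq}(t) - \Delta f_\tau^{\graphseq'}(t) \rvert \geq T$.

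For node-adjacency I would use essentially the same construction, but let the distinguishing object be a node $v^*$ (present in $\graphseq$, absent in $\graphseq'$) together with the single incident edge $e^* = \{v, v^*\}$ that \cref{def:node-adjacency} forces to disappear in $\graphseq'$. The rest of the update schedule (inserting and deleting $\{v,w\}$ at alternating time steps) is identical in both sequences and does not involve $v^*$, so it satisfies the maximality requirement of the node-adjacency definition. The same toggling argument then yields a per-step disagreement of $1$ and hence a total continuous global sensitivity of at least $T$.

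The routine obstacle is to check that all updates are legal (no duplicate insertions/deletions, both endpoints present when an edge is added) and that the node-adjacency definition's side constraints on $\Eins{t}'$ and $\Edel{t}'$ are satisfied; this is handled by choosing $w \neq v^*$ so that $\{v,w\}$ is not incident to $v^*$ and therefore appears identically in both sequences. The only real content of the argument is the observation that in the fully dynamic regime one differing edge or node suffices to make a single vertex straddle the threshold $\tau$ forever, which is impossible in the purely incremental or purely decremental setting because the degree of any fixed vertex is monotone there.
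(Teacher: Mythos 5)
Your edge-adjacency construction is essentially the paper's: fix a vertex whose degree hovers around $\tau$ and toggle an edge at alternating steps to push it across. Two small slips should be fixed. Having $e^*$ present ``from time $0$'' in $\graphseq$ but not $\graphseq'$ violates \cref{def:edge-adjacency}, which requires $E_0 = E_0'$; the two sequences must start from the same initial graph and differ only in an update, so $e^*$ must be inserted at some time $t^* \geq 1$. And if you then also toggle $\{v,w\}$ at every step starting from $t=1$, both sequences go from $f_\tau = 0$ to $f_\tau = 1$ at $t=1$, so the difference sequences agree there and you collect only $T-1$, not the claimed $T$. The paper avoids this by starting from a $\tau$-star missing two edges so that $\graphseq'$ always sits strictly below $\tau$ while $\graphseq$ oscillates around it, giving a disagreement of $1$ at every step including $t=1$. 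You could equally fix your version by not toggling at $t=1$.

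The node-adjacency half has a genuine gap. \Cref{def:node-adjacency} imposes the additional constraint that every edge in $\Eins{t}$ must be incident to a node in $\Vins{t}$, and every edge in $\Edel{t}$ to a node in $\Vdel{t}$. Under node-adjacency you therefore may not insert or delete $\{v,w\}$ on its own while both $v$ and $w$ remain present; every edge update must accompany a node update. Your choice of $w \neq v^*$ takes care of the maximality condition on $\Eins{t}'$ and $\Edel{t}'$, but not of this accompaniment requirement, which your toggling of $\{v,w\}$ violates at every step $t \geq 2$. The paper's construction satisfies it by inserting and deleting an entire node $b$ together with its single incident edge, and that is what your node-adjacent construction should do as well.
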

\begin{proof}
        For the edge-adjacency case consider two graph sequences $\graphseq$, $\graphseq'$,
        where the initial graph is a $\tau$-star with two edges $e_1$, $e_2$ removed.
        At time $t=1$ insert $e_1$ into $\graphseq$, but not into $\graphseq'$.
        At odd times, insert $e_2$ into both sequences; at even times delete $e_2$ from both sequences.
        The sequence $\graphseq'$ never has a node of degree at least $\tau$,
        whereas $\graphseq$ has one such node at odd, but not at even time steps.
        Thus, $\sum_{t=1}^T |\Delta f_{\tau,\graphseq}(t) - \Delta f_{\tau,\graphseq'}(t)| = T$.

        For the node-adjacency case we again consider two graph sequences $\tilde\graphseq$, $\tilde\graphseq'$,
        where the initial graph is a $\tau$-star with two nodes $a$, $b$ and their incident edges removed.        
        At time $t=1$ insert $a$ and its incident edge into $\tilde\graphseq$, but not into $\tilde\graphseq'$.
        At odd times, insert $b$ and its incident edge into both sequence; at even times delete $b$ from both sequences.
        As in the edge-adjacency case it follows that
        $\sum_{t=1}^T |\Delta f_{\tau,\tilde\graphseq}(t) - \Delta f_{\tau,\tilde\graphseq'}(t)| = T$.
\end{proof}
\begin{lemma}
        \label{lem:degree-histogram-fd-sens}
        The difference sequence of the degree histogram has continuous global sensitivity at least $2T$ when considering edge-adjacent or node-adjacent fully dynamic graph sequences.
\end{lemma}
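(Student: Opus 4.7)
The plan is to adapt the oscillation construction used in the proof of \cref{lem:high-degree-fd-sens}. The key observation that distinguishes the degree histogram from the high-degree count is that whenever a single node's degree changes by one, exactly two bins of the degree histogram are affected (the old bin decreases by one, the new bin increases by one). Hence a construction in which adjacent sequences force the same node to oscillate between different pairs of degree values at every time step will contribute at least $2$ per step to the $\ell_1$-difference of the difference sequences, rather than just $1$.

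For the edge-adjacency case, I would start with an initial graph containing a node $v$ together with $d-2$ already-present neighbors and two designated ``swing'' edges $e_1$, $e_2$ incident to $v$ that are not yet present. The two sequences $\graphseq, \graphseq'$ agree on all updates \emph{except} for the single insertion of $e_1$ at time $t=1$, which is applied in $\graphseq$ but not in $\graphseq'$; thereafter both sequences alternate inserting $e_2$ at odd times and deleting $e_2$ at even times. Consequently, throughout the run, the degree of $v$ in $\graphseq$ oscillates between $d-1$ and $d$, while in $\graphseq'$ it oscillates between $d-2$ and $d-1$. At every time step $t \geq 2$ the vector $\Delta f_{\graphseq}(t)$ has nonzero entries in bins $\{d-1, d\}$ while $\Delta f_{\graphseq'}(t)$ has nonzero entries in bins $\{d-2, d-1\}$, and a direct bin-by-bin comparison shows $\lVert \Delta f_{\graphseq}(t) - \Delta f_{\graphseq'}(t) \rVert_1 \geq 2$. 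Summing over $t$ yields continuous global sensitivity at least $2(T-1) \geq 2T - 2$, which is $\Omega(T)$; choosing $T$ large enough and, if one wants the bound stated cleanly, starting the oscillation at $t=1$ on a slightly enlarged instance delivers the claimed $2T$.

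For the node-adjacency case I would apply the same template with a node $a$ playing the role of $e_1$: start with a $(d{+}1)$-node configuration in which $v$ already has $d-2$ neighbors and has two further potential neighbors $a, b$. At $t=1$ insert $a$ together with its edge to $v$ into $\graphseq$ but not into $\graphseq'$, and in both sequences oscillate by inserting/deleting node $b$ (together with its edge to $v$) at odd/even times. This satisfies \cref{def:node-adjacency}: the sequences differ only in the insertion of $v^*=a$, and every edge update in $\Eins{t}, \Edel{t}$ is incident to the node being inserted/deleted at that step. The degree of $v$ oscillates by the same shifted pattern as before, so the same per-step bound of $2$ applies.

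The main obstacle is purely bookkeeping, namely verifying that (i) the constructed sequences are genuinely fully-dynamic adjacent under \cref{def:edge-adjacency}/\cref{def:node-adjacency} --- in particular that for the node-adjacency case the maximality requirement on $\Eins{t}'$/$\Edel{t}'$ regarding edges incident to $v^*$ is satisfied --- and (ii) that the two bins active in $\Delta f_{\graphseq}(t)$ and the two bins active in $\Delta f_{\graphseq'}(t)$ do not perfectly cancel in the $\ell_1$-sense. The second point is the substantive one, and is handled by the observation that the two oscillation bands $\{d-1,d\}$ and $\{d-2,d-1\}$ overlap in only one coordinate, so at most one of the four contributing $\pm 1$ entries can cancel, leaving $\ell_1$-mass at least $2$.
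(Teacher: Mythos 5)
Your proposal follows essentially the same route as the paper's: both reuse the oscillation construction from the high-degree lemma and observe that each per-step change to a single node's degree moves $\ell_1$-mass of size~$2$ in the degree histogram (one bin goes down, one goes up), and that the two oscillation bands $\{\tau-1,\tau\}$ and $\{\tau-2,\tau-1\}$ overlap in only one bin so the per-step $\ell_1$-difference is at least~$2$. Your argument is correct in spirit; the paper just works out the bins explicitly and also verifies that the changes coming from the \emph{other} endpoint of the swing edge $e_2$ (which moves between degree~$0$ and degree~$1$) contribute identically to both sequences for $t \geq 2$ and hence cancel --- a point you should also make explicit, since otherwise $\Delta f_{\graphseq}(t)$ has four nonzero bins, not two. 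The only other small gap is your fallback to $2(T-1)$: if you compute the $t=1$ step as well (where $e_1$ is inserted into $\graphseq$ only, so $v$'s degree jumps by~$2$ in one sequence and by~$1$ in the other, and bins $0$ and $1$ of the histogram also differ), you find the per-step $\ell_1$-difference there is also at least~$2$, so the clean $\geq 2T$ bound follows directly with no need to enlarge the instance.
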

\begin{proof}
        We refer to the adjacent graph sequences introduced in the proof of \cref{lem:high-degree-fd-sens}.
        In one of the graph sequences, the bins in the degree histogram corresponding to degrees $\tau-1$ and $\tau$ are alternatingly zero and one. 
        In the other graph sequence the same is true for the bins corresponding to degrees $\tau-2$ and $\tau$.
        Let $h_\graphseq(t)$, $h_{\graphseq'}(t)$ be the degree histograms of the former and latter sequence at time $t$, respectively.
        The corresponding difference sequences are $\Delta h_{\graphseq}(t) = (d_{t,0}, \dots, d_{t,n-1})$ and $\Delta h_{\graphseq'}(t) = (d_{t,0}',\dots,d_{t,n-1}')$,
        where $d_{t,i}$ and $d_{t,i}'$ is the change in the number of nodes of degree $i$ from time $t-1$ to $t$.
        For odd $t$, we have $d_{t,\tau} = 1$, $d_{t,\tau-1} = -1$, $d_{t,\tau-1}' = 1$, $d_{t,\tau-2}' = -1$;
        for even $t$, we have $d_{t,\tau} = -1$, $d_{t,\tau-1} = 1$, $d_{t,\tau-1}' = -1$, $d_{t,\tau-2}' = 1$.
        Note that $d_{t,1} = d_{t,1}'$ for $t \geq 2$ and $d_{1,1} = 2 = d_{1,1'} + 1$.
        All $d_{t,i}$ not explicitly given are 0.
        It follows that $\sum_{t=1}^T ||\Delta h_{\graphseq}(t) - \Delta h_{\graphseq'}(t) ||_1 \geq 2T$, which concludes the proof.
\end{proof}
\begin{lemma}
        \label{lem:subgraph-count-sensitivity}
        Let $H$ be a connected graph with at least two edges (nodes) and $f_H(G)$ be the function counting occurrences of $H$ in $G$.
        When considering edge-adjacent (node-adjacent) fully dynamic graph sequences of length $T$, $\GS(\Delta f_H) \geq T$.
\end{lemma}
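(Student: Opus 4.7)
The plan is to build on the oscillation gadget used in \cref{lem:high-degree-fd-sens,lem:degree-histogram-fd-sens}: in each adjacency model I will exhibit a pair of adjacent fully dynamic sequences of length $T$ on which the difference sequences of $f_H$ disagree by a positive amount at every time step, giving $\GS(\Delta f_H) \ge T$.

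\emph{Edge-adjacent case.} Since $H$ has at least two edges, pick distinct $e_1,e_2 \in E(H)$ and take the common initial graph to be $G_0 = (V(H), E(H)\setminus\{e_1,e_2\})$. At $t=1$, $\graphseq$ inserts both $e_1$ and $e_2$ while $\graphseq'$ inserts only $e_2$; this is the only update-difference and makes them edge-adjacent on $e_1$. For $t \ge 2$ both sequences do the same thing: delete $e_2$ at even $t$ and reinsert $e_2$ at odd $t$. Then $\graphseq$ alternates between $H$ (at odd $t$) and $H - e_2$ (at even $t$), so $f_H(\graphseq_t)$ alternates between the positive constant $c := f_H(H) \ge 1$ and $0$. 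Every graph in $\graphseq'$ is missing $e_1$ and therefore has strictly fewer than $|E(H)|$ edges, so it contains no copy of $H$ and $\Delta f_{H,\graphseq'} \equiv 0$. Hence $|\Delta f_{H,\graphseq}(t) - \Delta f_{H,\graphseq'}(t)| = c \ge 1$ at every $t$, and summing yields $\GS(\Delta f_H) \ge T$.

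\emph{Node-adjacent case.} Since $H$ is connected with at least two nodes, pick distinct $v^*, v^\dagger \in V(H)$ and let the common initial graph be $G_0 = H - v^* - v^\dagger$. At $t=1$, $\graphseq$ inserts $v^*$ and $v^\dagger$ together with every edge of $H$ incident to at least one of them, while $\graphseq'$ inserts only $v^\dagger$ together with those edges of $H$ incident to $v^\dagger$ but not to $v^*$ (exactly the maximal subset required by \cref{def:node-adjacency}); this makes the sequences node-adjacent on $v^*$. For $t \ge 2$ both sequences delete $v^\dagger$ at even $t$ and reinsert $v^\dagger$ with all edges of $H$ incident to it at odd $t$; in $\graphseq'$ the edges incident to $v^*$ drop out automatically under the maximality clause. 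Thus $\graphseq$ alternates between $H$ and $H - v^\dagger$, so $f_H(\graphseq_t)$ alternates between $c \ge 1$ and $0$, while the graphs of $\graphseq'$ always miss $v^*$, have fewer than $|V(H)|$ vertices, and hence contain no copy of $H$. The same bound $\GS(\Delta f_H) \ge T$ follows as before.

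\emph{Main obstacle.} The principal risk is that the gadget accidentally creates additional copies of $H$ at intermediate states, which would muddle the $\pm c, 0$ oscillation pattern. This is defused by choosing $G_0$ minimally (exactly $V(H)$ with two missing edges in the edge case, or exactly $V(H) \setminus \{v^*, v^\dagger\}$ in the node case): at every state, either $e_1$/$v^*$ is absent (so $\graphseq'$ can never realize $H$) or a single edge/node is missing (so the intermediate states of $\graphseq$ have too few edges or too few vertices to embed $H$), leaving the intended recreated copy as the only one in play. Verifying the side conditions of \cref{def:node-adjacency} (identical node-update sets at every $t \ne 1$, maximality of the primed edge-update sets, and the incidence constraint on edge updates) is routine from the construction.
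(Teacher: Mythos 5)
Your proof is correct and takes essentially the same approach as the paper's: the paper also starts from $H$ with two edges (resp.\ two nodes) removed, distinguishes the two sequences only by the insertion of one of them at $t=1$, and then toggles the other edge/node to make $f_H$ on $\graphseq$ oscillate between a positive value and $0$ while $f_H$ on $\graphseq'$ stays identically $0$. Your version is a touch more careful in two spots — writing $c := f_H(H) \ge 1$ rather than assuming $f_H(H)=1$, and explicitly checking the maximality clause of \cref{def:node-adjacency} — but the construction and the argument are the same.
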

\begin{proof}
        To show the edge-adjacency case consider adjacent graph sequences $\graphseq$, $\graphseq'$ of length $T$,
        where the initial graph is a copy of $H$ with two edges, say $\{a,b\}$, $\{c,d\}$, removed.
        In the first time step ($t=1$), insert $\{a,b\}$ into $\graphseq$, but not into $\graphseq'$ and insert $\{c,d\}$ into both sequences.
        Thus, we have $f_H(G_1) = 1$ and $f_H(G_1') = 0$.
        In time steps where $t$ is even, delete $\{c,d\}$ from both sequences; if $t$ is odd, insert $\{c,d\}$ into both sequences.
        At all time steps $t$ we have $f_H(G_t') = 0$ and $|\Delta f_{H,\graphseq}(t)| = 1$,
        which implies $\sum_{t=1}^T |\Delta f_{H,\graphseq}(t) - \Delta f_{H,\graphseq'}(t)| = T$.

        For the node-adjacency case consider again adjacent graph sequences $\tilde\graphseq$, $\tilde\graphseq'$ of length $T$,
        where the initial graph is a copy of $H$ with two nodes, say $a$, $b$, removed.
        At time step $t=1$ insert $a$ along with the incident edges into $\tilde\graphseq$, but not $\tilde\graphseq'$.
        At every odd time step, insert $b$ along with the incident edges into both sequences; at every even time step delete $b$ from both sequences.
        Again, $\tilde\graphseq'$ never contains a copy of $H$, whereas $\tilde\graphseq$ contains a copy of $H$ at every odd time step,
        but no on the even time steps.
        Thus, $\sum_{t=1}^T |\Delta f_{H,\tilde\graphseq}(t) - \Delta f_{H,\tilde\graphseq'}(t)| = T$.
\end{proof}
\begin{corollary}
        \label{cor:subgraph-count-fd-sens}
        The difference sequence of triangle count and $k$-star count has continuous global sensitivity $\geq T$ when considering edge-adjacent or node-adjacent graph sequences.
        The difference sequence of the edge count has continuous global sensitivity $\geq T$ when considering node-adjacent graph sequences.
\end{corollary}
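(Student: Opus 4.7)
Each of the three assertions follows by instantiating \cref{lem:subgraph-count-sensitivity} with an appropriate choice of subgraph $H$ and reading off the conclusion $\GS(\Delta f_H) \geq T$ as the desired lower bound on the continuous global sensitivity of the corresponding difference sequence. So the strategy is simply to exhibit, for each of the three functions mentioned, a small connected $H$ that satisfies the hypothesis of the relevant branch of the lemma.

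For \emph{triangle count} under both edge- and node-adjacency, I would take $H = K_3$: it is connected, has three edges (so the edge-adjacency hypothesis applies) and three nodes (so the node-adjacency hypothesis applies). For \emph{$k$-star count} under both adjacency notions (with $k \geq 2$), I would take $H$ to be the $k$-star itself, which is a connected tree on $k+1 \geq 3$ nodes with $k \geq 2$ edges; both branches of \cref{lem:subgraph-count-sensitivity} then apply verbatim. In each case the conclusion of the lemma gives the claimed $\geq T$ bound directly, so no additional construction is needed.

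For the \emph{edge count under node-adjacency}, I would take $H = K_2$. The node-adjacency branch of the lemma only requires $H$ to be connected with at least two nodes, both of which hold for $K_2$. Note that the edge-adjacency branch requires at least two edges, which fails for $K_2$; this is consistent with the corollary not claiming a high-sensitivity lower bound for edge count under edge-adjacency, and indeed matches the constant upper bound of $2$ recorded in \cref{tbl:song-sensitivity} for that case.

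\textbf{Main obstacle.} The only point to double-check is that the node-adjacency construction in the proof of \cref{lem:subgraph-count-sensitivity} still functions at the extreme choice $H = K_2$: the construction starts from an empty initial graph (both vertices of $K_2$ removed), inserts node $a$ only in one sequence at $t=1$, and then toggles node $b$ together with its incident edges at every subsequent step. In the sequence containing $a$, the edge $\{a,b\}$ is inserted and deleted along with $b$ each step, whereas in the other sequence $b$ is toggled in isolation (since the only possible incident edge, $\{a,b\}$, is incident to the special node $v^* = a$ and hence excluded by \cref{def:node-adjacency}). The edge-count difference sequences therefore differ by exactly $1$ at each of the $T$ time steps, which yields the required bound of $\geq T$. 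Beyond verifying this extreme case, the rest is a routine invocation of \cref{lem:subgraph-count-sensitivity}.
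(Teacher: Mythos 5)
Your proof is correct and takes essentially the same route as the paper: instantiate \cref{lem:subgraph-count-sensitivity} with $H=K_3$ for triangles, $H$ the $k$-star (with $k\geq 2$) for $k$-star count, and $H=K_2$ for edge count under node-adjacency, noting that $K_2$ has only one edge so only the node-adjacency branch applies. The only addition beyond the paper's one-line argument is your sanity check that the lemma's node-adjacency construction works at the degenerate choice $H=K_2$, which is a useful but not strictly necessary verification.
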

\begin{proof}
        The claim follows from \cref{lem:subgraph-count-sensitivity},
        by observing that triangles and $k$-stars have at least two edges and nodes,
        and noting that for the edge count $H = (\{a,b\},\{\{a,b\}\})$.
\end{proof}
\begin{lemma}
        \label{lem:mst-fd-sens}
        The difference sequence of the weight of a minimum spanning tree has continuous global sensitivity at least $T$ when considering edge-adjacent or node-adjacent fully dynamic graph sequences.
\end{lemma}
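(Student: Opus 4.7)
The plan is to mirror the template of \cref{lem:high-degree-fd-sens,lem:subgraph-count-sensitivity}: for every $T$, exhibit a pair of adjacent fully dynamic graph sequences $\graphseq, \graphseq'$ whose MST-weight gap $g(t) := f(G_t) - f(G_t')$ oscillates between two distinct values at every step from $t=1$ on. Since
\[
\sum_{t=1}^T |\Delta f_\graphseq(t) - \Delta f_{\graphseq'}(t)| = \sum_{t=1}^T |g(t) - g(t-1)|,
\]
a period-two oscillation of $g$ with constant amplitude then forces the continuous global sensitivity to be at least linear in $T$.

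For the edge-adjacency case I would take three nodes $\{a,b,c\}$ with the single initial edge $\{a,b\}$ of weight $2$, insert the ``shortcut'' edge $e^*=\{a,c\}$ of weight $1$ only in $\graphseq$ at $t=1$, and then alternately insert (at even $t$) and delete (at odd $t\geq 3$) the edge $e' = \{b,c\}$ of weight $1$ in \emph{both} sequences. The key point is that whenever $e'$ is present, $\graphseq$ can span the three nodes with the two weight-$1$ edges $e^*$ and $e'$, while $\graphseq'$ has no such pair and must still use the heavier $\{a,b\}$. A direct case check shows that the MST weights in $\graphseq$ follow $3,2,3,2,\ldots$ and those in $\graphseq'$ follow $2,3,2,3,\ldots$, so $g$ alternates between $+1$ and $-1$ from $t=1$ on, giving $\sum_{t=1}^T |g(t) - g(t-1)| = 1 + 2(T-1) \geq T$.

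The node-adjacency case needs a little more care, because \cref{def:node-adjacency} requires every updated edge to be incident to a simultaneously updated node, so one cannot just toggle a lone edge at $t\geq 2$. I would therefore fix a base graph on two initial nodes $\{c,d\}$ with a heavy initial edge, insert $v^* = a$ at $t=1$ in $\graphseq$ alone together with a cheap incident edge, and for $t\geq 2$ alternately insert and delete an auxiliary node $b$ together with two incident edges, one of which is also incident to $v^*$. That last edge appears only in $\graphseq$, so every insertion or deletion of $b$ restructures the MST in $\graphseq$ differently from the one in $\graphseq'$; choosing the weights so the two restructurings differ by a nonzero constant again yields $\sum_{t=1}^T |g(t) - g(t-1)| = \Omega(T)$.

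The only subtlety I foresee is that several of the intermediate graphs in both constructions are disconnected, so ``MST weight'' must be interpreted as the weight of a minimum spanning forest throughout; under this standard convention all of the numerical checks are routine. If one insists on keeping every graph connected, it suffices to attach a fixed heavy spanning tree on a few auxiliary nodes to every graph in both sequences, which shifts all MST weights by the same additive constant and therefore cancels out of the difference sequences.
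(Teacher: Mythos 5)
You follow the same approach as the paper: construct a concrete pair of adjacent fully dynamic sequences in which a single extra element present only in $\graphseq$ makes the MST-weight gap $g(t)$ oscillate with constant amplitude under a period-two toggle, then sum $\lvert g(t)-g(t-1)\rvert$ over $t$. Your 3-node edge-adjacency gadget is correct provided one reads ``minimum spanning tree'' as ``minimum spanning forest'' on disconnected graphs, which is indeed the convention the paper uses implicitly (its MST lower-bound reductions also contain isolated vertices). However, your fallback for keeping every graph connected does not work as stated: attaching a heavy spanning tree on \emph{fresh} auxiliary nodes leaves $c$ isolated unless the auxiliaries are wired to $c$, and once they are, the auxiliary edges compete with $e^*$ and $e'$ for inclusion in the MST, so the contribution is state-dependent rather than a fixed additive constant that cancels. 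The paper avoids the issue entirely by joining two connected gadgets $A$ and $B$ with a permanent heavy bridge $\{a_0,b_0\}$, so every graph in both sequences stays connected, and it arranges for $\wmst$ in $\graphseq$ to be \emph{constant} after $t=1$ while $\wmst$ in $\graphseq'$ oscillates; your version has both oscillate in antiphase, which works equally well. Finally, your node-adjacency paragraph is only a plan; the paper's version instantiates the weights explicitly (a weight-$(1,1)$ path through $v^*$, and an auxiliary toggled node $v'$ with incident edge weights $1$ and $W-1$), and that weight choice is exactly what makes the oscillation nonzero, so that step should be written out rather than deferred.
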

\begin{proof}
        Let $A = (V_A, E_A)$ and $B = (V_B, E_B)$ be connected graphs with $|V_A|,|V_B| \geq 3$ and $w(e) = 1$ for all $e \in E_A \cup E_B$.
        We denote nodes in $V_A$ by $a_i$, nodes in $V_B$ by $b_i$ with index $i=0,1,\dots$.
        Let $G_0 = (V_A \cup V_B, E_A \cup E_B \cup \{\{a_0,b_0\}\})$ with $w(\{a_0,b_0\}) = W \geq 3$.
        Let $\graphseq$, $\graphseq'$ be graph sequences of length $T$ with initial graph $G_0$.

        At time $t=1$ we insert the edge $e^* = \{a_1,b_1\}$ with weight $w(e^*) = 1$ into $\graphseq$.
        Then, if $t$ is even, we insert the edges $e' = \{a_2,b_2\}$ with weight $w(e') = W-1$ into $\graphseq$ and $\graphseq'$.
        If $t$ is odd, we delete $e'$ from both sequences.
        
        The sequences $\graphseq$ and $\graphseq'$ are edge-adjacent.
        The weight of the minimum spanning tree in $G_0$ is $w_0 = |V_A| - 1 + |V_B| - 1 + W = |V_A| + |V_B| + W - 2$.
        For all $t \geq 1$, the weight of the minimum spanning tree in $\graphseq$ is
        $w_\graphseq = 1 + |V_A| - 1 + |V_B| - 1 = |V_A| + |V_B| - 1$.
        If $t$ is even and $t \geq 2$, then the weight of the minimum spanning tree in $\graphseq'$ is $w_0 - 1$;
        if $t$ is odd it is $w_0$.
        Thus, we have
        \begin{equation*}
                |\Delta \wmst_\graphseq(t) - \Delta \wmst_{\graphseq'}(t)| \geq 1 
        \end{equation*}
        for all $t = 1,\dots,T$,
        where $\wmst_\graphseq$ and $\wmst_{\graphseq'}$ denote the weight of the minimum spanning tree in $\graphseq$ and $\graphseq'$, respectively.
        Summing over all $t$ proves the claim for edge-adjacent sequences.

        To show that the bound holds for node-adjacent sequences, we replace the edge-insertions by node-insertions as follows:
        instead of $e^*$ we insert a node $v^*$ along with edges $e^*_1 = \{a_1,v^*\}$ and $e^*_2 = \{v^*,b_1\}$,
        where $w(e^*_1) = w(e^*_2) = 1$;
        instead of $e'$ we insert a node $v'$ along with edges $e'_1 = \{a_2,v'\}$ and $e'_2 = \{v',b_1\}$,
        where $w(e'_1) = 1$ and $w(e'_2) = W - 1$;
        where we delete $e'$ in the edge-adjacency case we delete $v'$ and the incident edges.
        We use the same initial graph $G_0$ and obtain node-adjacent graph sequences where the weights of the minimum spanning tree are increased by 1 compared to the weights in the corresponding node-adjacent graph sequences.
        The claim for node-adjacent graph sequences follows as for edge-adjacent graph sequences.
\end{proof}

For maximum cardinality matching and minimum cut the continuous global sensitivity of the difference sequence scales linearly with the length of the input even in the partially dynamic setting.
\begin{lemma}
        Let $M(G)$ be the size of a maximum cardinality matching of $G$.
        The difference sequence of $M$ has continuous global sensitivity at least $T$ when considering edge-adjacent or node-adjacent partially dynamic graph sequences.
\end{lemma}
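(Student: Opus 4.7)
The plan is to exploit the classical augmenting-path phenomenon for matchings: if one incremental sequence has an extra initial edge compared to an otherwise identical sequence, then growing a path alternates the parity of the two matching sizes, so that whenever the matching size stays constant in one sequence it grows by one in the other, contributing $1$ to the difference sequence sensitivity at every step.

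Concretely, for the edge-adjacent case, I would start both sequences with the empty graph on nodes $v_1,\dots,v_{T+2}$. At time $t=1$, insert $\{v_1,v_2\}$ in $\graphseq$ and nothing in $\graphseq'$, so that the two incremental sequences are edge-adjacent on $e^* = \{v_1,v_2\}$. For $t=2,\dots,T$, insert $\{v_t,v_{t+1}\}$ in both sequences. Then $G_t$ is the path $v_1 v_2 \cdots v_{t+1}$ of length $t$ (with isolated nodes), while $G_t'$ is the path $v_2 v_3 \cdots v_{t+1}$ of length $t-1$. Since a path on $\ell$ edges has maximum matching size $\lceil \ell/2 \rceil$, we get $M(G_t) = \lceil t/2 \rceil$ and $M(G_t') = \lfloor t/2 \rfloor$.

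A direct calculation then gives $\Delta M_{\graphseq}(t) = 1$ for odd $t$ and $0$ for even $t$, while $\Delta M_{\graphseq'}(t) = 0$ for odd $t$ and $1$ for even $t$; in particular $|\Delta M_{\graphseq}(t) - \Delta M_{\graphseq'}(t)| = 1$ for every $t\in\{1,\dots,T\}$, so the continuous global sensitivity is at least $T$. For the node-adjacent case, I would instead start both sequences with the isolated nodes $v_2,\dots,v_{T+2}$ and at $t=1$ insert the node $v^* = v_1$ together with the edge $\{v^*,v_2\}$ in $\graphseq$ only, leaving all later updates (which only involve already-present nodes) identical in both sequences. The sequences are then node-adjacent on $v^*$ and the matching sizes evolve exactly as before, giving the same $\Omega(T)$ lower bound.

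I don't foresee a serious obstacle: the arithmetic is immediate once the path parity is fixed, and the only thing to check carefully is that the construction meets the technical side conditions of \cref{def:edge-adjacency} and \cref{def:node-adjacency} (in the node-adjacent case, that $\Eins{t}'$ is indeed the maximal subset of $\Eins{t}$ not incident to $v^*$, and that both endpoints of every inserted edge exist at insertion time). This is purely bookkeeping.
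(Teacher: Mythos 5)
Your edge-adjacent construction is essentially identical to the paper's (a path that grows by one edge each step, with one sequence owning an extra initial edge) and the parity computation is correct, so that half is fine. For the node-adjacent case, however, the construction as stated does not satisfy \cref{def:node-adjacency}. That definition imposes the side condition that \emph{every edge in $\Eins{t}$ must be incident to at least one node in $\Vins{t}$}, i.e., edge insertions may only occur alongside the insertion of an endpoint. In your construction all of $v_2,\dots,v_{T+2}$ are present from time $0$, and at times $t \geq 2$ you insert edges $\{v_t,v_{t+1}\}$ with $\Vins{t} = \emptyset$, which violates this constraint. You flagged two side conditions to check (maximality of $\Eins{t}'$ and existence of endpoints at insertion time), but the one you actually run afoul of is a different one. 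The fix is exactly what the paper does: at each time $t > 1$ insert the new \emph{node} $v_{t+1}$ together with the edge $\{v_t, v_{t+1}\}$ in both sequences, so that each inserted edge is incident to a freshly inserted node; the matching values and the difference sequence are unchanged, and the $\Omega(T)$ bound goes through.
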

\begin{proof}
        First, we construct edge-adjacent graph sequences $\graphseq$, $\graphseq'$,
        where $|\Delta M_\graphseq(t) - \Delta M_{\graphseq'}(t)| = 1$ for all $t$.
        Let $V = \{0,\dots,T\}$.
        At time $t = 1$ insert the edge $\{0,1\}$ into $\graphseq$ and insert no edge into $\graphseq'$.
        $\Delta M_\graphseq(1) = 1$, since $G_1$ has a maximum cardinality matching of size 1.
        $\Delta M_{\graphseq'}(1) = 0$, since $G_1'$ has no edges.
        At every time $t > 1$ insert the edge $\{t-1,t\}$ into both $\graphseq$ and $\graphseq'$.
        If $t$ is even, then $\Delta M_{\graphseq}(t) = 0$ and $\Delta M_{\graphseq'}(t) = 1$.
        If $t$ is odd, then $\Delta M_{\graphseq}(t) = 1$ and $\Delta M_{\graphseq'}(t) = 0$.
        Thus, we have $\sum_{t=1}^T |\Delta M_\graphseq(t) - \Delta M_{\graphseq'}(t)| = T$.

        For node-adjacency, both graph sequences start with the node 1.
        In $\graphseq$, we insert node $0$ along with the edge $\{0,1\}$ at time 1.
        Then, at every time step $t > 1$ we insert node $t$ and the edge $\{t, t-1\}$ into both graph sequences.
        The difference sequence is the same as in the edge-adjacency case.
        Thus, the continuous global sensitivity of the difference sequence is also at least $T$.
\end{proof}
\begin{lemma}
        Let $\wcut(G)$ be the weight of a minimum cut of $G$.
        The difference sequence of $\wcut$ has continuous global sensitivity at least $T$ when considering edge-adjacent or node-adjacent partially dynamic graph sequences.
\end{lemma}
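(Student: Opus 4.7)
The plan is to construct, in the spirit of the preceding lemma, two adjacent incremental graph sequences $\graphseq$ and $\graphseq'$ of length $T$ with $|\Delta\wcut_\graphseq(t)-\Delta\wcut_{\graphseq'}(t)|\ge 1$ at every time step, summing to at least $T$. The mechanism is to maintain two candidate minimum cuts $C_1$ and $C_2$ whose weights are kept roughly equal and whose order (i.e., which one is currently smaller) flips at every step; the single extra edge/node distinguishing $\graphseq$ from $\graphseq'$ shifts $w(C_1)$ by exactly one, so the flips occur on opposite parities in the two sequences.

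For edge-adjacency, take clusters $A_1=\{a_1^0\}$, $B=\{b^0\}$, and $A_2=\{a_2^0,\ldots,a_2^T\}$, and let $G_0$ contain every intra-$A_2$ edge with weight $W:=T+1$ together with the single weight-$1$ edge $\{a_2^0,b^0\}$. With $C_1:=(A_1:A_2\cup B)$ and $C_2:=(A_1\cup A_2:B)$, any cut that does not split along the cluster boundaries crosses an intra-$A_2$ edge of weight $W>T$, so throughout the construction $\wcut=\min(w(C_1),w(C_2))$. Insert in $\graphseq$ only, at $t=1$, the edge $e^*=\{a_1^0,a_2^0\}\in C_1\setminus C_2$ of weight $1$; for $t\ge 2$ both sequences receive the same weight-$1$ update, namely $\{a_1^0,a_2^t\}$ if $t$ is even (contributing to $w(C_1)$) and $\{a_2^t,b^0\}$ if $t$ is odd (contributing to $w(C_2)$). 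A direct case analysis on the parity of $t$ shows $w(C_1)_\graphseq=w(C_1)_{\graphseq'}+1$ and $w(C_2)_\graphseq=w(C_2)_{\graphseq'}$ at every step, so at each step exactly one of $\wcut_\graphseq,\wcut_{\graphseq'}$ rises by $1$ while the other stays, giving $|\Delta\wcut_\graphseq(t)-\Delta\wcut_{\graphseq'}(t)|=1$.

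For node-adjacency, the definition requires every inserted edge to be incident to a freshly inserted node, so at each $t\ge 2$ both sequences instead insert a new vertex $u_t$ together with a weight-$W$ anchor edge $\{u_t,a_2^0\}$ (tying $u_t$ into $A_2$) and the same weight-$1$ cross edge as above, now $\{u_t,a_1^0\}$ or $\{u_t,b^0\}$ according to the parity of $t$. At $t=1$, $\graphseq$ inserts a distinguished vertex $v^*\in A_1$ with a weight-$W$ edge $\{v^*,a_1^0\}$ and a weight-$1$ edge $\{v^*,a_2^0\}$ playing the role of $e^*$, while $\graphseq'$ performs no update. The sequences are node-adjacent on $v^*$, and the same weight accounting goes through verbatim. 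The main technical step in both cases is verifying that no alternative cut (a single-vertex isolation or a cluster-splitting partition) ever undercuts $\min(w(C_1),w(C_2))$; this is immediate from the choice $W=T+1$ exceeding the maximum value $\max_t\max(w(C_1),w(C_2))\le T$ reached in our construction.
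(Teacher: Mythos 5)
Your proof is correct and follows essentially the same strategy as the paper's: set up two candidate cuts whose weights are incremented on alternating parities, with the single extra edge (resp.\ node) in $\graphseq$ causing the increments to the minimum cut value to land on the opposite parity from $\graphseq'$, so that $|\Delta\wcut_\graphseq(t)-\Delta\wcut_{\graphseq'}(t)|=1$ at every step. The paper realizes this with three heavy internally-connected clusters $V_A,V_B,V_C$ and weight-$W$ insertions, while you use a single heavy clique $A_2$ flanked by two singletons and unit-weight insertions, which is a slightly more economical realization of the same idea; both reduce to the same competing-cut accounting argument. One small note of precision: in your construction the identity of the minimum cut does not actually ``flip at every step'' within a single sequence (e.g.\ in $\graphseq$ the cut $C_2$ is always a minimizer after $t=1$, with ties); what matters, and what your calculations correctly establish, is only that $\wcut_\graphseq$ increments on odd $t$ while $\wcut_{\graphseq'}$ increments on even $t$.
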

\begin{proof}
        We show that for any $T \in \nats$ there exist edge-adjacent graph sequences $\graphseq$, $\graphseq'$, such that
        $\sum_{t=1}^T |\Delta \wcut_\graphseq (t) - \Delta \wcut_{\graphseq'}(t)| \geq T$.
        
        Let $A = (V_A, E_A)$, $B = (V_B, E_B)$, $C = (V_C, E_C)$ be graphs on at least $T+2$ nodes with minimum cut of weight at least $W\cdot (T+1)$.
        We denote nodes in $V_A$ by $a_i$, nodes in $V_B$ by $b_i$ and nodes in $V_C$ by $c_i$, with index $i = 0,1,\dots$.
        Let $G_0 = (V_A \cup V_B \cup V_C, E_A \cup E_B \cup E_C \cup \{e_{ab}, e_{bc}\}$, where
        $e_{ab} = \{a_{T+1},b_{T+1}\}$ and $e_{bc} = \{b_{T+1},c_{T+1}\}$ with $w(e_{ab}) = 1$ and $w(e_{bc}) = 2$.
        Let $\graphseq$, $\graphseq'$ be graph sequences with initial graph $G_0$.
        
        At time $t=1$ we insert the edge $e^* = \{a_0,b_0\}$ with $w(e^*) = W$ into $\graphseq$.
        Then, if $t$ is even, we insert the edge $e_t = \{a_t, b_t\}$ with $w(e_t) = W$ into both $\graphseq$ and $\graphseq'$;
        If $t$ is odd, we insert the edge $e_t = \{b_t,c_t\}$ with $w(e_t) = W$ into both $\graphseq$ and $\graphseq'$.
        
        These graph sequences are edge-adjacent since they differ only in the insertion of $e^*$.
        At any time $t$ the minimum cut in $G_t$ is $(V_A\cup V_B, V_C)$.
        In $\graphseq'$, the minimum cut is $(V_A, V_B\cup V_C)$ for odd $t$ and $(V_A\cup V_B, V_C)$ for even $t$.
        
        In $\graphseq$, the value of the minimum cut increases by 1 at time $t=1$ due to insertion of $e^*$
        and by $W$ for odd $t \geq 3$, since an edge of weight $W$ across the minimum cut is inserted.
        That is,
        \begin{equation}
                \label{eq:g-cut-delta}
                \Delta \wcut_\graphseq(t) = \begin{cases}
                                                1       & \text{if } t = 1, \\
                                                0       & \text{if } t \text{ is even,} \\
                                                W       & \text{if } t \text{ is odd and } t \geq 3. \\
                                            \end{cases}
        \end{equation}

        In $\graphseq'$, the value of the minimum cut does not change at time $t=1$, since no edge is inserted.
        At even times, the cut switches to the edges between $V_B$ and $V_C$ and the weight of the minimum cut increases by $1$.
        At odd times, the cut switches to the edges between $V_A$ and $V_B$, and the weight of the minimum cut increases by $W-1$.
        The difference sequence for $\graphseq'$ is thus,
        \begin{equation}
                \label{eq:gprime-cut-delta}
                \Delta \wcut_{\graphseq'}(t) = \begin{cases}
                                                0       & \text{if } t = 1, \\
                                                1       & \text{if } t \text{ is even,} \\
                                                (W-1)   & \text{if } t \text{ is odd and } t \geq 3. \\
                                            \end{cases}
        \end{equation}
        From \eqref{eq:g-cut-delta} and $\eqref{eq:gprime-cut-delta}$ we get
        $|\Delta \wcut_\graphseq(t) - \Delta \wcut_{\graphseq'}(t)| = 1$ for all $t$.
        Summing over all $t$ we get the proposed bound on the continuous global sensitivity.

        We obtain the same bound for node-adjacent partially dynamic graph sequences by modifying the above construction as follows:
        we start with the same initial graph $G_0$ as above.
        We replace an insertion of the edge $\{a_i,b_i\}$ by an insertion of node $a_i'$
        along with edges $\{a_i',b_i\}$ and edges $\{a_i', a_j\}$ for all $a_j \in V_A$;
        all these edges have weight $W$.
        Similarly, we replace an insertion of the edge $\{b_i,c_i\}$ by an insertion of node $c_i'$
        along with edges $\{c_i',b_i\}$ and edges $\{c_i', c_j\}$ for all $c_j \in V_C$;
        all these edges have weight $W$.
        By counting the nodes $a_i'$ and $c_i'$ to $V_A$ and $V_C$, respectively,
        we obtain the same sequence of minimum cuts as in the edge-adjacency case.
        Thus, we get the same lower bound on global sensitivity for the node-adjacency case.
\end{proof}
}
\opt{full}{\GSproofs} 

The difference sequence approach can be employed to privately estimate the weight of a minimum spanning tree in partially dynamic graph sequences.
If the edge weight is bounded by $W$, then the continuous global sensitivity of the difference sequence is $O(W)$ and $O(DW)$ under edge-adjacency and node-adjacency, respectively.
\opt{full}{In \cref{thm:mst-gs-edge,thm:mst-gs-node} we prove these bounds and use \cref{cor:graph-bin-mech} to derive edge- and node-differentially private algorithms.}
\opt{conf}{Using \cref{cor:graph-bin-mech} we obtain the algorithms outlined in \cref{thm:mst-gs-edge,thm:mst-gs-node}.}

\begin{restatable}{theorem}{mstedgegslemma}
        \label{thm:mst-gs-edge}
        Let $\graphseq$ be an incremental graph sequence such that  each edge weight belongs to the set $\{1, \dots, W\}$, where $W$ is a publicly known parameter.
        There exists an $\eps$-edge-differentially private algorithm that outputs the weight of a minimum spanning tree on $\graphseq$.
        At every time step, the algorithm has error $O(W\eps^{-1}\cdot \log^{3/2} T \cdot \log \delta^{-1})$ with probability at least $1-\delta$, where $T$ is the length of the graph sequence.
\end{restatable}
\newcommand{\mstedgegsproof}{
\begin{proof}
        We will show that the difference sequence for the weight of a minimum spanning tree has global edge-sensitivity $2W-2$ in the incremental setting. The claim then follows by \cref{cor:graph-bin-mech}.

        Let $\graphseq = (G_1,\dots,G_T)$, $\graphseq' = (G_1',\dots,G_T')$ be two adjacent graph sequences with initial graphs $G_0$ and $G_0'$.
        We are considering edge-differential privacy and thus edge-adjacency (\cref{def:edge-adjacency}).
        This means that $\graphseq$ and $\graphseq'$ differ in the insertion of an edge $e^*$.
        Without loss of generality, we assume that $e^*$ is inserted into $\graphseq$.
        For estimating $\GS(\Delta \wmst)$ we can further assume that $e^*$ is inserted into $G_0$ to obtain $G_1$, since $\Delta \wmst_\graphseq(t) = \Delta\wmst_{\graphseq'}(t)$ for all times $t$ before the insertion of $e^*$.
        For all $t = 0,\dots,T$ the edge set of $G_t$ is a superset of the edge set of $G_t'$.
        This implies $\wmst(G_t) \leq \wmst(G_t')$ for all $t$.

        Note that $\wmst(G_0) = \wmst(G_0') = \wmst(G_1')$, which implies $\Delta \wmst{\graphseq'}(1) = 0$.

                As $e^*$ might either not replace any edge in a minimum spanning tree or might replace an edge of weight $W$, it holds that
                $0 \le \Delta \wmst{\graphseq'}(1) \leq W - 1$. Thus, it follows that $0 \le \Delta\wmst_{\graphseq}(1) - \Delta\wmst_{\graphseq'}(1)\leq W-1$.

        It remains to bound $\sum_{t=2}^T |\Delta\wmst_{\graphseq}(t) - \Delta\wmst_{\graphseq'}(t)|$.
        Any edge inserted into $\graphseq$ is also inserted into $\graphseq'$.
        If inserting an edge into $\graphseq$ at time $t$ reduces the weight of the minimum spanning tree of $\graphseq$ at time $t$,
        then the weight of the minimum spanning tree in $\graphseq'$ at time $t$ is reduced by at least the same amount.
        Thus there are two cases: (1) the weight of the minimum spanning tree in $\graphseq$ and $\graphseq'$ change by the same amount;
        (2) the weight of the minimum spanning tree in $\graphseq'$ is reduced more than the weight of a minimum spanning tree in $\graphseq$.
                It follows that for all $t > 0$, $\Delta\wmst_{\graphseq}(t) - \Delta\wmst_{\graphseq'}(t) \ge 0$.
                Thus,
                $\sum_{t=2}^T |\Delta\wmst_{\graphseq}(t) - \Delta\wmst_{\graphseq'}(t)|$
                $= \sum_{t=2}^T \Delta\wmst_{\graphseq}(t) - \Delta\wmst_{\graphseq'}(t)$
								$=\sum_{t=2}^T \Delta\wmst_{\graphseq}(t) - \sum_{t=2}^T \Delta\wmst_{\graphseq'}(t)$.
								But then it follows that
								\begin{equation*}
                \sum_{t=2}^T \Delta\wmst_{\graphseq}(t) - \sum_{t=2}^T \Delta\wmst_{\graphseq'}(t)
                =\wmst(G_T) - \wmst(G_1) - \wmst(G_T') + \wmst(G_1') \le W-1,
								\end{equation*}
                where the last inequality follows from $\wmst(G_1') - \wmst(G_1) \leq W-1$ and $\wmst{G_{T}} \le \wmst{G_{T}'}$. 

        Combining the results for $t=1$ and $t>1$, we get
        \begin{equation*}
                \sum_{t=1}^T |\Delta\wmst_{\graphseq}(t) - \Delta\wmst_{\graphseq'}(t)| \leq 2W - 2,
        \end{equation*}
        which implies $\GS(\Delta\wmst)\leq 2W-2$.

        We now show that this bound on the sensitivity is tight.
        Let $A = (V_A = \{a_1,\dots\},E_A)$, $B=(V_B = \{b_1,\dots\},E_B)$ be arbitrary connected graphs
        with $|A|,|B| > W+1$ and $w(e) = 1$ for all $e \in E_A \cup E_B$.
        Let $G = (V_A \cup V_B, E_A \cup E_B \cup \{a_1,b_1\})$ and $w(\{a_1,b_1\}) = W$.
        Note that the edge $\{a_1,b_1\}$ is guaranteed to be in any minimum spanning tree.

        We construct adjacent graph sequences $\graphseq_a = (G_{a1}, G_{a2},G_{a3})$, $\graphseq_b = (G_{b1}, G_{b2},G_{b3})$
        with $G_{a1} = G_{b1} = G$,
        such that $\sum_{t=1}^{|\graphseq_a|} |\Delta\wmst_{\graphseq_a}(t) - \Delta\wmst_{\graphseq_b}(t)| = 2W-2$.
        
        Starting with $G_{a1} = G$ we insert the edge $\{a_2,b_2\}$ with $w(\{a_2,b_2\}) = 1$ into $\graphseq_a$ to obtain $G_{a2}$.
        In $\graphseq_b$ we insert no edges, i.e., $G_{b2} = G_{b1} = G$.
        Inserting the unit-weight edge into $\graphseq_a$ reduces the weight of a minimum spanning tree by $W-1$.
        In $\graphseq_b$, the weight of a minimum spanning tree does not change.
        Thus, we have $|\Delta\wmst_{\graphseq_a}(2) - \Delta\wmst_{\graphseq_b}(2)| = W-1$.
        
        In the subsequent time step we insert the edge $\{a_3,b_3\}$ with $w(\{a_3,b_3\}) = 1$ into $\graphseq_a$ and $\graphseq_b$,
        which yields the graphs $G_{a3}$ and $G_{b3}$, respectively.
        This does not change the weight of the minimum spanning tree in $\graphseq_a$,
        but reduces the weight of the minimum spanning tree in $\graphseq_b$ by $W-1$,
        since $\{a_3,b_3\}$ replaces $\{a_1,b_1\}$ in the minimum spanning tree.
        Thus, we have $|\Delta\wmst_{\graphseq_a}(3) - \Delta\wmst_{\graphseq_b}(3)| = W-1$.
        
        This concludes the proof.
\end{proof}
} \opt{full}{\mstedgegsproof}

\begin{restatable}{theorem}{mstnodegslemma}
        \label{thm:mst-gs-node}
        Let $\graphseq$ be an incremental graph sequence such that every node is incident to at most $D$ edges and each edge weight belongs to the set $\{1, \dots, W\}$, where $D$ and $W$ are publicly known parameters.
        There exists an $\eps$-node-differentially private algorithm that outputs the weight of a minimum spanning tree on $\graphseq$.
        At every time step, the algorithm has error $O(DW\eps^{-1}\cdot \log^{3/2} T \cdot \log \delta^{-1})$ with probability at least $1-\delta$, where $T$ is the length of the graph sequence.
\end{restatable}
\newcommand{\mstnodegsproof}{
\begin{proof}
        We will show that the difference sequence for the weight of a minimum spanning tree has continuous global sensitivity $2DW$ under node-adjacency in the incremental setting. The claim then follows by \cref{cor:graph-bin-mech}.

        Let $\graphseq = (G_1,\dots,G_T)$, $\graphseq' = (G_1',\dots,G_T')$ be two adjacent graph sequences with initial graphs $G_0$ and $G_0'$.
        We are considering node-differential privacy and thus node-adjacency (\cref{def:node-adjacency}).
        This means that $\graphseq$ and $\graphseq'$ differ in the insertion of a node $v^*$.
        Without loss of generality, we assume that $v^*$ is inserted into $\graphseq$.
        For estimating $\GS(\Delta \wmst)$ we can further assume that $v^*$ is inserted into $G_0$ to obtain $G_1$, since $\Delta \wmst_\graphseq(t) = \Delta\wmst_{\graphseq'}(t)$ for all times $t$ before the insertion of $v^*$.

        The node $v^*$ has degree at most $D$, so we have $|\Delta\wmst_\graphseq(1) - \Delta\wmst_{\graphseq'}(1)| \leq DW$.
        
        We now bound $\sum_{t=2}^T |\Delta\wmst_\graphseq(t) - \Delta\wmst_{\graphseq'}(t)| =: \Gamma_2$.
        Let $E^*$ be the set of edges incident to $v^*$. Note that $|E^*| \leq D$.
        For each edge $e \in E^*$ there are three cases:
        (1) $e$ is not in the minimum spanning tree when it is inserted;
        (2) $e$ replaces a heavier edge;
        (3) $e$ increases the weight of the minimum spanning tree.
        
        In case (1), there always exists a minimum spanning tree that does not contain $e$, since no edges are ever deleted.
                Thus, $e$ does not contribute to the continuous global sensitivity of the difference sequence.
        To analyze case (2), let $e_H$ be the edge replaced by $e$.
                This edge may be replaced in $\graphseq'$ at times different from the time at which $e$ was inserted.
                However, this contributes at most $w(e_H) \leq W$ to $\Gamma_2$.
        In case (3), the weight that $e$ contributes to the minimum spanning tree in $\graphseq$ can be reduced by at most $W-1$.
                However, the edge used to replace $e$ may increase the weight of a minimum spanning tree in $\graphseq'$ by 1.
                Thus, $e$ contributes up to $W$ to $\Gamma_2$.
        
        With $|E^*| \leq D$, we thus have $\Gamma_2 \leq DW$.
        This implies
        \begin{align*}
                \GS(\Delta\wmst) &= \sum_{t=1}^T |\Delta\wmst_\graphseq(t) - \Delta\wmst_{\graphseq'}(t)| \\
                                 &= |\Delta\wmst_\graphseq(1) - \Delta\wmst_{\graphseq'}(1)| + \Gamma_2 \\
                                 &\leq 2DW. \qedhere
        \end{align*}
\end{proof}
} \opt{full}{\mstnodegsproof}

\section{Upper Bound for Monotone Functions}
\label{sec:monotone}

In \cref{sec:bounds-global-sensitivity}, we show that privately releasing the difference sequence of a graph sequence does not lead to good error guarantees for partially dynamic problems like minimum cut. Intuitively, the reason is that even for neighboring graph sequences $\graphseq, \graphseq'$, the differences of the difference sequence can be non-zero for all graphs $G_i, G'_i$. In other words, the difference of objective values for the graphs $G_i$ and $G'_i$ can constantly fluctuate during continual updates. However, the difference of objective values, regardless of fluctuations, is \emph{always small}. We show that, by allowing an arbitrarily small \emph{multiplicative} error, we can leverage this fact for a broad class of partially dynamic problems. In particular, we prove that there exist $\eps$-differentially private algorithms for all dynamic problems that are non-decreasing (or non-increasing) on all valid input sequences. This includes, e.g., minimum cut, maximum matching and densest subgraph on partially dynamic inputs. See \cref{alg:monotone} for the details and \cref{tbl:monotone-results} on page~\pageref{tbl:monotone-results} for explicit upper bounds for applications. We state the result for monotonically increasing functions, but it is straightforward to adapt the algorithm to monotonically decreasing functions.\opt{confpre}{ The proof is in \cref{sec:monotone-proof}.}

\SetKwFunction{FnInitMon}{Initialize}
\SetKwFunction{FnMon}{Process}
\begin{algorithm}
        \Fn{\FnInitMon{$\mathcal{D}, \rho, \epsilon, r, \beta$}}{
                $k_0 \gets 0$ \;
                \FnInitSvt($\mathcal{D}, \rho, \epsilon, \log_{1+\beta}(r)$) \tcp*{see \cref{alg:svt}}
        }
	\Fn{\FnMon{$f_i$}}{
                $k_i \gets k_{i-1}$ \;
                \While(\tcp*[f]{see \cref{alg:svt}}){$\FnSvt(f_i, (1+\beta)^{k_i}) = \top$}{
                        $k_i \gets k_i + 1$ \;
                }
                \Return $(1+\beta)^{k_i}$ \;
	}
	\caption{\label{alg:monotone} Multiplicative error algorithm for monotone functions}
\end{algorithm}

\begin{theorem}[restate=TechnicalMonotone]
        \label{thm:technical-monotone}
        Let $r > 0$ and let $f$ be any monotonically increasing function on dynamic inputs (e.g., graphs) with range $[1, r]$ and static global sensitivity $\rho := \GSstatic(f)$, where $r$ and $\rho$ are publicly known parameters. Let $\beta \in (0,1), \delta > 0$ and let $\alpha = 16 \log_{1+\beta}(r) \rho \cdot \ln(2T / \delta) / \eps$. There exists an $\eps$-differentially private algorithm for computing $f$ with multiplicative error $(1+\beta)$, additive error $\alpha$ and failure probability $\delta$.
\end{theorem}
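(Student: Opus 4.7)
\textbf{\sffamily Proof proposal for \cref{thm:technical-monotone}.}
The plan is to view the algorithm as a single run of SVT with $c = \lceil \log_{1+\beta}(r) \rceil$ allowed positive answers, and to argue separately about privacy (which follows from SVT together with monotonicity) and about utility (which follows from a union bound on Laplace tails plus the definition of the thresholds). Concretely, we initialize SVT on the first input and then, at every time step $i$, issue a sequence of threshold queries with function $f$ and thresholds $(1+\beta)^{k}$ for $k = k_{i-1}, k_{i-1}+1, \ldots$, incrementing $k$ as long as SVT returns $\top$ and releasing $(1+\beta)^{k_i}$ once it returns $\bot$ (or aborts).

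For privacy, the crucial observation is that the counter $k_i$ is non-decreasing across time steps (since we initialize $k_i \gets k_{i-1}$ and only ever increase it) and is upper-bounded by $\log_{1+\beta}(r)$ because $f \leq r$ and $f$ is monotone: as soon as $k_i$ exceeds $\log_{1+\beta}(r)$, no true value $f(t)$ can reach the threshold, so with high probability SVT returns $\bot$; but even without this high-probability argument, the explicit budget $c$ in \FnInitSvt forces termination after at most $c$ positive answers. Consequently, the total number of positive SVT answers over the entire run is at most $c = \lceil \log_{1+\beta}(r) \rceil$, so \cref{thm:svt} applied with this budget $c$ yields $\eps$-differential privacy of the composite algorithm, with sensitivity parameter $\rho = \GSstatic(f)$.

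For utility, I would condition on the event $\mathcal{E}$ that $|\zeta| \leq \eta$ and $|\nu_j| \leq \eta$ for every SVT query $j$ issued during the run, where $\eta = \Theta\bigl(c \rho \log(T/\delta)/\eps\bigr)$. Since there are at most $T + c$ queries in total (at most one negative query per time step, plus at most $c$ positive queries overall), a union bound over Laplace tails shows $\Pr[\mathcal{E}] \geq 1-\delta$ for our choice of $\alpha = \Theta(\eta)$. On $\mathcal{E}$, the last positive response (at threshold $(1+\beta)^{k_i-1}$) gives $f(t) \geq (1+\beta)^{k_i-1} - 2\eta$, and the terminating negative response (at threshold $(1+\beta)^{k_i}$) gives $f(t) \leq (1+\beta)^{k_i} + 2\eta$. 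The first inequality yields $(1+\beta)^{k_i} \leq (1+\beta) f(t) + 2(1+\beta)\eta$, and the second yields $(1+\beta)^{k_i} \geq f(t) - 2\eta \geq f(t)/(1+\beta) - 2\eta$; either bound is within multiplicative factor $(1+\beta)$ and additive error $\alpha$ of $f(t)$. The edge case $k_i = k_{i-1}$ (no positive query at time $i$) is handled inductively using the bounds inherited from the time step at which $k_i$ was last incremented, together with the monotonicity $f(t) \geq f(t-1)$.

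The step I expect to require the most care is the bookkeeping for the additive error, specifically tracking the correct union bound over both the single variable $\zeta$ and the at most $T+c$ perturbations $\nu_j$, and verifying that the constants combine to give exactly $\alpha = 16 \log_{1+\beta}(r) \rho \ln(2T/\delta)/\eps$ as stated. All the remaining pieces, including adaptivity of the thresholds (they depend on the random $k_{i-1}$) and the treatment of an SVT \emph{abort} signal, are handled by the existing guarantees of \cref{thm:svt} as long as $c$ is fixed in advance and the sensitivity $\rho$ bounds the static global sensitivity of $f$ per query.
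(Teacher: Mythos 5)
Your proposal is correct and follows essentially the same approach as the paper's proof: reduce to SVT (\cref{alg:svt}) with budget $c = \lceil\log_{1+\beta}(r)\rceil$, use monotonicity to argue that the thresholds only increase so the budget is never exceeded and the lower bound on $(1+\beta)^{k_i}$ can be inherited from the time step of the last positive answer, and condition on all Laplace noises being small (via a union bound over the at most $T+c$ queries) to obtain the utility guarantee. The paper additionally makes one observation that you subsume under ``handled by \cref{thm:svt}'': SVT query $f_i$ only reads $G_i$, and prefixes of adjacent graph sequences are themselves adjacent, which is the formal reason the offline SVT argument transfers to the online/dynamic setting; it is worth stating this explicitly when writing the proof in full.
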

\newcommand{\MonotoneProof}{
\begin{proof}
        We reduce the dynamic problem to a problem that can be solved using the sparse vector technique: Assume that we are given the whole sequence of dynamic updates as a database $\mathcal{D}$ in advance. We will remove this assumption later. 
        
        Formally, let $\mathcal{D} := \graphseq = (G_1, \ldots, G_T)$ be the input to the dynamic problem and let $f_i$ be the function that evaluates $f$ on the $i$-th entry $G_i$ (e.g., graph) of such a database. To initialize our algorithm, we run $\FnInitMon(\mathcal{D}, \rho, \epsilon, r, \beta)$. Then we call $\FnMon(f_i)$ for all $f_i$, $i \in [T]$. We claim that the output of $\FnMon(f_i)$ is an approximation to $f_i(\mathcal{D}) = f(G_i)$ with multiplicative error $(1+\beta)$, additive error $\alpha$ and failure probability $\delta$ over the coins of the whole algorithm, and that the algorithm is $\epsilon$-differentially private.

        \textbf{\sffamily Error.}  We condition on the event that all noises in \cref{alg:svt} are small, i.e., $\lvert \zeta \rvert < \alpha / 4$ and, for all $i$, $\lvert \nu_i \rvert < \alpha / 4$. It follows from the definition of the Laplace distribution and a union bound over all queries that this event occurs with probability at least $1 - e^{- \alpha \eps_1 / (4\rho)} - T \cdot e^{- \alpha \eps_2 / (8 \log_{1+\beta}(r) \rho)} \geq 1 - \delta$. Consider query $f_i(\mathcal{D})$ and let $f_i' := f_i(\mathcal{D}) + \nu_i - \zeta$. After the while-loop of $\FnMon(f_i)$, we know that $f_i(\mathcal{D}) - \alpha/2 \leq (1+\beta)^{k_i}$ and $f_i(\mathcal{D}) + \alpha/2 \geq f_{i-1}(\mathcal{D}) + \alpha/2 \ge (1+\beta)^{k_i-1}$ due to the monotonicity of $f_1(\mathcal{D}), ..., f_T(\mathcal{D})$. We analyze the return value of $\FnMon(f_i)$ and show that it is a $(1+\beta)$ approximation with additive error~$\alpha$:
        \begin{gather*}
                f_i(\mathcal{D}) - \alpha
                        \leq \left(\left( 1+\beta \right)^{k_i} + \frac{\alpha}{2} \right) - \alpha
                        \leq \left( 1+\beta \right)^{k_i} \\
                (1+\beta)f_i(\mathcal{D}) + \alpha
                        \geq (1+\beta) \left(\left( 1+\beta \right)^{k_i-1} - \frac{\alpha}{2} \right) + \alpha
                        \geq \left(1+\beta \right)^{k_i}.
        \end{gather*}
        Since $f(G_T) \leq r = (1+\beta)^{\log_{1+\beta}(r)}$, it follows that the algorithm has a multiplicative error $(1+\beta)$ and an additive error of at most $\alpha$ with probability at least $1 - \delta$.
        
        \textbf{\sffamily Privacy.} The privacy follows directly from \cref{thm:svt} because the return value $(1+\beta)^{k_i}$ is derived from the answers of \cref{alg:svt}, which is $\epsilon$-private, and public knowledge (i.e., the fact that $k_0 = 0$).

        \textbf{\sffamily Removing the assumption.} Observe that to answer query $f_i$ on $\mathcal{D}$, \cref{alg:svt} only needs access to $G_i$ or, equivalently, access to the first $i$ updates of the dynamic update sequence. In other words, $G_i$ is read for the first time after answering query $f_{i-1}$. Therefore, we may ask all previous queries $f_j$, $j < i$, even if $G_i$ is not available yet, e.g., when processing update $j$. For dynamic graph sequences and edge-adjacency or node-adjacency, any pair of same-length prefixes of two graph sequences $\graphseq_1, \graphseq_2$ is neighboring if $\graphseq_1$ and $\graphseq_2$ are neighboring (see \cref{def:edge-adjacency,def:node-adjacency}). Therefore, privacy is not affected by making $G_i$ only available to \cref{alg:monotone} just before the $i$-th query, as required by the dynamic setting.
\end{proof}
} \opt{full}{\MonotoneProof}

\section{Lower Bounds for Event-Level Privacy}
\label{sec:edge-dp-lower-bounds}
\opt{full}{
In this section we show lower bounds for the error of edge-differentially private algorithms in the partially dynamic setting.
We consider the problem of releasing the weight of a minimum spanning tree, the minimum cut and maximum weight matching.
To derive the bounds we reduce differentially private counting in binary streams to these problems and apply a lower bound of Dwork \etal\ \cite{dwork10}, which we restate here.
}
\opt{conf}{
We can show lower bounds for the error of edge- and node-differentially private algorithms in the partially dynamic setting.
We derive the bounds by reducing differentially private counting in binary streams to these problems and apply a lower bound of Dwork \etal\ \cite{dwork10}, which we restate here.
}
\begin{theorem}[Lower bound for counting in binary streams \cite{dwork10}]
        \label{thm:counting-lower-bound}
        Any differentially private event-level algorithm for counting over $T$ rounds must have error $\Omega(\log T)$ (even with $\eps = 1$).
\end{theorem}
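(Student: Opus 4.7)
The plan is a packing argument combined with group privacy, following the template of Dwork, Naor, Pitassi and Rothblum~\cite{dwork10}. Suppose for contradiction that $M$ is an $\eps$-differentially private event-level continual counter whose additive error at every time step is at most $\alpha$ with probability at least $2/3$. I will derive a lower bound on $\alpha$ by exhibiting a family of streams that $M$ must distinguish, and then showing that event-level privacy forbids distinguishing them when $\alpha$ is too small.

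First, construct a family of binary streams $\sigma^{(0)},\dots,\sigma^{(K-1)}$ of length $T$, where $\sigma^{(k)}$ consists of $(2\alpha+1)k$ leading ones followed by zeros, with $K = \lfloor T/(2\alpha+1) \rfloor$. By the accuracy assumption, the output of $M(\sigma^{(k)})$ at time $T$ lies in the interval $I_k = [(2\alpha+1)k - \alpha,\,(2\alpha+1)k + \alpha]$ with probability at least $2/3$, and the intervals $I_k$ are pairwise disjoint by construction.

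Next, apply group privacy. Since $\sigma^{(k)}$ and the all-zeros stream $0^T$ differ in $(2\alpha+1)k$ events, event-level $\eps$-differential privacy gives $\Pr[M(0^T) \in I_k] \geq (2/3)\,e^{-(2\alpha+1)k\eps}$. Summing over the pairwise disjoint events $\{M(0^T) \in I_k\}_{k}$ yields the constraint $(2/3) \sum_{k=0}^{K-1} e^{-(2\alpha+1)k\eps} \leq 1$, a geometric series inequality in $\alpha$.

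The main obstacle is obtaining the $\log T$ factor: evaluating the simple geometric sum above only yields the weaker bound $\alpha = \Omega(1/\eps)$, since the sum is dominated by its first few terms and the $K$ factor disappears. To boost the bound to $\Omega(\log T)$ one must exploit the continual-observation structure by requiring accuracy at $\Omega(\log T)$ time steps simultaneously, e.g., at the doubling scales $2^0, 2^1, \dots, 2^{\lfloor\log T\rfloor}$. Concretely, one constructs a multi-scale packing whose streams differ in their counts at each scale, so that an accurate mechanism must commit to one of $\Omega(2^j / \alpha)$ candidate counts at time $2^j$; aggregating the group-privacy penalties via a telescoping argument across the $\log T$ scales then forces $\alpha = \Omega(\log T)$ even for constant $\eps$. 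This multi-scale construction is the technical heart of~\cite{dwork10}.
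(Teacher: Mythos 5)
The paper does not prove this theorem; it imports it verbatim from Dwork et~al.~\cite{dwork10}, so there is no in-paper proof to compare against. Your framework --- a packing argument combined with group privacy --- is the right one, and you correctly observe that your specific geometric-series calculation collapses to $\alpha = \Omega(1/\eps)$. But your diagnosis, that one must now layer on a ``multi-scale'' construction across doubling time scales, is a detour: the weakness is your choice of streams, not a fundamental limitation of single-scale packing. Because $\sigma^{(k)}$ has $(2\alpha+1)k$ leading ones, the Hamming distance from $0^T$ grows linearly in $k$, so the group-privacy factor $e^{-(2\alpha+1)k\eps}$ decays geometrically and the sum is dominated by $O(1)$ terms --- the number $K$ of streams never enters the bound. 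As written, the argument is therefore not a proof: the multi-scale step is gestured at but not carried out, and it is in any case not the mechanism that closes the gap.

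The fix is to keep every stream at a \emph{uniform} Hamming distance $\Theta(\alpha)$ from the reference $0^T$ while still making them pairwise distinguishable. Set $K := \lfloor T/(4\alpha+2) \rfloor$ and let $\sigma^{(k)}$ be all zeros except for a block of $2\alpha+1$ consecutive ones starting at position $k(4\alpha+2)+1$. Every $\sigma^{(k)}$ is at Hamming distance exactly $2\alpha+1$ from $0^T$. Let $G_k$ be the set of output sequences within additive error $\alpha$ of the true counts of $\sigma^{(k)}$ at \emph{every} time step; by assumption $\Pr[M(\sigma^{(k)}) \in G_k] \geq 2/3$. The $G_k$ are pairwise disjoint: for $k<k'$ and $t := k(4\alpha+2)+(2\alpha+1)$, the true counts of $\sigma^{(k)}$ and $\sigma^{(k')}$ at time $t$ are $2\alpha+1$ and $0$, so no output can be $\alpha$-close to both. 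Group privacy then gives
\begin{equation*}
1 \;\geq\; \sum_{k=0}^{K-1} \Pr\bigl[M(0^T) \in G_k\bigr] \;\geq\; K \cdot \tfrac{2}{3} \cdot e^{-(2\alpha+1)\eps},
\end{equation*}
so $\ln K \leq (2\alpha+1)\eps + \ln(3/2)$, i.e.\ $\alpha = \Omega\bigl(\log(T/\alpha)/\eps\bigr)$, which is $\Omega(\log T/\eps)$ once $\alpha \leq \sqrt T$ (the complementary case being trivial). For $\eps = 1$ this is $\Omega(\log T)$. The continual-observation structure is exploited simply by having the decoder read the counts at the block boundaries rather than only at time $T$; no aggregation across $\log T$ doubling scales is required.
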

\opt{conf}{
We obtain a lower bound of $\Omega(W\log T)$ for minimum cut, maximum weighted matching and minimum spanning tree.
The same approach yields a lower bound of $\Omega(\log T)$ for the subgraph-counting problems, counting the number of high-degree nodes and the degree histogram.
See \cref{tbl:sensitivity-results} on page~\pageref{tbl:sensitivity-results}.
}Note that any lower bound for the incremental setting can be transferred to the decremental setting, using the same reductions but proceeding in reverse.
\opt{confpre}{Details and proofs are provided in \cref{sec:edge-dp-lower-bounds-proofs}.
}

\newcommand{\inclowerboundsection}{
\subsection{Minimum Spanning Tree}
We show a lower bound for the error in $\eps$-edge-differentially private estimation of the weight of a minimum spanning tree when the weights are bounded by $W$.

\begin{lemma}
        \label{lem:counting-mst-redux}
        For any pair of adjacent binary streams $\sigma$, $\sigma'$ of length $T$
        there exists a pair of incremental edge-weighted graph sequences $\graphseq$, $\graphseq'$ that are edge-adjacent,
        such that the weights of minimum spanning trees in $\graphseq$ and $\graphseq'$ are
        \begin{equation*}
                \wmst(G_t) = W \sum_{i=1}^t \sigma(i) + T
        \end{equation*}
        and
        \begin{equation*}
                \wmst(G_t') = W \sum_{i=1}^t \sigma'(i) + T,
        \end{equation*}
        respectively, at all time steps $t \in \{1,\dots,T\}$,
        where edge weights are in $\{1,\dots,W\}$.
\end{lemma}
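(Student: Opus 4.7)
The idea is to turn each $1$-bit of the stream into an isolated weight-$W$ contribution to the minimum spanning forest, on top of a fixed weight-$T$ backbone. I would work on $2T+1$ vertices: a path $v_0,v_1,\dots,v_T$ whose $T$ edges all have weight $1$, together with $T$ additional vertices $u_1,\dots,u_T$ that are initially isolated. The initial graph therefore satisfies $\wmst(G_0)=T$. At time $t$ the update is read off $\sigma(t)$: if $\sigma(t)=1$, insert $\{v_0,u_t\}$ with weight $W$; if $\sigma(t)=0$, perform the empty update. The sequence $\graphseq'$ is constructed from $\sigma'$ in exactly the same way.

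Because each $u_t$ is a distinct vertex that stays isolated until the moment (if ever) that $\{v_0,u_t\}$ is inserted, that edge is the unique edge connecting $u_t$ to the component of the path, so it belongs to every minimum spanning forest and contributes exactly $W$ to its weight. Summing these contributions over $i\le t$ yields $\wmst(G_t)=T+W\sum_{i=1}^{t}\sigma(i)$, and likewise for $\graphseq'$ from $\sigma'$. For edge-adjacency I would appeal to \cref{def:edge-adjacency}: since adjacent binary streams differ at a single position $t^*$, WLOG $\sigma(t^*)=1$ and $\sigma'(t^*)=0$, the node-update sets are identical (all empty), the edge-deletion sets are identical (all empty, since the construction is incremental), and the edge-insertion sets agree except at $t^*$, where $\Eins{t^*}\setminus\Eins{t^*}'=\{\{v_0,u_{t^*}\}\}$. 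This is exactly the first clause of \cref{def:edge-adjacency}, and the only weights used are $1$ and $W$, both in $\{1,\dots,W\}$.

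The only point that warrants a remark is that the constructed graphs are deliberately disconnected, so $\wmst$ has to be interpreted as the weight of a minimum spanning forest; this is the natural reading in the partially-dynamic setting used throughout the paper (where incremental graph sequences necessarily pass through disconnected graphs) and is the only mildly delicate issue. No individual step of the argument is technically hard: the construction is explicit, the weight identity follows update by update from the fact that each $\{v_0,u_t\}$ joins a singleton component, and the adjacency property reduces directly to the observation that the sequences differ by exactly one edge insertion at exactly the single time step where the two streams differ.
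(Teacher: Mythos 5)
Your construction is correct and is essentially the same as the paper's: both start from a weight-$1$ path contributing the fixed $+T$ term, keep a pool of initially-isolated auxiliary vertices, and attach one of them by a weight-$W$ bridge precisely when $\sigma(t)=1$, so each $1$-bit adds exactly $W$ to the minimum spanning forest weight. The only differences are cosmetic: the paper avoids empty updates by also inserting a dummy chord $e_0^t$ along the path at every step (which never enters the MSF), and it links the new weight-$W$ edges into a growing path among the auxiliary vertices rather than to a single hub $v_0$. Neither feature is needed for the lemma or for the downstream reduction to the binary-counting lower bound, so your version is a valid slight simplification; your remark that $\wmst$ must be read as minimum spanning \emph{forest} weight applies equally to the paper's construction, whose auxiliary vertices are likewise disconnected.
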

\begin{proof}
        We construct graph sequences on the node set 
        \begin{equation}
                \label{eq:mst-v}
                V = \{0,\dots,T\} \cup \{T+1,\dots,2T+1\}.
        \end{equation}
        The nodes $1,\dots,T$ will form a path in these graph sequences.
        To achieve this we use the edge set $E_0 = \{\{i-1, i\} \mid i = 1,\dots,T\}$ with weight 1 for each edge and initial the graph sequences with the graph $G_0 = (V,E_0)$.
        The edges in $E$ form a minimum spanning tree of $G_0$ of weight $T$.
        
        For any two adjacent binary streams $\sigma$, $\sigma'$ we construct corresponding graph sequences $\graphseq$, $\graphseq'$ with initial graph $G_0$.
        At all times the node set of the graphs in $\graphseq$ and $\graphseq'$ is $V$, as defined in \eqref{eq:mst-v}.
        We now define for each time $t$ the sets of edge-insertions for $\graphseq$ and $\graphseq'$.
        Let $e_0^t = \{t, (t+2) \bmod T\}$ with $w(e_0^t) = W$ and $e_1^t = \{T+t, T+t+1\}$ with $w(e_1^t) = W$.
        At time $t$ we insert
        \begin{equation*}
                \begin{aligned}
                        \Eins{t} &= \begin{cases}
                                        \{e_0^t\}               &\text{ if } \sigma(t) = 0, \\
                                        \{e_0^t, e_1^t\}        &\text{ if } \sigma(t) = 1, \\
                                    \end{cases} \\
                        \Eins{t}' &= \begin{cases}
                                        \{e_0^t\}               &\text{ if } \sigma'(t) = 0, \\
                                        \{e_0^t, e_1^t\}        &\text{ if } \sigma'(t) = 1, \\
                                    \end{cases}
                \end{aligned}
        \end{equation*}
        into $\graphseq$ and $\graphseq'$, respectively.
        Note that $\Eins{t} \neq \Eins{t}'$ only if $\sigma(t) \neq \sigma'(t)$.
        Thus, $\graphseq$ and $\graphseq'$ only differ in the insertion of a single edge and are adjacent.
        
        Inserting an edge $e_0^t$ never changes the weight of a minimum spanning tree, since the edges $e^i$ defined above make up a spanning tree of the nodes $\{1,\dots,T\}$ and have lower weight.
        Inserting an edge $e_1^t$ increases the weight of the minimum spanning tree by $W$.
        Thus, at any time $t$, the weights of the minimum spanning trees in $\graphseq$ and $\graphseq'$ are
        \begin{equation*}
                \wmst(G_t) = W \sum_{i=1}^t \sigma(i) + T
        \end{equation*}
        and
        \begin{equation*}
                \wmst(G_t') = W \sum_{i=1}^t \sigma'(i) + T,
        \end{equation*}
        respectively, which concludes the proof.
\end{proof}
\begin{lemma}
        \label{lem:counting-mst-redux-node}
        For any pair of adjacent binary streams $\sigma$, $\sigma'$ of length $T$
        there exists a pair of incremental edge-weighted graph sequences $\graphseq$, $\graphseq'$ that are node-adjacent,
        such that the weights of minimum spanning trees in $\graphseq$ and $\graphseq'$ are
        \begin{equation*}
                \wmst(G_t) = W \sum_{i=1}^t \sigma(i)
        \end{equation*}
        and
        \begin{equation*}
                \wmst(G_t') = W \sum_{i=1}^t \sigma'(i),
        \end{equation*}
        respectively, at all time steps $t \in \{1,\dots,T\}$,
        where edge weights are in $\{1,\dots,W\}$.
\end{lemma}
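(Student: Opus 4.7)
The plan is to adapt the construction from \cref{lem:counting-mst-redux} by replacing the ``weight-$W$ edge that encodes a 1-bit'' with a ``weight-$W$ pendant edge attached to a freshly inserted pendant node''. That way a single-position difference between $\sigma$ and $\sigma'$ translates into a single-node difference between $\graphseq$ and $\graphseq'$ rather than a single-edge difference. I will also drop the underlying spanning path used in the edge-adjacency construction, so that the formula for $\wmst(G_t)$ contains no additive ``$+T$'' term.

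Concretely, I would take $V_0 = \{v_0\}$ with no edges, so that $\wmst(G_0) = 0$. At each time step $t$ the update in $\graphseq$ is: if $\sigma(t) = 1$, insert a fresh node $u_t$ together with the single edge $\{v_0, u_t\}$ of weight $W$; if $\sigma(t) = 0$, perform the empty update. Define $\graphseq'$ identically from $\sigma'$. The MST-weight claim is then immediate: at time $t$ the graph $G_t$ consists of $v_0$ together with one pendant leaf for each $i \leq t$ with $\sigma(i) = 1$; every pendant edge must lie in any spanning tree and contributes exactly $W$, so $\wmst(G_t) = W \sum_{i=1}^t \sigma(i)$, and analogously for $\graphseq'$.

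It remains to verify node-adjacency. Since $\sigma$ and $\sigma'$ differ at exactly one index $t^*$, assume without loss of generality $\sigma(t^*) = 1$ and $\sigma'(t^*) = 0$; the two sequences then agree at every time step except $t^*$, where $\graphseq$ inserts $v^* := u_{t^*}$ together with the incident edge while $\graphseq'$ performs the empty update. This fits item~\ref{item:node-adjacency-1} of \cref{def:node-adjacency}: $\Vdel{t} = \Vdel{t}' = \emptyset$ for every $t$, $\Vins{t} = \Vins{t}'$ for every $t \neq t^*$, $\Vins{t^*} \setminus \Vins{t^*}' = \{v^*\}$, and the maximal subset of $\Eins{t^*}$ not incident to $v^*$ is empty, which matches $\Eins{t^*}' = \emptyset$. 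The construction is deliberately minimal, so I do not anticipate any serious obstacle; the only clause that requires a moment of thought is the maximality condition on $\Eins{t^*}'$, which holds because the single edge in $\Eins{t^*}$ is incident to $v^*$, and the edge-weight constraint $\{1,\dots,W\}$ is satisfied by design.
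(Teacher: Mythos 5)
Your construction is correct and essentially matches the paper's: both conditionally attach a weight-$W$ pendant node to a fixed center $v_0$ whenever a 1-bit appears, so the MST weight at time $t$ is exactly $W \sum_{i\le t}\sigma(i)$. The paper's version additionally inserts an isolated filler node $u_t$ at every time step and uses a separate node $v_t$ as the pendant leaf, but this is cosmetic and does not change the adjacency or MST argument.
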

\begin{proof}
        Let $G_0 = (v_0, \emptyset)$.
        Let $\graphseq$, $\graphseq'$ be graph sequences with initial graph $G_0$.

        At time $t$ we insert the node sets
        \begin{equation}
                \begin{aligned}
                        \Vins{t} &= \begin{cases}
                                        \{u_t\}             &\text{ if } \sigma(t) = 0, \\
                                        \{u_t, v_t\}        &\text{ if } \sigma(t) = 1, \\
                                    \end{cases} \\
                        \Vins{t}' &= \begin{cases}
                                        \{u_t\}             &\text{ if } \sigma'(t) = 0, \\
                                        \{u_t, v_t\}        &\text{ if } \sigma'(t) = 1, \\
                                    \end{cases}
                \end{aligned}
        \end{equation}
        into $\graphseq$ and $\graphseq'$, respectively.
        When inserting $v_t$ we also insert the edge $\{v_0,v_t\}$ with weight $W$.
        The nodes $u_t$ stay isolated throughout the graph sequences.
        The weight of the minimum spanning tree is the number of nodes $v_t$ that were inserted into the graph sequence, multiplied by the maximum edge weight $W$.
        That is, at any time $t$, the weights of the minimum spanning trees in $\graphseq$ and $\graphseq'$ are
        \begin{equation*}
                \wmst(G_t) = W \sum_{i=1}^t \sigma(i)
        \end{equation*}
        and
        \begin{equation*}
                \wmst(G_t') = W \sum_{i=1}^t \sigma'(i)
        \end{equation*}
        respectively, which concludes the proof.
\end{proof}

\begin{theorem}
        Any $\eps$-edge-differentially private algorithm to compute the weight of a minimum spanning tree
        in an incremental graph sequence of length $T$ must have additive error $\Omega(W \log T)$
        when the edge weights are in $\{1,\dots,W\}$ and $W$ is a publicly known parameter. This holds even with $\eps = 1$.
\end{theorem}
\begin{proof}
        Suppose there exists an $\eps$-edge-differentially private algorithm to compute the weight of a minimum spanning tree
        in an incremental graph sequence with error $o(W \log T)$ at all time steps.
        Using this algorithm and the reduction from \cref{lem:counting-mst-redux}, we can compute the sum of any binary stream
        with error $o(\log T)$ while preserving $\eps$-differential privacy.
        This contradicts \cref{thm:counting-lower-bound}.
\end{proof}

\begin{theorem}
        Any $\eps$-node-differentially private algorithm to compute the weight of a minimum spanning tree
        in an incremental graph sequence of length $T$ must have additive error $\Omega(W \log T)$
        when the edge weights are in $\{1,\dots,W\}$ and $W$ is a publicly known parameter. This holds even with $\eps = 1$.
\end{theorem}
\begin{proof}
        Suppose there exists an $\eps$-node-differentially private algorithm to compute the weight of a minimum spanning tree
        in an incremental graph sequence with error $o(W \log T)$ at all time steps.
        Using this algorithm and the reduction from \cref{lem:counting-mst-redux-node}, we can compute the sum of any binary stream
        with error $o(\log T)$ while preserving $\eps$-differential privacy.
        This contradicts \cref{thm:counting-lower-bound}.
\end{proof}

\subsection{Weighted Minimum Cut}
We can prove a similar lower bound for edge-differentially private weighted minimum cut in incremental graph sequences.
We again reduce differentially private counting in binary streams to differentially private minimum cut in incremental graph sequences.

\begin{lemma}
        \label{lem:counting-cut-redux}
        For any pair of adjacent binary streams $\sigma$, $\sigma'$ of length $T$
        there exists a pair of incremental edge-weighted graph sequences $\graphseq$, $\graphseq'$ that are edge-adjacent,
        such that the weights of the minimum cut in $\graphseq$ and $\graphseq'$ are
        \begin{equation*}
                \wcut(G_t) = W \sum_{i=1}^t \sigma(i)
        \end{equation*}
        and
        \begin{equation*}
                \wcut(G_t') = W \sum_{i=1}^t \sigma'(i)
        \end{equation*}
        respectively, at all time steps $t \in \{1,\dots,T\}$,
        where edge weights are in $\{1,\dots,W\}$.
\end{lemma}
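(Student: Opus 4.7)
I mirror the MST reduction in \cref{lem:counting-mst-redux}: build an initial graph whose minimum cut is zero, and insert exactly one ``bridging'' edge of weight $W$ across a specific cut whenever the stream emits a $1$. The initial graph must be engineered so that (i) the intended cut is always the cheapest one in the graph, and (ii) the only real difference between $\graphseq$ and $\graphseq'$ is the single missing bridging edge at the position $t^*$ where $\sigma$ and $\sigma'$ disagree.

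\textbf{\sffamily Construction.} I would take $V_0 = A \cup B$ with $A = \{a_1, \dots, a_{T+1}\}$ and $B = \{b_1, \dots, b_{T+1}\}$, and let $G_0$ be the disjoint union of two vertex-disjoint cliques, one on $A$ and one on $B$, with every clique edge of weight $W$. Thus $G_0$ is disconnected and $\wcut(G_0) = 0$, matching the empty sum at $t=0$. For $t \in \{1,\dots,T\}$ I set $\Vins{t} = \Vdel{t} = \Edel{t} = \emptyset$ and
\begin{equation*}
\Eins{t} = \begin{cases} \{\{a_t,b_t\}\} & \text{if } \sigma(t) = 1, \\ \emptyset & \text{if } \sigma(t) = 0, \end{cases}
\end{equation*}
assigning weight $W$ to any inserted edge; the sequence $\graphseq'$ is built analogously from $\sigma'$. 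Both sequences are incremental by construction.

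\textbf{\sffamily Edge-adjacency and minimum cut.} Since $\sigma(t) = \sigma'(t)$ for every $t \ne t^*$, we have $\Eins{t} = \Eins{t}'$ there; at the unique index $t^*$ exactly one of $\Eins{t^*}, \Eins{t^*}'$ equals $\{\{a_{t^*},b_{t^*}\}\}$ while the other is empty, so \cref{def:edge-adjacency} is satisfied with $e^* = \{a_{t^*}, b_{t^*}\}$. Writing $k := \sum_{i=1}^t \sigma(i)$, the graph $G_t$ is the union of the two weight-$W$ cliques together with $k$ bridging edges of weight $W$, and the partition $(A,B)$ cuts precisely those bridges, giving $\wcut(G_t) \le kW$. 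For the matching lower bound I would consider any partition $(S, V\setminus S)$ that differs from $(A,B)$: it must split $A$ or $B$ nontrivially, and writing $m = |S \cap A| \in \{1,\dots,T\}$ (w.l.o.g.), the $A$-clique alone contributes $m(T+1-m)W \ge TW$ to the cut weight, which is at least $kW$ because $k \le T$.

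\textbf{\sffamily Main obstacle.} The principal subtlety, highlighted for minimum cut in \cref{sec:bounds-global-sensitivity}, is that the minimum cut is taken over \emph{all} partitions, so I must preclude any cheap cut outside the intended $(A,B)$ one. Using cliques on $T+1$ vertices with weight-$W$ edges ensures that any partition that splits a single clique costs at least $TW$, which dominates $kW$ for all admissible $k\le T$; shrinking the cliques to $T$ vertices would collapse the argument precisely at $k = T$, so the extra vertex of slack is essential. Once this is arranged, the reduction from the binary-counting lower bound of \cref{thm:counting-lower-bound} will yield the claimed $\Omega(W\log T)$ bound for $\eps$-edge-differentially private minimum cut.
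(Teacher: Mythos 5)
Your proof is correct, but it uses a genuinely different construction than the paper's. The paper builds its reduction around a $(T+1)$-dimensional hypercube $Q$ plus an initially isolated vertex $v^*$: because $Q$ is $(T+1)$-edge-connected and $\deg(v^*)$ never exceeds $T$, the trivial cut around $v^*$, of weight $\deg(v^*)\cdot W$, is always optimal, and an edge $e_1^t = \{v^*, b_t\}$ is inserted precisely when $\sigma(t)=1$; at every time step, including when $\sigma(t)=0$, the paper also inserts a ``dummy'' edge $e_0^t$ between two nonadjacent hypercube vertices, which does not affect the cut argument. Your construction (two disjoint weight-$W$ cliques $A$, $B$ on $T+1$ vertices each, with the cut $(A,B)$ gaining one weight-$W$ bridge per $1$ in $\sigma$) reaches the same conclusion via an arguably more elementary argument: any cut that nontrivially splits a clique already incurs at least $m(T+1-m)\cdot W \geq TW \geq kW$ in clique edges alone, so $(A,B)$ with weight $kW$ is optimal, and your observation that $T+1$ rather than $T$ vertices per clique are needed for this one-line bound to hold at $k=T$ is a correct sanity check. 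Two incidental differences: (i) you set $\Eins{t}=\emptyset$ when $\sigma(t)=0$, which is fully consistent with \cref{def:edge-adjacency} and simply yields $G_t = G_{t-1}$; (ii) your initial graph has only $2(T+1)$ vertices versus the paper's $2^{T+1}+1$, so your construction is far more vertex-efficient, which is irrelevant for the $\Omega(W\log T)$ bound as stated in terms of $T$ but would be quantitatively stronger if one wished to restate the lower bound as a function of $n$.
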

\begin{proof}
        At the core of our reduction is a $(T+1)$-edge-connected graph with sufficiently many missing edges.
        We use the $(T+1)$-dimensional hypercube graph $Q = (V_Q,E_Q)$.
        Each node $v \in V_Q$ is associated with a vector $\mathbf{v} \in \{0,1\}^{T+1}$.
        We associate every time step $t \in \{1,\dots,T\}$ with two nodes in $Q$, which correspond to the following vectors:
        \begin{equation*}
                \begin{aligned}
                        b_t                           &= (b_1,\dots,b_T,0) \\
                        \text{and}\quad \widehat{b_t} &= (1-b_1,\dots,1-b_T,1), \\
                \end{aligned}
        \end{equation*}
        where $b_1,\dots,b_T \in \{0,1\}$ such that $t = \sum_{i=1}^T b_i \cdot 2^{i-1}$.
        Since $b_t$ and $\widehat{b_t}$ differ in at least 2 elements, they are not connected by an edge in $Q$.

        In the following, every edge has weight $W$, including the edges in $Q$.
        Let $G_0 = (\{v^*\},\emptyset) \cup Q$. $G_0$ has a minimum cut of weight 0, since $v^*$ is isolated.
        Let $\graphseq$, $\graphseq'$ be incremental graph sequences with initial graph $G_0$.
        For every time step we define the edges $e_0^t = \{b_t, \widehat{b_t}\}$ and $e_1^t = \{v^*, b_t\}$.
        At time $t$ we insert
        \begin{equation*}
                \begin{aligned}
                        \Eins{t} &= \begin{cases}
                                        \{e_0^t\}               &\text{ if } \sigma(t) = 0, \\
                                        \{e_0^t, e_1^t\}        &\text{ if } \sigma(t) = 1, \\
                                    \end{cases} \\
                        \Eins{t}' &= \begin{cases}
                                        \{e_0^t\}               &\text{ if } \sigma'(t) = 0, \\
                                        \{e_0^t, e_1^t\}        &\text{ if } \sigma'(t) = 1, \\
                                    \end{cases}
                \end{aligned}
        \end{equation*}
        into $\graphseq$ and $\graphseq'$, respectively.
        Note that these graph sequences differ in exactly one edge exactly when $\sigma(t) \neq \sigma'(t)$.
        
        Denote by $\deg_\graphseq(v^*, t)$ the degree of $v^*$ in $\graphseq$ at time $t$.
        We observe that the degree of $v^*$ increases by 1 in the graph sequence if and only if there is a 1 in the corresponding binary stream. Otherwise, its degree does not change.
        The minimum cut is always the trivial cut around $v^*$, since the degree of $v^*$ never exceeds $T$ and $Q$ is $(T+1)$-edge-connected.
        Thus, the weight of the minimum cut at time $t$ is
        \begin{equation*}
                \wcut(G_t) = W \sum_{i=1}^t \sigma(i)
        \end{equation*}
        and
        \begin{equation*}
                \wcut(G_t') = W \sum_{i=1}^t \sigma'(i),
        \end{equation*}
        which concludes the proof.
\end{proof}
\begin{lemma}
        \label{lem:counting-cut-redux-node}
        For any pair of adjacent binary streams $\sigma$, $\sigma'$ of length $T$
        there exists a pair of incremental edge-weighted graph sequences $\graphseq$, $\graphseq'$ that are node-adjacent,
        such that the weights of the minimum cut in $\graphseq$ and $\graphseq'$ are
        \begin{equation*}
                \wcut(G_t) = W \sum_{i=1}^t \sigma(i)
        \end{equation*}
        and
        \begin{equation*}
                \wcut(G_t') = W \sum_{i=1}^t \sigma'(i)
        \end{equation*}
        respectively, at all time steps $t \in \{1,\dots,T\}$,
        where edge weights are in $\{1,\dots,W\}$.
\end{lemma}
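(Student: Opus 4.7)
My plan is to adapt the edge-adjacency reduction of \cref{lem:counting-cut-redux} to the node-adjacency setting by replacing the conditional ``counter edge'' $e_1^t = \{v^*, b_t\}$ with a conditional ``counter node''. I would set $G_0 := (\{v^*\} \cup V_Q,\, E_Q)$, where $v^*$ is an isolated vertex and $(V_Q, E_Q) = K_{T+2}$ is the complete graph on $T+2$ vertices with every edge of weight $W$; this graph is $(T+1)$-edge-connected. For each time step $t \in \{1, \ldots, T\}$, both $\graphseq$ and $\graphseq'$ follow the same rule on their respective streams: if the stream has a $1$ at position $t$, insert a fresh node $w_t$ together with the edge $\{v^*, w_t\}$ of weight $W$ and the edge $\{q, w_t\}$ of weight $W$ for every $q \in V_Q$; if the stream has a $0$, perform the empty update.

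To see that the sequences are node-adjacent, note that $\sigma$ and $\sigma'$ differ at exactly one position $t^*$, so $\Vins{t} = \Vins{t}'$ and $\Eins{t} = \Eins{t}'$ for every $t \neq t^*$, while at $t^*$ exactly one side inserts the single additional node $w_{t^*}$ along with its $T+3$ incident edges and the other inserts nothing. Every edge of $\Eins{t^*}$ is incident to $w_{t^*}$, and the ``smaller'' side has $\Eins{t^*}' = \emptyset$, which is precisely the maximal subset of $\Eins{t^*}$ avoiding edges incident to $w_{t^*}$. Hence all the requirements of \cref{def:node-adjacency} are satisfied, and $\graphseq$ and $\graphseq'$ are node-adjacent on $w_{t^*}$.

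It remains to establish that $\wcut(G_t) = W k_t$ for $k_t := \sum_{i=1}^t \sigma(i)$; the same construction applied to $\sigma'$ then yields $\wcut(G_t') = W \sum_{i=1}^t \sigma'(i)$. The cut $(\{v^*\}, V(G_t) \setminus \{v^*\})$ crosses exactly the $k_t$ edges $\{v^*, w_i\}$ and has weight $W k_t$. I would rule out smaller cuts by a short case analysis on the cut side $S$, assuming without loss of generality that $V_Q \not\subseteq S$: (i) if $S \cap V_Q$ and $V_Q \setminus S$ are both non-empty, the $(T+1)$-edge-connectivity of $K_{T+2}$ forces crossing weight at least $(T+1)W > W k_t$ since $k_t \leq T$; (ii) if $S = \{v^*\} \cup S'$ with $S' \subseteq \{w_i : \sigma(i) = 1\}$, the crossing edges are the $k_t - |S'|$ remaining $v^*$-to-$w$ edges plus the $|S'|(T+2)$ edges from $w$-nodes in $S'$ to $V_Q$, for a total of $(k_t + (T+1)|S'|)W \geq W k_t$; (iii) if $S$ is a non-empty subset of $\{w_i\}$ disjoint from $v^*$, each $w_i \in S$ contributes $T+3$ crossing edges, giving weight at least $(T+3)W > W k_t$. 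The main subtlety lies in case (iii): without enough edges from each $w_t$ to $V_Q$, a cut isolating a single $w$-node would beat the cut around $v^*$ and collapse the reduction, which is precisely why each $w_t$ must be connected to every vertex of $V_Q$.
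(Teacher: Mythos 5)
Your proof is correct, and it takes a genuinely different route from the paper's. The paper builds two cliques that grow over time: an ``always'' clique $U_t$ gaining a node $u_t$ at every step, and a ``conditional'' clique $V_t$ gaining $v_t$ (attached to all earlier $v_j$'s and to $u_0$) only when $\sigma(t)=1$, all edges of weight $W$; it then argues that the cut $(U_t,V_t)$ is always a minimum cut, and its weight $W\sum_{i\le t}\sigma(i)$ (one cross-edge $\{v_i,u_0\}$ per inserted $v_i$) tracks the count. You instead lift the edge-adjacency reduction of \cref{lem:counting-cut-redux} to node adjacency: you keep the isolated vertex $v^*$, replace the $(T+1)$-dimensional hypercube by $K_{T+2}$ (any $(T+1)$-edge-connected fixed gadget works), and replace the conditional edge insertion by the conditional insertion of a fresh node $w_t$ attached to $v^*$ and to all of $V_Q$. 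Your three-case cut analysis is sound, and the point you flag --- that each $w_t$ must be connected to enough of $V_Q$ so that isolating it never beats the trivial cut around $v^*$ --- is indeed the one subtlety in this adaptation. The paper's construction avoids fixing a large initial gadget and inserts a node at every time step so no update is empty, whereas you allow the empty update when $\sigma(t)=0$; nothing in the paper's definitions forbids empty updates, so this is fine. What your approach buys is uniformity with the edge-adjacency argument; what the paper's buys is a more self-contained construction that does not need a pre-built highly connected graph.
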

\begin{proof}
        Let $G_0 = (\{u_0,v_0\}, \emptyset)$.
        Let $\graphseq$, $\graphseq'$ be graph sequences with initial graph $G_0$.
        In the following all edges have weight $W$.
       
        At time $t$ we insert the node sets
        \begin{equation*}
                \begin{aligned}
                        \Vins{t} &= \begin{cases}
                                        \{u_t\}             &\text{ if } \sigma(t) = 0, \\
                                        \{u_t, v_t\}        &\text{ if } \sigma(t) = 1, \\
                                    \end{cases} \\
                        \Vins{t}' &= \begin{cases}
                                        \{u_t\}             &\text{ if } \sigma'(t) = 0, \\
                                        \{u_t, v_t\}        &\text{ if } \sigma'(t) = 1, \\
                                    \end{cases}
                \end{aligned}
        \end{equation*}
        into $\graphseq$ and $\graphseq'$, respectively.
        Along with every node $u_t$, we insert the edges $\{u_t,u_j\}$ for all $0 \leq j < t$.
        If we insert $v_t$, we also insert the edges $\{v_t,v_j\}$ for all $0 \leq j < t$ and the edge $\{v_t,u_0\}$.
        
        Let $U_t = \{u_0,\dots,u_t\}$ and $V_t = \{v_0,\dots,v_t\}$.
        The subgraphs induced by $U_t$ and $V_t$ are always complete.
        The cut $(U_t,V_t)$ always has minimum weight.
        The weight of this cut in $\graphseq$ increases by $W$ only if $\sigma(t) = 1$; otherwise, it does not change.
        The same holds for $\graphseq'$ and $\sigma'$.
        Thus, the weight of the minimum cut at time $t$ is
        \begin{equation*}
                \wcut(G_t) = W \sum_{i=1}^t \sigma(i)
        \end{equation*}
        and
        \begin{equation*}
                \wcut(G_t') = W \sum_{i=1}^t \sigma'(i),
        \end{equation*}
        which concludes the proof.
\end{proof}

\begin{theorem}
        Any $\eps$-edge-differentially private algorithm to compute the weight of a minimum cut
        in an incremental graph sequence of length $T$ must have additive error $\Omega(W \log T)$
        when the edge weights are in $\{1,\dots,W\}$ and $W$ is a publicly known parameter. This holds even with $\eps = 1$.
\end{theorem}
\begin{proof}
        Suppose there exists an $\eps$-edge-differentially private algorithm to compute the weight of a minimum cut
        in an incremental graph sequence with error $o(W \log T)$ at all time steps.
        Using this algorithm and the reduction from \cref{lem:counting-cut-redux}, we can compute the sum of any binary stream
        with error $o(\log T)$ while preserving $\eps$-differential privacy.
        This contradicts \cref{thm:counting-lower-bound}.
\end{proof}

\begin{theorem}
        Any $\eps$-node-differentially private algorithm to compute the weight of a minimum cut
        in an incremental graph sequence of length $T$ must have additive error $\Omega(W \log T)$
        when the edge weights are in $\{1,\dots,W\}$  and $W$ is a publicly known parameter. This holds even with $\eps = 1$.
\end{theorem}
\begin{proof}
        Suppose there exists an $\eps$-node-differentially private algorithm to compute the weight of a minimum cut
        in an incremental graph sequence with error $o(W \log T)$ at all time steps.
        Using this algorithm and the reduction from \cref{lem:counting-cut-redux-node}, we can compute the sum of any binary stream
        with error $o(\log T)$ while preserving $\eps$-differential privacy.
        This contradicts \cref{thm:counting-lower-bound}.
\end{proof}

\subsection{Maximum Weighted Matching}
\begin{lemma}
        \label{lem:counting-match-redux}
        For any pair of adjacent binary streams $\sigma$, $\sigma'$ of length $T$
        there exists a pair of incremental edge-weighted graph sequences $\graphseq$, $\graphseq'$ that are edge-adjacent,
        such that the weights of the maximum weighted matching in $\graphseq$ and $\graphseq'$ are
        \begin{equation*}
                \wmatch(G_t) = W \sum_{i=1}^t \sigma(i) + W\cdot T
        \end{equation*}
        and
        \begin{equation*}
                \wmatch(G_t') = W \sum_{i=1}^t \sigma'(i) + W\cdot T,
        \end{equation*}
        respectively, at all time steps $t \in \{1,\dots,T\}$,
        where edge weights are in $\{1,\dots,W\}$.
\end{lemma}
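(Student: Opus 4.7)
The plan follows the template of the MST and minimum-cut reductions in \cref{lem:counting-mst-redux} and \cref{lem:counting-cut-redux}: I will build a pair of incremental graph sequences with a common initial graph whose maximum weighted matching value is a fixed offset $W\cdot T$, and attach a small gadget at each time step so that each ``$1$'' in the stream adds exactly $W$ to the matching value while each ``$0$'' leaves it unchanged, and so that the whole construction differs in at most a single edge whenever $\sigma$ and $\sigma'$ differ in a single time step.

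First I would take as $G_0$ the disjoint union of $T$ weight-$W$ base edges $\{x_i,y_i\}$ together with $2T$ additional isolated vertices $a_1,b_1,\dots,a_T,b_T$ (needed in advance so that all later edge insertions are valid, since no node updates are allowed in an incremental sequence we only want to define through edge insertions). This yields $\wmatch(G_0)=W\cdot T$. For each time step $t$ I define a ``harmless'' edge $e_0^t=\{a_t,x_t\}$ of weight $W$ and a ``productive'' edge $e_1^t=\{b_t,y_t\}$ of weight $W$. At time $t$ I insert $\{e_0^t\}$ into $\graphseq$ if $\sigma(t)=0$ and $\{e_0^t,e_1^t\}$ if $\sigma(t)=1$; the insertions into $\graphseq'$ are defined analogously using $\sigma'$. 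Because $\sigma$ and $\sigma'$ are adjacent binary streams, they differ in exactly one position $t^*$, so the two sequences differ in exactly one inserted edge, namely $e_1^{t^*}$, making $\graphseq$ and $\graphseq'$ edge-adjacent in the sense of \cref{def:edge-adjacency}.

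The main verification is then the claim on $\wmatch(G_t)$, which I would prove by induction on $t$ by analyzing each gadget attached to a base edge $\{x_t,y_t\}$ in isolation. When only $e_0^t$ is present, any matching that uses $\{a_t,x_t\}$ must leave $y_t$ unmatched (because $y_t$ has no other incident edge) and thereby trades an edge of weight $W$ for another of weight $W$, so the gadget still contributes $W$ to the optimum. When both $e_0^t$ and $e_1^t$ are present, the pair $\{a_t,x_t\},\{b_t,y_t\}$ is a valid matching of weight $2W$ supported on $\{x_t,y_t,a_t,b_t\}$, and this is optimal for the gadget because $x_t$ and $y_t$ can each be covered by at most one edge. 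Summing the per-gadget contributions yields $\wmatch(G_t)=W\cdot T+W\sum_{i=1}^t\sigma(i)$, and the analogous identity for $\graphseq'$ follows identically.

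The step that I expect to be the main obstacle is the ``local optimality'' argument: one has to rule out that the global maximum matching profits from mixing or recombining edges across different time-step gadgets. This reduces to the observation that each fresh vertex $a_t$ (resp.~$b_t$) is incident in $G_t$ only to $x_t$ (resp.~$y_t$) and to nothing else, so the choice of matching edges inside the gadget indexed by $t$ can be made independently of every other gadget and of the base edges $\{x_{t'},y_{t'}\}$ for $t'\neq t$. Once this is in place, the reduction is complete, and combining it with \cref{thm:counting-lower-bound} in the standard way yields an $\Omega(W\log T)$ lower bound on the additive error of any $\eps$-edge-differentially private algorithm for maximum weighted matching on incremental graph sequences, in analogy with the MST and minimum-cut lower bounds proved earlier in this section.
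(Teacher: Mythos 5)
Your construction is correct and follows the same reduction template as the paper's proof: a common initial graph with $\wmatch(G_0)=WT$, and at each time step the insertion of either one edge $e_0^t$ (which provably cannot improve the optimum) or two edges $e_0^t,e_1^t$ (which increases it by exactly $W$), so that a single bit flip in $\sigma$ corresponds to exactly one differing edge insertion. The only difference is the concrete gadget: the paper attaches a disjoint fresh weight-$W$ edge on two previously isolated vertices for the ``$1$'' case, whereas you grow a vertex-disjoint path gadget $a_t$--$x_t$--$y_t$--$b_t$ per time step; your gadget has the small advantage that the per-time-step contributions decompose cleanly by vertex-disjointness, so the local-optimality claim you flag as the main obstacle is immediate.
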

\begin{proof}
        Let $G_0 = (U \cup V, E)$, where $U = \{u_1,\dots,u_{2\cdot T}\}$, $V = \{v_1,\dots,v_{2\cdot T}\}$ and
        $E = \bigcup_{i=1}^T \{u_i, u_{(i+1) \mod T}\}$.
        All edges have weight $W$.
        Let $\graphseq$ and $\graphseq'$ be graph sequences with initial graph $G_0$.

        For all times $t \in \{1,\dots,T\}$ we define the edges $e_0^t = \{u_t,v_t\}$ with $w(e_0^t) = 1$ and $e_1^t = \{u_{T+t}, v_{t+T}\}$ with $w(e_1^t) = W$.
        At time $t$ we insert
        \begin{equation*}
                \begin{aligned}
                        \Eins{t} &= \begin{cases}
                                        \{e_0^t\}               &\text{ if } \sigma(t) = 0, \\
                                        \{e_0^t, e_1^t\}        &\text{ if } \sigma(t) = 1, \\
                                    \end{cases} \\
                        \Eins{t}' &= \begin{cases}
                                        \{e_0^t\}               &\text{ if } \sigma'(t) = 0, \\
                                        \{e_0^t, e_1^t\}        &\text{ if } \sigma'(t) = 1, \\
                                    \end{cases}
                \end{aligned}
        \end{equation*}
        into $\graphseq$ and $\graphseq'$, respectively.
        Note that these graph sequences differ in exactly one edge exactly when $\sigma(t) \neq \sigma'(t)$.

        $G_0$ has a matching of weight $W\cdot T$, which consists of the edges in $E$.
        Inserting an edge $e_0^t$ for any $t$ does not change the weight of the maximum matching. 
        Inserting an edge $e_1^t$ however increases the weight of the matching by $W$, since two previously isolated nodes are now connected by this new edge.
        Thus, the weight of the maximum matching at time $t$ is
        \begin{equation*}
                \wmatch(G_t) = W \sum_{i=1}^t \sigma(i) + W\cdot T
        \end{equation*}
        and
        \begin{equation*}
                \wmatch(G_t') = W \sum_{i=1}^t \sigma'(i) + W\cdot T,
        \end{equation*}
        which concludes the proof.
\end{proof}

We can also reduce differentially private binary counting to node-differentially private maximum weight matching in incremental graph sequences.
\begin{lemma}
        \label{lem:counting-match-redux-node}
        For any pair of adjacent binary streams $\sigma$, $\sigma'$ of length $T$
        there exists a pair of incremental edge-weighted graph sequences $\graphseq$, $\graphseq'$ that are node-adjacent,
        such that the weights of the maximum weighted matching in $\graphseq$ and $\graphseq'$ are
        \begin{equation*}
                \wmatch(G_t) = W \sum_{i=1}^t \sigma(i)
        \end{equation*}
        and
        \begin{equation*}
                \wmatch(G_t') = W \sum_{i=1}^t \sigma'(i)
        \end{equation*}
        respectively, at all time steps $t \in \{1,\dots,T\}$,
        where edge weights are in $\{1,\dots,W\}$.
\end{lemma}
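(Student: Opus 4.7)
The plan is to mirror the node-adjacency constructions used for minimum spanning tree (Lemma~\ref{lem:counting-mst-redux-node}) and minimum cut (Lemma~\ref{lem:counting-cut-redux-node}): encode each bit of the binary stream by inserting either one or two fresh nodes at each time step, so that a $1$-bit contributes an extra matching edge of weight $W$ while a $0$-bit contributes nothing. Starting from the empty initial graph $G_0 = (\emptyset, \emptyset)$, at every time step $t$ I would insert a single isolated node $u_t$ when $\sigma(t) = 0$, and insert both $u_t$ and a partner $v_t$ together with the single edge $\{u_t, v_t\}$ of weight $W$ when $\sigma(t) = 1$. The sequence $\graphseq'$ is defined from $\sigma'$ by exactly the same prescription.

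Next I would verify node-adjacency. Since $\sigma$ and $\sigma'$ differ at a unique time $t^*$, the insertion sets $\Vins{t}, \Vins{t}', \Eins{t}, \Eins{t}'$ agree at every step other than $t^*$. Without loss of generality assume $\sigma(t^*) = 1$ and $\sigma'(t^*) = 0$, so $\Vins{t^*} = \{u_{t^*}, v_{t^*}\}$, $\Vins{t^*}' = \{u_{t^*}\}$, $\Eins{t^*} = \{\{u_{t^*}, v_{t^*}\}\}$, and $\Eins{t^*}' = \emptyset$. Then $\Vins{t^*} \setminus \Vins{t^*}' = \{v^*\}$ with $v^* := v_{t^*}$; the unique edge in $\Eins{t^*}$ is incident to $v^*$, so $\Eins{t^*}'$ is indeed the maximal subset of $\Eins{t^*}$ avoiding edges incident to $v^*$; and every edge inserted at step $t^*$ is incident to a freshly inserted node. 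Thus every clause of Definition~\ref{def:node-adjacency} is satisfied, so $\graphseq$ and $\graphseq'$ are node-adjacent on $v^*$.

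Finally, to read off the matching weights, observe that $G_t$ is a disjoint union of isolated vertices and independent edges $\{u_i, v_i\}$ of weight $W$, one for each $i \leq t$ with $\sigma(i) = 1$: the node $v_i$ has degree exactly one throughout the sequence, so no two of these edges can share an endpoint. A maximum weight matching therefore consists of exactly these independent edges, yielding $\wmatch(G_t) = W\sum_{i=1}^{t} \sigma(i)$, and the identity for $\graphseq'$ follows by the same argument. I do not anticipate any substantive obstacle: the only subtle point is verifying the edge-incidence and maximality clauses of node-adjacency, but these are immediate because each new edge is incident to a freshly inserted node. In particular, this reduction is even simpler than the edge-adjacency version (Lemma~\ref{lem:counting-match-redux}), since no fixed background matching of weight $W \cdot T$ is required.
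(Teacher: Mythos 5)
Your proposal is correct and mirrors the paper's proof in all essentials: each time step always inserts one fresh node, and when the bit is $1$ it additionally inserts a partner node together with a weight-$W$ edge whose endpoints never appear elsewhere, so the maximum-weight matching is exactly $W\sum_{i\le t}\sigma(i)$, and the difference between the two sequences at $t^*$ is the single node $v_{t^*}$ together with the edge incident to it, satisfying every clause of \cref{def:node-adjacency}. The only cosmetic difference is that the paper pre-populates $G_0$ with a pool of isolated anchor nodes $v_1,\dots,v_T$ and attaches the freshly inserted $v_t'$ to the pre-existing $v_t$, whereas you start from $G_0=(\emptyset,\emptyset)$ and insert both endpoints of the new edge at step $t$; both variants are valid and yield the same matching weights.
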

\begin{proof}
        Let $G_0 = (\{v_1,\dots,v_T\}, \emptyset)$.
        Let $\graphseq$, $\graphseq'$ be graph sequences with initial graph $G_0$.
        
        At time $t$, we insert a node $u_t$ into $\graphseq$ and $\graphseq'$.
        If $\sigma(t) = 1$ we additionally insert node $v_t'$ along with the edge $\{v_t', v_t\}$ with weight $W$ into $\graphseq$.
        Similarly, if $\sigma'(t) = 1$ we insert node $v_t'$ along with the edge $\{v_t', v_t\}$ with weight $W$ into $\graphseq'$.

        Then, the weight of the maximum matching in $\graphseq$ at time $t$ increases by $W$ if $\sigma(t) = 1$; otherwise, it does not change.
        The same holds for $\sigma'$ and $\graphseq'$.
        Thus, the weight of the maximum matching at time $t$ is
        \begin{equation*}
                \wmatch(G_t) = W \sum_{i=1}^t \sigma(i)
        \end{equation*}
        and
        \begin{equation*}
                \wmatch(G_t') = W \sum_{i=1}^t \sigma'(i),
        \end{equation*}
        which concludes the proof.
\end{proof}

\begin{theorem}
        Any $\eps$-edge-differentially private algorithm to compute the weight of a maximum weight matching
        in an incremental graph sequence of length $T$ must have additive error $\Omega(W \log T)$
        when the edge weights are in $\{1,\dots,W\}$ and $W$ is a publicly known parameter. This holds even with $\eps = 1$.
\end{theorem}
\begin{proof}
        Suppose there exists an $\eps$-edge-differentially private algorithm to compute the weight of a maximum weight matching
        in an incremental graph sequence with error $o(W \log T)$ at all time steps.
        Using this algorithm and the reduction from \cref{lem:counting-match-redux}, we can compute the sum of any binary stream
        with error $o(\log T)$ while preserving $\eps$-differential privacy.
        This contradicts \cref{thm:counting-lower-bound}.
\end{proof}
\begin{theorem}
        Any $\eps$-node-differentially private algorithm to compute the weight of a maximum weight matching
        in an incremental graph sequence of length $T$ must have additive error $\Omega(W \log T)$
        when the edge weights are in $\{1,\dots,W\}$ and $W$ is a publicly known parameter. This holds even with $\eps = 1$.
\end{theorem}
\begin{proof}
        Suppose there exists an $\eps$-node-differentially private algorithm to compute the weight of a maximum weight matching
        in an incremental graph sequence with error $o(W \log T)$ at all time steps.
        Using this algorithm and the reduction from \cref{lem:counting-match-redux-node}, we can compute the sum of any binary stream
        with error $o(\log T)$ while preserving $\eps$-differential privacy.
        This contradicts \cref{thm:counting-lower-bound}.
\end{proof}

\subsection{Subgraph Counting, High-Degree Nodes and Degree Histogram}
We show that, for the number of high-degree nodes, the degree histogram and the subgraph counting problems adjacent binary streams
can be reduced to adjacent incremental graph sequences. This implies a lower bound of $\Omega(\log T)$ on the error for these problems.

\begin{lemma}
        \label{lem:counting-reductions}
        There exist functions
        $\mathscr{G}_\tau$,
        $\mathscr{G}_h$,
        $\mathscr{G}_\Delta$,
        $\mathscr{G}_k$,
        that map a binary stream to an incremental graph sequence of the same length,
        such that the following holds:
        At time $t$
        \begin{enumerate}
                \item the number of nodes of degree at least $\tau$ in $\mathscr{G}_\tau(\sigma)$,
                \item the number of nodes of degree 2 in $\mathscr{G}_h(\sigma)$,
                \item the number of triangles in $\mathscr{G}_\Delta(\sigma)$ and
                \item the number of $k$-stars in $\mathscr{G}_k(\sigma)$
        \end{enumerate}
        is equal to $\sum_{i=1}^{t} \sigma(i)$.
        Furthermore, all functions map adjacent binary streams to edge-adjacent graph sequences.
\end{lemma}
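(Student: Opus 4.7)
The plan is to construct four gadget-based reductions, one for each of the four graph functions. The common template is: given a binary stream $\sigma$ of length $T$, the initial graph $G_0$ will consist of $T$ node-disjoint gadgets, one per time step, designed so that at time $t$ the graph has exactly $\sum_{i=1}^t \sigma(i)$ occurrences of the structure being counted. At each time $t$, a single ``trigger edge'' $e_1^t$ is inserted if and only if $\sigma(t) = 1$; otherwise $\Eins{t}$ is empty. Two binary streams $\sigma, \sigma'$ that differ only at time $t^*$ then produce graph sequences whose insertion sets agree at all other times and differ at $t^*$ by exactly the single edge $e_1^{t^*}$, which matches \cref{def:edge-adjacency}(1).

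The specific gadgets are as follows. For $\mathscr{G}_\tau$ (high-degree nodes with threshold $\tau \geq 2$), gadget $t$ consists of a center $c_t$ pre-joined to $\tau - 1$ private leaves together with one detached leaf $l_t^*$; the trigger $e_1^t = \{c_t, l_t^*\}$ raises $\deg(c_t)$ from $\tau-1$ to $\tau$, adding exactly one high-degree node, while $l_t^*$ only reaches degree $1 < \tau$. For $\mathscr{G}_h$ (degree-$2$ bin of the histogram), gadget $t$ is an edge $\{u_t, v_t\}$ together with an isolated $w_t$; the trigger $e_1^t = \{v_t, w_t\}$ promotes $v_t$ to degree $2$ while $u_t$ and $w_t$ remain at degree $1$, so exactly one new degree-$2$ node is produced. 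For $\mathscr{G}_\Delta$ (triangles), gadget $t$ is a path on fresh nodes $a_t, b_t, c_t$ with edges $\{a_t, b_t\}$ and $\{b_t, c_t\}$, and the trigger $e_1^t = \{a_t, c_t\}$ closes precisely one new triangle. For $\mathscr{G}_k$ ($k$-stars with $k \geq 2$), gadget $t$ consists of a center $c_t$ pre-joined to $k-1$ private leaves together with one detached leaf $l_t^*$; the trigger $e_1^t = \{c_t, l_t^*\}$ gives $c_t$ exactly $\binom{k}{k} - \binom{k-1}{k} = 1$ new $k$-star, while $l_t^*$ stays at degree $1 < k$ and contributes none.

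Correctness then follows by a straightforward induction on $t$: since the gadgets are node-disjoint and untouched by each other's updates, gadget $i$ contributes exactly $\sigma(i)$ to the relevant count, and summing yields $\sum_{i=1}^t \sigma(i)$ at time $t$. I do not expect any genuine obstacle here; the only subtlety is verifying, gadget by gadget, that the non-center endpoint of every trigger edge does not itself contribute to the target count (e.g.\ that $l_t^*$ has degree below $\tau$ or $k$, and that $w_t$ only reaches degree $1$). This is why each gadget keeps its ``fresh'' vertex pre-isolated rather than re-using a vertex with an existing partial substructure, ensuring that adjacency of streams translates cleanly into edge-adjacency of graph sequences by \cref{def:edge-adjacency}.
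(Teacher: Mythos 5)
Your proposal is correct and matches the paper's proof essentially gadget for gadget: the paper also uses $T$ node-disjoint gadgets (a $\tau$-star / $k$-star with one missing edge for $\mathscr{G}_\tau$ and $\mathscr{G}_k$, an edge plus an isolated vertex for $\mathscr{G}_h$, and a two-edge path for $\mathscr{G}_\Delta$), inserts a single trigger edge in gadget $t$ exactly when $\sigma(t)=1$, and argues edge-adjacency from the fact that a stream flip at $t^*$ changes only the single edge $e_1^{t^*}$. The only (harmless) difference is your naming and the explicit note that $\tau,k\geq 2$ is needed so the non-center endpoint of the trigger contributes nothing.
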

\begin{proof}
        Let $\sigma$ be a binary stream.
        \begin{enumerate}
\item $\mathscr{G}_\tau$ outputs a graph sequence with initial graph $G_\tau = \cup_{i=1}^T S_i$.
                        Each $S_i$ is a $\tau$-star with one missing edge.
                        At time $t$, if $\sigma(t) = 1$, then the missing edge is inserted into $S_t$.

\item $\mathscr{G}_h$ outputs a graph sequence on the nodes $V_h = \cup_{i=1}^T \{a_i,b_i,c_i\}$.
                        Initially, there are edges $\{b_i,c_i\}$ for each $i \in \{1,\dots,T\}$.
                        At time $t$, if $\sigma(t) = 1$, then the edge $\{a_t,b_t\}$ is inserted.
                        Note that $\deg(b_t) = \sigma(t) + 1$ and $\deg(a_i),\deg(c_i) \leq 1$ for all $t$.

\item $\mathscr{G}_\Delta$ outputs a graph sequence on the nodes $V_\Delta = \cup_{i=1}^T \{a_i,b_i,c_i\}$.
                        Initially, there are edges $\{a_i,b_i\}$ and $\{b_i,c_i\}$ for each $i \in \{1,\dots,T\}$.
                        At time $t$, if $\sigma(t) = 1$, then the edge $\{a_t,c_t\}$ is inserted.
\item $\mathscr{G}_k$ outputs a graph sequence with initial graph $G_k = \cup_{i=1}^T S_i$.
                        Each $S_i$ is a $k$-star with one missing edge.
                        At time $t$, if $\sigma(t) = 1$, then the missing edge is inserted into $S_t$.
        \end{enumerate}
        In all graph sequences no edge is inserted if $\sigma(t) = 0$; if $\sigma(t) = 1$, then a single edge is inserted.
        Thus, the functions map adjacent binary streams to edge-adjacent graph sequences.
\end{proof}
\begin{lemma}
        \label{lem:counting-reductions-node}
        There exist functions
        $\mathscr{G}_\tau'$,
        $\mathscr{G}_h'$,
        $\mathscr{G}_e'$,
        $\mathscr{G}_\Delta'$,
        $\mathscr{G}_k'$,
        that map a binary stream to an incremental graph sequence of the same length with maximum degree $D > 3$, $\tau,k \leq D$,
        such that the following holds:
        Let $\sigma$ be a binary stream.
        At time $t$
        \begin{enumerate}
                \item the number of nodes of degree at least $\tau$ in $\mathscr{G}_\tau'(\sigma)$,
                \item the number of nodes of degree 2 in $\mathscr{G}_h'(\sigma)$,
                \item the number of edges in $\mathscr{G}_e'(\sigma)$,
                \item the number of triangles in $\mathscr{G}_\Delta'(\sigma)$,
                \item the number of $k$-stars in $\mathscr{G}_k'(\sigma)$
        \end{enumerate}
        is equal to $D\sum_{i=1}^t \sigma(i)$.
        Furthermore, all functions map adjacent binary streams to node-adjacent graph sequences.
\end{lemma}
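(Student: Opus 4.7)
The plan is to adapt the edge-adjacency constructions of \cref{lem:counting-reductions} to the node-adjacency setting. The essential idea is to encode each ``$1$'' in $\sigma$ by the insertion of a single node $v^*_t$ together with its incident edges, rather than by the insertion of a single edge; since a node can be incident to up to $D$ vertices, each ``$1$''-bit can contribute $D$ units to the counted quantity, which accounts for the extra factor of $D$ in the statement relative to \cref{lem:counting-reductions}.

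Concretely, each mapping $\mathscr{G}'$ takes a binary stream $\sigma$ of length $T$ to an incremental graph sequence whose initial graph $G_0 = \bigcup_{i=1}^T S_i$ is a disjoint union of $T$ fixed ``gadgets,'' one per time step. At time $t$ we perform no update if $\sigma(t) = 0$ and insert a single fresh node $v^*_t$ together with a prescribed set of edges into $S_t$ if $\sigma(t) = 1$. Since two adjacent streams $\sigma, \sigma'$ differ at exactly one position $t^*$, the corresponding graph sequences agree at every time step $t \neq t^*$ and differ at $t^*$ only in the insertion of $v^*_{t^*}$ and its incident edges; this matches the node-adjacency condition of \cref{def:node-adjacency}. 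Given this, the lower bound follows in the usual way: any $\eps$-node-differentially private algorithm with additive error $o(D \log T)$ for one of these functions would, composed with the appropriate $\mathscr{G}'$, yield an event-level $\eps$-differentially private binary counter of error $o(\log T)$, contradicting \cref{thm:counting-lower-bound}.

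It remains to design each gadget so that inserting $v^*_t$ increases the target count by exactly $D$ while the maximum degree never exceeds $D$. For the edge count, $S_i$ is $D$ isolated vertices and $v^*_t$ is attached to all of them, adding $D$ edges. For the triangle count, $S_i$ is a cycle on $D$ vertices, which is triangle-free for $D > 3$, and $v^*_t$ is joined to every cycle-vertex, creating exactly one new triangle per cycle-edge; each original vertex then has degree $3 \leq D$ and $v^*_t$ has degree $D$. For the degree-$2$ bin of the degree histogram, $S_i$ consists of $D$ vertices of degree $1$ (each with a private leaf), and $v^*_t$ is joined to all of them, raising exactly $D$ vertices to degree $2$; since $D > 3$, $v^*_t$ itself has degree $D > 2$ and is not counted. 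For the high-degree count, $S_i$ contains $D - 1$ stars whose centers have degree $\tau - 1$, together with one isolated vertex $z_i$; $v^*_t$ is joined to those $D - 1$ centers and to $z_t$, so the centers cross the threshold and, since $\tau \leq D$, $v^*_t$ does too, giving $D$ new high-degree vertices. For the $k$-star count, $S_i$ is prepared so that attaching $v^*_t$ with a suitable degree $d \leq D$ to neighborhoods of centers of degree $k-1$ contributes exactly $D$ new $k$-stars, balancing the $\binom{d}{k}$ new stars centered at $v^*_t$ against the stars created at its neighbors whose degree rises from $k - 1$ to $k$.

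\textbf{Main obstacle.} The technically delicate step is the gadget design for the high-degree and $k$-star functions: because a single node insertion can simultaneously affect its own contribution to the count and those of each of its up to $D$ neighbors, hitting the target of exactly $D$ (rather than $D \pm O(1)$) requires careful combinatorial bookkeeping, and the parameters must be tuned at the boundary cases (such as $\tau$ close to $D$ or small $k$) so that the degree bound $D$ is never exceeded.
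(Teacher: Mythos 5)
Your construction follows the same blueprint as the paper's: the initial graph is a disjoint union of $T$ gadgets $S_1,\ldots,S_T$, and a $1$-bit at time $t$ triggers the insertion of one fresh node $v^*_t$ into $S_t$ together with its incident edges, so that adjacent streams map to node-adjacent sequences and each insertion raises the target count by exactly~$D$. Your gadgets for the edge count ($D$ isolated vertices), the triangle count (a $D$-cycle), and the degree-$2$ bin ($D$ degree-$1$ vertices, each with a private leaf) are identical to the paper's $\mathscr{G}'_e$, $\mathscr{G}'_\Delta$, $\mathscr{G}'_h$. Your high-degree gadget is a small but genuine refinement: by attaching $v^*_t$ to the $D-1$ centers \emph{and} to an extra isolated vertex $z_t$, you guarantee $\deg(v^*_t)=D\geq\tau$ and hence exactly $D$ new high-degree nodes; the paper's $\mathscr{G}'_\tau$ leaves $v_t$ at degree $D-1$, which falls short of the threshold in the boundary case $\tau=D$.

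Where your proposal is incomplete is, as you yourself flag, the $k$-star gadget: you only sketch that the counts should ``balance,'' and the intended equation $\binom{d}{k}+m=D$ with $m\leq d\leq D$ has no integer solution for some $(k,D)$ pairs (e.g.\ $k=3$, $D=30$: $\binom{6}{3}=20$ forces $m=10>6$, while $\binom{7}{3}=35>D$). One then needs neighbors with a mix of degrees $\{d_i\}$ so that $\binom{d}{k}+\sum_i\binom{d_i}{k-1}=D$, and it is not obvious this is always solvable under the constraint $\max_i d_i+1\leq D$. You should be aware that this is also the weakest point of the paper's own proof: its $\mathscr{G}'_k$ is a near-verbatim copy of $\mathscr{G}'_\tau$ (it even still speaks of ``$(\tau-1)$-stars''), and with centers of degree $k-1$ the actual increment is $(D-1)+\binom{D-1}{k}$, which equals $D$ only when $D-1=k$. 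So the exact-$D$ claim of the lemma seems not to be fully established for $k$-stars by either argument, although the downstream $\Omega(D\log T)$ lower bound only needs the increment to be $\Theta(D)$ (indeed any increment $\geq D$ suffices), which both constructions do provide.
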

\begin{proof}
        Let $\sigma$ be a binary stream.
        \begin{enumerate}
\item $\mathscr{G}'_\tau$ outputs a graph sequence with initial graph $G_\tau = \cup_{i=1}^T S_i$.
                        Each $S_i$ is the union of $(D-1)$-many $(\tau-1)$-stars.
                        At time $t$, if $\sigma(t) = 1$, then a node $v_t$ is inserted into $S_t$,
                        along with edges from $v_t$ to the central node of each $(\tau-1)$-star in $S_t$.
                        Since $D \geq \tau$, the number of $\tau$-stars in $S_t$ is $D$ if $\sigma(t) = 1$ and zero otherwise.
                        In $S_t$ the centers of the $\tau$-stars have degree $\geq \tau$, all other nodes have degree $< \tau$.

\item $\mathscr{G}'_h$ outputs a graph sequence with initial graph $G_h = \cup_{i=1}^T M_i$,
                        where $M_i$ consists of $D$ pairwise connected nodes,
                        i.e., $M_i = (\bigcup_{j=1}^D\{a_j,b_j\}, \bigcup_{j=1}^D \{\{a_j,b_j\}\})$.
                        At time $t$, if $\sigma(t) = 1$, then a node $v_t$ is inserted into $M_t$
                        along with edges $\{v_t,a_j\}$ for all $j=1,\dots,D$.
                        The number of nodes of degree 2 is $D$ in $M_t$ if $\sigma(t) = 1$ and zero otherwise.

\item $\mathscr{G}'_e$ outputs a graph sequence with initial graph $G_e = \cup_{i=1}^T V_i$,
                        where $V_i = (\{1,\dots,D\}, \emptyset)$.
                        At time $t$, if $\sigma(t) = 1$, then a node $v_t$ is inserted into $V_t$
                        along with edges from $v_t$ to all nodes in $V_t$.
                        The number of edges is $D$ in $V_t$ if $\sigma(t) = 1$ and zero otherwise.

\item $\mathscr{G}'_\Delta$ outputs a graph sequence with initial $G_\Delta = \cup_{i=1}^T C_i$,
                        where $C_i$ is a cycle of $D$ nodes.                        
                        At time $t$, if $\sigma(t) = 1$, then a node $v_t$ is inserted into $C_t$
                        along with edges from $v_t$ to every node in $C_t$.
                        Thus, if $\sigma(t) = 1$ the number of triangles in $C_t$ is $D$; otherwise, it is zero.

\item $\mathscr{G}'_k$ outputs a graph sequence with initial graph $G_\tau = \cup_{i=1}^T S_i$.
                        Each $S_i$ is the union of $(D-1)$-many $(\tau-1)$-stars.
                        At time $t$, if $\sigma(t) = 1$, then a node $v_t$ is inserted into $S_t$,
                        along with edges from $v_t$ to the central node of each $(\tau-1)$-star in $S_t$.
                        Since $D \geq \tau$, the number of $\tau$-stars in $S_t$ is $D$ if $\sigma(t) = 1$ and zero otherwise.
        \end{enumerate}
        In all graph sequences no node is inserted if $\sigma(t) = 0$; if $\sigma(t) = 1$, then a single node is inserted.
        Thus, the functions map adjacent binary streams to node-adjacent graph sequences.
\end{proof}

\begin{theorem}
        Any $\eps$-edge-differentially private algorithm to compute
        \begin{enumerate}
                \item the number of high-degree nodes,
                \item the degree histogram,
                \item the number of edges,
                \item the number of triangles, or
                \item the number of $k$-stars
        \end{enumerate}
        in an incremental graph sequence of length $T$ must have additive error $\Omega(\log T)$, even with $\eps = 1$.
\end{theorem}
\begin{proof}
        By \cref{lem:counting-reductions}, adjacent binary streams can be encoded in edge-adjacent incremental graph sequences of the same length.
        Counting the number of edges is equivalent to binary counting.
        The claim follows by \cref{thm:counting-lower-bound}.
\end{proof}
\begin{theorem}
        There exists  an incremental graph sequence $\mathcal{G}$ of length $T$ where every node is incident to at most $D$ edges such that
        any $\eps$-node-differentially private algorithm that knows the parameter $D$ and  computes
        \begin{enumerate}
                \item the number of high-degree nodes,
                \item the degree histogram,
                \item the number of edges,
                \item the number of triangles, or
                \item the number of $k$-stars
        \end{enumerate}
        on $\mathcal{G}$  must have additive error $\Omega(D\log T)$. This holds even with $\eps = 1$.
\end{theorem}
\begin{proof}
        By \cref{lem:counting-reductions-node}, adjacent binary streams can be encoded in node-adjacent incremental graph sequences of the same length.
        The claim follows by \cref{thm:counting-lower-bound}.
\end{proof}

} \opt{full}{\inclowerboundsection} 

\section{Lower Bound for User-Level Privacy}
\label{sec:lower-bound-user-level}
\NewDocumentCommand{\trans}{m m}{\operatorname{\tau}(#1, #2)}
\NewDocumentCommand{\transnum}{m m}{\lvert \trans{#1}{#2} \rvert }
\NewDocumentCommand{\transseq}{m m}{\operatorname{T}(#1, #2)}
\NewDocumentCommand{\rev}{m}{{#1}^{-1}
}
\NewDocumentCommand{\eqclas}{m O{\sim}}{[#1]_{\sim}}

We show that for several fundamental problems on dynamic graphs like minimum spanning tree and minimum cut, a differentially private algorithm with edge-adjacency on user-level must have an additive error that is linear in the maximum function value.
Technically, we define the \emph{spread} of a graph function as the maximum difference of the function's value on any two graphs.
Then, we show that any algorithm must have an error that is linear in the graph function's spread.
We write this section in terms of edge-adjacency but the corresponding result for node-adjacency carries over.
See \cref{tbl:sensitivity-results} on page~\pageref{tbl:sensitivity-results} for the resulting lower bounds.

\opt{conf}{
\begin{definition}
Let $G_1$ and $G_2$ be a pair of graphs. We define $\trans{G_1}{G_2}$ to be an update sequence $u_1, \ldots, u_\ell$ of minimum length that transforms $G_1$ into $G_2$.
We denote the graph sequence that results from applying $\trans{G_1}{G_2}$ to $G_1$ by $\transseq{G_1}{G_2}$.

        Let $s, \ell: \nats \rightarrow \{ 2i \mid i \in \nats \}$ be functions. A graph function $f$ has spread $(s(n), \ell(n))$ on inputs of size $n$ if, for every $n$, there exist two graphs $G_1, G_2$ of size $n$ so that $\lvert f(G_1) - f(G_2) \rvert \geq s(n)$
and $\transnum{G_1}{G_2} = \ell(n)$.
\end{definition}
}

\opt{full}{
\begin{definition}[adjacency transformation]
        \label{def:transform}
        Let $G_1$ and $G_2$ be a pair of graphs. We define $\trans{G_1}{G_2}$ to be an update sequence $u_1, \ldots, u_\ell$ of minimum length that transforms $G_1$ into $G_2$.
We denote the graph sequence that results from applying $\trans{G_1}{G_2}$ to $G_1$ by $\transseq{G_1}{G_2}$, i.e., $\transseq{G_1}{G_2} = (G_1, \apply{G_1}{u_1}, \apply{(\apply{G_1}{u_1})}{u_2}, \ldots, G_2)$.
\end{definition}

\begin{definition}[spread]
        \label{def:spread}
        Let $s, \ell: \nats \rightarrow \{ 2i \mid i \in \nats \}$ be functions. A graph function $f$ has spread $(s(n), \ell(n))$ on graphs of size $n$ if, for every $n$, there exist two graphs $G_1, G_2$ of size $n$ so that $\lvert f(G_1) - f(G_2) \rvert \geq s(n)$
and $\transnum{G_1}{G_2} = \ell(n)$. Furthermore, $f$ \emph{spares an edge $e$} if $e \in E(G_1) \cap E(G_2)$.
\end{definition}
}

\opt{confpre}{The proof of the following theorem appears in \cref{sec:lower-bound-proof}.}

\begin{theorem}[restate=TechnicalLowerBound]
        \label{thm:technical-lower-bound}
        Let $\epsilon, \delta > 0$ and let $f$ be a graph function for $n$-node graphs with publicly known spread $(s(n),\ell(n))$ that spares an edge $e$. For streams of length $T$ on graphs of size $n$, where $T > 2\ell \log(e^{4 \epsilon \ell} / (1-\delta)) \in O(\epsilon \ell^2 + \ell\log(1 / (1-\delta)))$, every $\epsilon$-differentially private  dynamic algorithm with user-level edge-adjacency that computes $f$ with probability at least $1-\delta$ must have error $\Omega(s(n))$.
\end{theorem}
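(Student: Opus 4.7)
The plan is to execute a packing lower bound in the style of Dwork, Naor, Pitassi, and Rothblum~\cite{dwork10}. Suppose, for contradiction, that $\mathcal{A}$ is an $\epsilon$-differentially private dynamic algorithm (under user-level edge-adjacency) that computes $f$ with per-time-step additive error strictly less than $s(n)/2$, with probability at least $1-\delta$.

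Starting from the spread witnesses $G_1, G_2$ and the spared edge $e \in E(G_1) \cap E(G_2)$, I would build a short baseline update sequence $\graphseq'$ of length $T$ and then a family $\{\graphseq_i\}_{i=1}^N$ of alternative sequences satisfying two properties: (i)~each $\graphseq_i$ is user-level edge-adjacent to $\graphseq'$ through a chain of at most $4\ell$ edge-adjacency steps on a single edge $e_i^*$ (which may depend on~$i$); and (ii)~for every pair $i \neq j$, there is a time step $t^*$ at which $|f(G_{t^*}^{(i)}) - f(G_{t^*}^{(j)})| \geq s(n)$. The building blocks are the two transformation sequences $\trans{G_1}{G_2}$ and $\trans{G_2}{G_1}$ of length $\ell$ each; because $e$ is not touched by either of these minimum-length transformations, updates involving $e$ can be toggled to generate distinct family members while each individual $\graphseq_i$ remains edge-adjacent to $\graphseq'$ on the one edge $e_i^*$.

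With the family in hand, define $O_i \subseteq \range(\mathcal{A})$ as the set of output sequences $(o_1, \ldots, o_T)$ with $|o_t - f(G_t^{(i)})| < s(n)/2$ for every $t$. Property~(ii) ensures that $O_i \cap O_j = \emptyset$ whenever $i \neq j$, and the accuracy assumption gives $\Pr[\mathcal{A}(\graphseq_i) \in O_i] \geq 1-\delta$. Composition of $\epsilon$-differential privacy along the length-$k_i$ adjacency chain (with $k_i \leq 4\ell$) yields $\Pr[\mathcal{A}(\graphseq') \in O_i] \geq e^{-4\epsilon\ell}(1-\delta)$. Summing these disjoint events gives
\[
  1 \;\geq\; \sum_{i=1}^{N} \Pr[\mathcal{A}(\graphseq') \in O_i] \;\geq\; N \cdot e^{-4\epsilon\ell}(1-\delta),
\]
forcing $N \leq e^{4\epsilon\ell}/(1-\delta)$. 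Whenever $T > \log(e^{4\epsilon\ell}/(1-\delta))$, the stream is long enough to encode a family of size $N > e^{4\epsilon\ell}/(1-\delta)$, contradicting this bound and hence the assumed accuracy; the additive error must therefore be $\Omega(s(n))$.

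The main obstacle is the combinatorial construction in the second paragraph: producing exponentially many alternative sequences that each stay user-level edge-adjacent to $\graphseq'$ on a single edge (so that the DP composition factor remains $e^{-4\epsilon\ell}$) while simultaneously pairwise forcing $f$-value gaps of magnitude $s(n)$ at some common time step. Once such a family has been engineered, the remainder of the argument is the standard disjoint-output packing bound.
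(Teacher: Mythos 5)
Your overall architecture — a packing lower bound with pairwise-disjoint accurate output sets $O_i$, group privacy along a chain of adjacencies bounding $\Pr[\mathcal{A}(\graphseq') \in O_i]$ from below, and summing to exceed~1 — matches the paper's proof. But there is a genuine inconsistency in how you state the two properties the family must satisfy, and it points at the piece of the argument you have not yet supplied.

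As written, property~(i) requires each $\graphseq_i$ to be connected to $\graphseq'$ by a chain of at most $4\ell$ edge-adjacency steps \emph{all on a single edge} $e_i^*$. By \cref{def:user-edge-adj}, such a chain is exactly one user-level edge-adjacency on $e_i^*$; the DP guarantee (which is already at user level) would then give you a factor $e^{-\epsilon}$ directly, not $e^{-4\epsilon\ell}$, so your subsequent group-privacy computation would be pessimistic by a large margin. More seriously, if $\graphseq_i$ and $\graphseq'$ differ only by toggling a single edge $e_i^*$, then every graph in $\graphseq_i$ is edge-adjacent to the corresponding graph in $\graphseq'$, so $\lvert f(G_t^{(i)}) - f(G_t') \rvert \leq \GSstatic(f)$ for every~$t$, and hence $\lvert f(G_t^{(i)}) - f(G_t^{(j)}) \rvert \leq 2\GSstatic(f)$ for any pair $i \neq j$. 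That is, property~(i) as you state it \emph{forbids} property~(ii) whenever $s(n) > 2\GSstatic(f)$, which is exactly the interesting regime. What you actually need — and what the paper builds — is a chain of $O(\ell)$ user-level edge-adjacency operations, each potentially on a \emph{different} edge (one per edge touched by $\trans{G_1}{G_2}$); group privacy over this chain is what legitimately produces the $e^{-O(\epsilon\ell)}$ factor you then use.

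Beyond this, you explicitly defer the combinatorial construction, which is the crux. For comparison: the paper encodes a bit string $b \in \{0,1\}^{T/(2\ell)}$ into a sequence $\graphseq_b$ partitioned into phases of length $2\ell$ that alternate between executing $\trans{G_1}{G_2}$ (``forward'') and $\trans{G_2}{G_1}$ (``backward''). Each phase is padded with $\ell$ placeholder updates that toggle the spared edge $e$; the bit $b_i$ chooses whether the real transformation occurs in the first or second half of the phase. Consequently any two distinct bit strings must disagree at some phase midpoint by a full $s(n)$ gap (one sequence has reached $G_2$, the other is still at $G_1$), giving property~(ii). Crucially, because the same $\trans{G_1}{G_2}$ subsequence repeats across all forward phases, a \emph{single} user-level edge-adjacency operation on each edge in $\trans{G_1}{G_2}$ clears out \emph{all} its occurrences simultaneously; that is why the chain length stays $O(\ell)$, independent of the number of phases, while the family size grows like $2^{T/(2\ell)}$. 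Without this uniformity across phases, the chain length would scale with $T$ and the packing bound would collapse.
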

\newcommand{\LowerBoundProof}{
\begin{proof}
    For any update $u$, let $\rev{u}$ be the update that reverses $u$.
    For the sake of contradiction, assume that there exists an $\epsilon$-differentially private graph algorithm $\algo$ with edge-adjacency on user-level that with probability at least $1-\delta$ computes $f$ with error less than $s(n) / 2$.
    Let $n >0$, $s := s(n)$, $\ell = \ell(n)$ and let $G_1, G_2$ be graphs of size $n$ with spread $(s,\ell)$ on $f$ that spares an edge $e$.
    We assume below wlog that $\ell$ is even, if it is odd the proof can be easily adapted.
    Let $u$ be the update  operation that inserts $e$ into the current graph (see below).
    Let $(o_0, \ldots, o_{\ell-1}) := \trans{G_1}{G_2}$ and let $(o'_0, \ldots, o'_{\ell-1}) := \trans{G_2}{G_1}$.
    Without loss of generality, assume that $T$ is a multiple of $2 \ell$.
    Let $\mathcal{B}$ be the set of bit strings of length $T / (2\ell)$.
    For every bit string $b = (b_0, \ldots, b_{T / (2\ell) - 1}) \in \mathcal{B}$, we construct a unique graph sequence $\graphseq_b = (H_0, \ldots, H_{T})$ as follows, resulting in $2^{T/(2\ell)}$
    different graph sequences.
     
    The sequence is partitioned into phases of length $2 \ell$.
    Each phase has a \emph{type}: it is either a \emph{forward} phase or a \emph{backward} phase.
    The sequence starts with a forward phase and alternates between forward and backward phases.
    Intuitively, forward phases transform $G_1$ into $G_2$, while backward phases transform $G_2$ into $G_1$. This takes $\ell$ updates.
    To generate an exponential number of different but ``close'' graph sequences each phase also contains $\ell$ \emph{placeholder updates}.
    The purpose of a placeholder update is to basically not modify the graph in any of the graph sequences. Thus the first placeholder update  updates $e$ and the next update is the inverse operation.
    This is repeated $\ell/2$ times.
    For every phase, a corresponding bit in $b$ decides whether the transformation happens \emph{before} the placeholder updates or \emph{after} placeholder updates. 
    Thus, $2^{T/(2\ell)}$ different graph sequences are generated, where the graphs at position
    $2i \ell$ for any integer $i\ge 0$ in all these sequences are identical.
    
    More formally, for integer $i = 0, 1, ..., T/(2\ell) - 1$, the bit $b_i$ corresponds to the phase $H_{2i \ell}, \ldots, H_{2(i+1) \ell - 1}$ in the sequence $\graphseq_b$.
    Bit $b_i$ corresponds to graphs in a forward phase if and only if $i$ is even.
    We define $\graphseq_b$ as follows:
    \begin{description}
        \item[forward phase, $b_i = 0$:] For any $j$, $0 \leq j < \ell$ (i.e., the first half of the phase), we define $H_{2i \ell + j+1} := \apply{H_{2i \ell + j }}{o_j}$.
        The second half of the phase, i.e., $H_{(2i+1) \ell}, \ldots, H_{2(i+1) \ell - 1}$, is defined by alternating between $H_{(2i+1) \ell}$ and $\apply{H_{(2i+1) \ell}}{u}$, which are placeholder updates.
        \item[forward phase, $b_i = 1$:] This phase results from swapping the first and the second half of the forward phase with $b_i=0$.
        In particular, the first half of the phase, i.e., $H_{2i \ell}, \ldots, H_{(2i+1) \ell - 1}$, is defined by alternating between $H_{2i \ell}$ and $\apply{H_{2i \ell}}{u}$, which are placeholder updates.
        For any $j$, $\ell \leq j < 2\ell$, $H_{2i \ell + j+1} := \apply{H_{2i \ell + j}}{o_{j-\ell}}$.
        \item[backward phase, $b_i \in \{ 0,1 \}$:] The graphs $H_{2i \ell}, \ldots, H_{2(i+1)\ell-1}$ are defined analogously to a forward phase with $b_i = 0$ or $b_i = 1$, respectively.
        The only difference is that instead of $o_0, \ldots, o_{\ell-1}$, the updates $o'_0, \ldots, o'_{\ell-1}$ are used.
    \end{description}
    It follows from the construction that for any bit string $b$ and its corresponding graph sequence $(H_0, \ldots, H_{T})$, it holds that $H_i = G_1$ if $i \mod{4\ell} = 0$ and $H_i = G_2$ if $i \mod{4\ell} = 2\ell$. 
    However, for every pair of bit strings $b, b'$, $b \neq b'$, and their corresponding graph sequences $(H_0, \ldots, H_{T-1})$ and $(H'_0, \ldots, H_{T-1})$, there exists an $i$ with $i\mod{2\ell} = \ell$, so that $\lvert f(H_i) - f(H'_i) \rvert \geq s$.
    This stems from the fact that $f$ has spread $(s,\ell)$ and that $b$ and $b'$ must differ in at least one bit, say $b_i = 0$ and $b'_i = 1$.
    Thus, by construction, $H_{(2i+1)\ell} = G_2$ (as $\trans{G_1}{G_2}$ has already been executed in this phase)  and $H'_{(2i+1)\ell} = G_1$ (as $\trans{G_1}{G_2}$ has not yet been executed in this phase).
    
    Next we show that any graph sequence $\graphseq_b$ is on user-level ``edge-close'' to a very generic graph sequence.
    More specifically, let $\graphseq'$ be the sequence of graphs that alternates between $G_1$ and $\apply{G_1}{u}$, i.e., $(G_1, \apply{G_1}{u}, G_1, \apply{G_1}{u},\ldots)$.
    We argue that there is a (short) sequence $\Gamma_1, \Gamma_2, \dots \Gamma_{2\ell+1}$ of  graph sequences
    such that $\Gamma_i$ and $\Gamma_{i+1}$ are \emph{user-level edge-adjacent} graph sequences
    with $\Gamma_1 = \graphseq_b$ and $\Gamma_{2\ell+1} = \graphseq'$.
    In other words, the sequence of graph sequences transforms $\graphseq_b$ into $\graphseq'$.
    Instead of talking about a transforming graph sequence we use below the notation of \emph{user-level edge-adjacency operation}: It takes one operation to transform a graph sequence $\Gamma_i$ into a user-level edge-adjacent graph sequence $\Gamma_{i+1}$.
    
    We now give the details:
    For any $b$, each phase of $\graphseq_b$ can be divided into two halves $(H_{2i\ell}, \ldots, H_{(2i+1)\ell-1})$ and $(H_{(2i+1)\ell}, \ldots H_{(2(i+1)\ell-1)})$: in one half (the first if $b_i = 0$, the latter if $b_i = 1$), the updates are already alternating between $u$, i.e., inserting $e$, and $\rev{u}$, i.e., deleting $e$.
    Therefore, we only need to transform the updates on the other half into a sequence that alternates between $u$ and $\rev{u}$. 
    The other half is either $\transseq{G_1}{G_2}$ (in forward phases) or $\transseq{G_2}{G_1}$ (in backward phases).
    By \cref{lem:edge-adj-ulvl} (see below), there is a sequence of $2\ell + 2$ graph sequences $\graphseq_0',\dots,\graphseq_{2\ell+1}'$ that starts with $\graphseq_0' = \transseq{G_1}{G_2}$ and ends with $\graphseq_{2\ell+1}' = \graphseq'$.
    Furthermore, for each $0 \leq i \leq 2\ell$ there is an edge $e_i$ such that $\graphseq_i'$ and $\graphseq_{i+1}'$ are \emph{event-level} edge-adjacent on $e_i$.
    Similarly, such sequence and edges $(e'_0, \ldots, e'_{2\ell})$ exist for $\transseq{G_2}{G_1}$. 
    Recall that \emph{user-level edge-adjacency} allows to modify a graph sequence by updating an arbitrary number of graphs in the sequence \emph{using the same edge $e$, inserting $e$ into some graphs that do not contain $e$ and removing it from some of the graphs that contain $e$}.
    Also recall that $\graphseq_b$ consists of many repetitions of $\transseq{G_1}{G_2}$.
    Thus one user-level edge-adjacency operation  using the edge $e_1$ allows to modify \emph{all} occurrences of $\transseq{G_1}{G_2}$ to reflect the update of edge $e_1$.
    Afterwards a second user-level edge-adjacency operation using $e_2$ modifies the resulting sequence etc.
    After $2 \ell + 1$ user-level edge-adjacency operations all occurrences of $\transseq{G_1}{G_2}$ have been transformed into sequences that alternate $G_1$ and $ \apply{G_1}{u}$.
    The same can be done for the repetitions of $\transseq{G_2}{G_1}$.
    This shows that at most $2\ell +1$ user-level edge-adjacency operations suffice to transform $\graphseq_b$ into $\graphseq'$.
    
    Recall that we assumed the existence of an $\epsilon$-differentially private graph algorithm $\algo$ on user-level that, with probability at least $1-\delta$, computes $f$ with error less than $s/2$.
    After each update $\algo$ has to output $f$, i.e., the sequence of outputs has the same length as the sequence of updates.
    Let $O_b$ be the set of output sequences of $\cal{A}$ with additive error at most $s/2$ when processing $\graphseq_b$ with $G_1$ as initial graph.
    Now we consider the family of output sets $O_b$ for all $b \in \mathcal{B}$. 
    For any two $b, b' \in \cal{B}$ with $b \ne b'$ consider the value of $f$ applied to the two 
    graph sequences $\graphseq_b$ and $\graphseq_{b'}$ and recall that there is at least one index in these two graph sequences where the $f$-value applied to the two graphs at this index differs by at least $s$.
    Thus, the event that $\algo$ gives an answer with additive error less than $s/2$ on $\graphseq_b$ (i.e., returns a sequence from $O_b$) and  the event hat 
    $\algo$ gives an answer with additive error less than $s/2$ on $\graphseq_{b'}$ (i.e., returns a sequence from $O_{b'}$) are pairwise disjoint.
    However,  for any $b \in \cal{B}$ by $\epsilon$-differential privacy,
    the probability that $\algo$ outputs an sequence from $O_b$ when run on $\graphseq'$, i.e.~$\Pr[\algo(\graphseq') \in  O_b]$, is at least
    $e^{-(2\ell + 1)\epsilon} \cdot \Pr[\algo(\graphseq_b) \in O_b] \geq e^{-4\epsilon \ell} (1-\delta)$. 
    Since $T > 2\ell\log(e^{4 \epsilon \ell} / (1-\delta))$,
    we have that $\lvert \mathcal B \rvert = 2^{T/2\ell} > e^{4 \epsilon \ell} / (1-\delta)$.
    It follows that
    $\Pr[\algo(\graphseq') \in \cup_{b \in \mathcal{B}} O_b] = \sum_{b \in \mathcal{B}} e^{-4 \epsilon \ell} \Pr[\algo(\graphseq_b) \in O_b] \geq |\mathcal{B}|e^{-4\epsilon\ell}(1-\delta) > 1$,
    which is a contradiction.
\end{proof}

\begin{lemma}
        \label{lem:edge-adj-ulvl}
        Let $n > 0$ and let $f$ be a graph function on $n$-node graphs with spread $(s(n),\ell(n))$ that spares an edge $e$ on $G_1, G_2$ according to Definition~\ref{def:transform}. Let $u$ be the update that inserts $e$, and let $\graphseq' := (G_1, \apply{G_1}{u}, G_1, \apply{G_1}{u}, \ldots,)$ of length $\ell + 1 := \transnum{G_1}{G_2} + 1$. Then, there exists a sequence of graph sequences $(\graphseq_0 := \transseq{G_1}{G_2}, \ldots, \graphseq_{2\ell+1} := \graphseq')$ and a sequence of edges $(e_0, e_1, \ldots, e_{2\ell+1})$ so that, for any $0\leq i \leq 2\ell$, $\graphseq_i$ and $\graphseq_{i+1}$ are event-level edge-adjacent on $e_i$.
\end{lemma}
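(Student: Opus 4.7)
The approach is a two-pass construction that progressively transforms the update sets of $\transseq{G_1}{G_2}$ into those of $\graphseq'$. Since $\trans{G_1}{G_2}$ is a minimum-length update sequence and $e \in E(G_1) \cap E(G_2)$ is spared, every update $o_i$ of $\transseq{G_1}{G_2}$ acts on an edge distinct from $e$. Conversely, the alternating pattern of $\graphseq'$ between $G_1$ and $\apply{G_1}{u}$ can be realized by update sets that involve only the edge $e$, so operations on $e$ and operations on the other affected edges can be handled independently.

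In the first pass, I would remove the $\ell$ original updates one at a time via event-level edge-adjacency operations on the corresponding edges. After this pass, the graph sequence would be constant at $G_1$. In the second pass, I would insert the $e$-updates that make the sequence alternate as required by $\graphseq'$; each such insertion is another event-level edge-adjacency operation, this time on $e$. A straightforward count gives $2\ell$ operations, but by interleaving the two passes and choosing a realization of $\graphseq'$ in which the required $e$-updates appear at only about $\ell/2$ time steps (exploiting that $e$ already belongs to $G_1$, so some ``insert $e$'' operations are idempotent and can be merged or elided), the construction can be completed in $3\ell/2 + 1$ operations.

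The main obstacle is this counting argument: it requires exhibiting an order of update-set modifications that converts $\transseq{G_1}{G_2}$ into $\graphseq'$ with the claimed number of single-edge changes. The accompanying sequence of edges $(e_0, \ldots, e_{3\ell/2+1})$ is then read off from the construction: $e_i$ is the non-$e$ edge affected by the chosen $o_j$ during a removal step, or $e$ itself during an insertion step. Validity of each intermediate graph sequence follows immediately because every step changes a single update set at a single time step, and each such modification is by definition an event-level edge-adjacency on the edge $e_i$.
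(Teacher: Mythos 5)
The proposal captures the right high-level shape — a two-pass, stepwise edit of the update sets — but it does not constitute a proof, and you acknowledge as much (``the main obstacle is this counting argument''). There are two concrete gaps.

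First, when you ``remove the $\ell$ original updates one at a time,'' you have to argue that every intermediate graph sequence is valid. The paper's proof does its removal pass from the front of $\trans{G_1}{G_2}$ and must therefore verify that, after dropping $u'_i$, the later updates $u'_{i+1},\ldots$ still apply (e.g.\ they never try to delete an edge that was never inserted, or insert an edge that is already present). The paper's load-bearing observation is that $\trans{G_1}{G_2}$ has minimum length, so \emph{each edge is inserted or deleted at most once}; this is what makes dropping earlier updates safe. You invoke minimality, but only to deduce that no $o_i$ acts on $e$. That alone does not give you validity of the intermediate sequences after a removal; you need the ``no edge touched twice'' consequence, and you never state or use it.

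Second, the count $3\ell/2+1$ is the actual content of the lemma and you leave it entirely open. The paper achieves it by an explicit two-phase construction: $\ell/2$ steps that introduce the alternating placeholder updates on $e$, followed by $\ell$ steps that peel off the $u'_i$'s in order; the sequence of edges $(e_0,\ldots,e_{3\ell/2+1})$ is read off from this construction and is needed verbatim in the proof of \cref{thm:technical-lower-bound}, where each $e_i$ drives one user-level edge-adjacency move applied in parallel across all phases of $\graphseq_b$. Your appeal to ``interleaving the two passes'' and ``some insert-$e$ operations are idempotent and can be merged or elided'' is not a construction; it neither exhibits the intermediate sequences $\graphseq_i$ nor the edges $e_i$, and idempotence of $u$ on $G_1$ (because $e\in E(G_1)$) does not by itself merge two event-level changes into one — edge-adjacency is defined on the update sets, not on the resulting graphs, so a no-op at the graph level can still be one discrete adjacency step. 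To turn your sketch into a proof you would need to (i) pin down the order of removals and justify validity via the ``each edge touched at most once'' property, and (ii) give an explicit list of $3\ell/2+1$ single-edge modifications, as the paper does.
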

\begin{proof}
    For $i \in \{ 0 \} \cup [2\ell + 1]$, we construct sequences $\graphseq_i = (G_{i,1}, \ldots, G_{i,\ell+1})$.
    Define $\graphseq_0 := \transseq{G_1}{G_2}$.
    We transform $\graphseq_0$ by alternatingly applying $u$ (i.e.~we insert $e$) and $\rev{u}$ (i.e.~we delete $e$)  to its graphs (in addition to the existing updates).
    For $i \in \{1, \ldots, \ell\}$, we define
    \begin{equation*}
            \graphseq_i := (G_{i-1,1}, \ldots,
                            G_{i-1,i}, \apply{G_{i-1,i+1}}{u_i}, G_{i-1,i+2},
                            \ldots, G_{i-1,\ell+1}),
    \end{equation*}
    where
    \begin{equation*}
            u_i = \begin{cases}
                     u       & \text{if } i \text{ odd} \\
                     \rev{u} & \text{otherwise}.
                  \end{cases}
    \end{equation*}
    For example, for $i=1$, $\graphseq_1$ is the graph sequence where $u$ is applied to the graph $G_{0,2}$.
    The remaining sequences $\graphseq_{\ell+1},\dots,\graphseq_{2\ell+1}$ remove the updates made by applying $\trans{G_1}{G_2}$.
    Let $(u_1', \ldots) := \trans{G_1}{G_2}$. For $i \in \{ \ell + 1, \ldots, 2\ell + 1 \}$, we define
    \begin{equation*}
            \graphseq_i := (G_{i-1,1}, \ldots, G_{i-1,i}, \apply{G_{i-1,i+1}}{\rev{u_i'}}, \apply{G_{i-1,i+2}}{\rev{u_i'}}, \ldots, \apply{G_{i-1,\ell}}{\rev{u_i'}}).
    \end{equation*}
    In other words, in $\graphseq_i$, the first $i$ updates of $\transseq{G_1}{G_2}$ are not present compared to $\graphseq_{\ell}$.
    Since $\transseq{G_1}{G_2}$ is a shortest sequence that transforms $G_1$ into $G_2$, an edge is either inserted or deleted at most once.
    Thus, reverting $u'_i$ by executing $\apply{}{\rev{u'_i}}$ for all graphs $G_{i-1,j}$ with $j \geq i+1$ cannot invalidate the graph sequence.
    The claim follows because $u'_i$ only affects a single edge.
\end{proof}
}
\opt{full}{\LowerBoundProof}

\begin{fact}
        Minimum spanning tree has spread $(\Theta(nW),\Theta(n))$. Minimum cut has spread $(\Theta(nW),\Theta(n^2))$. Maximal matching has spread $(\Theta(n),\Theta(n))$. Maximum cardinality matching has spread $(\Theta(n),\Theta(n))$. Maximum weight matching has spread $(\Theta(nW),\Theta(n))$.
\end{fact}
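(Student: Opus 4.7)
The plan is to establish each claim by exhibiting, for every sufficiently large $n$, a pair of graphs $G_1, G_2$ on $n$ vertices that share at least one edge and whose objective values differ by $\Theta(s(n))$, while the minimum-length update sequence $\tau(G_1, G_2)$ has length $\Theta(\ell(n))$. In every case the upper bound on $|\tau(G_1, G_2)|$ will come from an explicit transformation; the matching lower bound follows because any update sequence must touch each element of the symmetric difference $E(G_1) \triangle E(G_2)$, and in the weighted-edge update model each weight change on a persistent endpoint pair requires one deletion plus one insertion.

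For minimum spanning tree, I would take $G_1$ to be a Hamiltonian cycle on $v_1, \ldots, v_n$ whose edge $e := (v_1, v_2)$ has weight $1$ and whose remaining $n-1$ edges have weight $W$, and $G_2$ to be the same cycle with $e$ unchanged but every other edge reweighted to $1$. Then $\mathrm{MST}(G_1) = 1 + (n-2)W$ and $\mathrm{MST}(G_2) = n-1$, so the gap is $\Theta(nW)$; the transformation reweights $n-1$ edges for $2(n-1) = \Theta(n)$ updates, and $e$ is spared. For minimum cut, I would take $G_1 = K_n$ with every edge weighted $W$ and $G_2$ to be a Hamiltonian path $v_1 - v_2 - \cdots - v_n$ (a subgraph of $G_1$) with the same weights. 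Then $\mathrm{MINCUT}(G_1) = (n-1)W$ and $\mathrm{MINCUT}(G_2) = W$, for a gap of $\Theta(nW)$; the transformation deletes the $\binom{n}{2} - (n-1) = \Theta(n^2)$ non-path edges, and every edge of the Hamiltonian path is spared.

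For the three matching variants, let $n$ be even, let $M = \{(v_{2i-1}, v_{2i}) : 1 \le i \le n/2\}$ be a perfect matching on $v_1, \ldots, v_n$, and let $e := (v_1, v_2) \in M$. For maximal matching and maximum cardinality matching, I would take $G_1$ with edge set $M$ and $G_2$ with edge set $\{e\}$: since each of $G_1, G_2$ is itself a matching, every maximal matching coincides with its full edge set, so both objective values equal $n/2$ on $G_1$ and $1$ on $G_2$, giving a gap of $\Theta(n)$; the transformation deletes $n/2 - 1 = \Theta(n)$ edges. For maximum weight matching, I would give every edge of $M$ weight $W$; the values then become $(n/2)W$ and $W$, for a gap of $\Theta(nW)$, with the same transformation length.

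The main obstacle is not conceptual but a small parity check: the definition of spread requires $s(n)$ and $\ell(n)$ to take even values, whereas expressions such as $\binom{n}{2} - (n-1)$ or $n/2 - 1$ are not always even. This is handled routinely by adding one or two isolated vertices, or by slightly adjusting a single edge in the construction, which changes $s$ and $\ell$ by additive constants and hence does not affect the $\Theta$ bounds.
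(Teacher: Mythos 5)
The paper states this Fact without proof, so there is no reference argument to compare against; judging the proposal on its own terms, it is correct and gives the kind of explicit witnesses the definition of spread demands. Each construction yields the right $\Theta(s)$ gap and the right $\Theta(\ell)$ for the symmetric-difference size, in all cases shares an unchanged edge $e$ (so the ``spares an edge'' requirement is met), and the parity issue you flag is real but easily patched as you describe. Two points worth tightening if this were written out in full: first, the claim that the minimum transformation length equals the symmetric-difference count relies on the convention — implicit in the paper's Lemma on edge-adjacency at user level, where each update $u_i'$ ``only affects a single edge'' — that one update touches exactly one edge (or one weighted-edge slot), which you should state explicitly since the preliminary definition of an update allows multi-edge tuples; and second, for the MST construction you should observe that the intermediate graphs along the reweighting path stay connected (deleting one cycle edge leaves a path), so the objective remains well-defined throughout, and that the $(n-2)(W-1)$ gap is only $\Theta(nW)$ when $W \geq 2$, which is the only interesting regime.
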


\bibliography{dynamic-privacy-paper.bib}

\opt{full}{

\appendix
\usetag{isappendix}

\tagged{notused}{
\section{Update Sequences for Node Adjacency}
With \cref{def:node-adjacency} and the definition of graph sequences in \cref{sec:preliminaries},
two node-adjacent differ in all updates that involve an edge incident to $v^*$.
Here, we provide equivalent definitions that allow us to specify adjacent graph sequences that differ only in the insertion or deletion of node $v^*$, and have identical edge update sets.

A \emph{graph sequence} is a sequence of graphs $\graphseq = (G_1,G_2,\dots)$, where each $G_t = (V_t,E_t)$ is associated with time step $t$.
We write the length of a graph sequence as $|\graphseq|$.
A graph sequence is associated with an initial graph $G_0 = (V_0,E_0)$.
Each $V_t$ for $t \geq 0$ is a subset of the \emph{universe} $V^*$, which contains all nodes that can ever be present in a graph sequence.
For each time step $t \geq 1$, the insertion and deletion of nodes is described by the sets $\Vins{t} \subseteq V^*$ and $\Vdel{t} \subseteq V_{t-1}$.
We define $V_t = (V_{t-1} \setminus \Vdel{t}) \cup \Vins{t}$.
Insertion and deletion of edges is described by the sets $\Eins{t}$ and $\Edel{t}$.
We define $E_t = (E_{t-1} \setminus \Edel{t}) \cup \{(u,v) \in \Eins{t} \mid u,v \in V_t\}$.

Informally, at each time step we specify which nodes in the universe of nodes are present in the graph sequence.
When specifying a set of edges to insert into the graph, we can specify edges incident to nodes that are not present in the graph sequence.
These edges do not modify the graph sequence, but allows us to define adjacent graph sequences with identical edge update sets.

\begin{definition}[Node-adjacency]
        \label{def:node-adjacency-2}
        Let $\graphseq$, $\graphseq'$ be graph sequences as defined above with associated sequences of updates
        $(\Vdel{t})$, $(\Vins{t})$ and
        $(\Vdel{t}')$, $(\Vins{t}')$, and the edge update sets $\Eins{t}$, $\Edel{t}$ for both graph sequences.
        Assume w.l.o.g.\ that $\Vdel{t}' \subseteq \Vdel{t}$ and $\Vins{t}' \subseteq \Vins{t}$ for all $t$.
        $\graphseq$ and $\graphseq'$ are \emph{adjacent on $v^*$} if $|\graphseq| = |\graphseq'|$, there exists a node $v^*$
        and one of the following statements holds:
        \begin{enumerate}
                \item \label{item:node-adjacency-2-1}
                      $\Vdel{t} = \Vdel{t}'\,\forall\,t$ and 
                      $\exists t^*$ such that $\Vins{t} \setminus \Vins{t}' = \{v^*\}$ and
                      $\Vins{t} = \Vins{t}'\,\forall\,t\neq t^*$;
                \item $\Vins{t} = \Vins{t}'\,\forall\,t$ and 
                      $\exists t^*$ such that $\Vdel{t} \setminus \Vdel{t}' = \{v^*\}$ and
                      $\Vdel{t} = \Vdel{t}'\,\forall\,t\neq t^*$;
                \item $\exists t_1, t_2$ with $t_1 < t_2$ such that
                      $\Vins{t_1} \setminus \Vins{t_1}' = \{v^*\}$ and $\Vdel{t_2} \setminus \Vdel{t_2}' = \{v^*\}$ and
                      for all $t \neq t_1,t_2$ $\Vins{t} = \Vins{t}'$ and $\Vdel{t} = \Vdel{t'}$.
                \item $\exists t_1, t_2$ with $t_1 < t_2$ such that
                      $\Vdel{t_1} \setminus \Vdel{t_1}' = \{v^*\}$ and $\Vins{t_2} \setminus \Vins{t_2}' = \{v^*\}$ and
                      for all $t \neq t_1,t_2$ $\Vins{t} = \Vins{t}'$ and $\Vdel{t} = \Vdel{t'}$.
        \end{enumerate}
        Additionally, all edges in $\Eins{t}$ and $\Edel{t}$ are incident to at least one node in $\Vins{t}$ and $\Vdel{t}$, respectively.
\end{definition}
}

\opt{conf}{\section{Figures}\label{sec:fig-appendix}}
\opt{conf}{
        \CountingAlgorithm[H]
        \binmechfigure[H]
}

\opt{conf}{\section{Proofs Omitted from \cref{sec:global-sens-mech}}\label{sec:global-sens-proofs}}
\opt{conf}{
\subsection{Non-Binary Counting}
\label{sec:counting-proofs}
\errorcorollary
\obsonelemma*
\obsoneproof
\obsoneextlemma*
\obsoneextproof
\subsection{Graph Functions via Counting Mechanisms}
\label{sec:graph-mech-proofs}
\graphobsonelemma*
\graphobsoneproof
\binmechcorollary*
\binmechproof

\subsection{Bounds on Continuous Global Sensitivity}
\label{sec:gs-bounds-proofs}
\sensitivitytable
\GSproofs
\subsection{Minimum Spanning Tree Algorithms}
\label{sec:mst-algo-proofs}
\mstedgegslemma*
\mstedgegsproof
\mstnodegslemma*
\mstnodegsproof
}

\opt{conf}{\section{Proofs Omitted from \cref{sec:monotone}}\label{sec:monotone-proof}}
\opt{conf}{
\TechnicalMonotone*
\MonotoneProof
}

\opt{conf}{\section{Proofs Omitted from \cref{sec:edge-dp-lower-bounds}}\label{sec:edge-dp-lower-bounds-proofs}}
\opt{conf}{
\inclowerboundsection
}

\opt{conf}{\section{Proofs Omitted from \cref{sec:lower-bound-user-level}}\label{sec:lower-bound-proof}}
\opt{conf}{
        \begin{definition}
                \label{def:user-node-adj}
                Let $\graphseq = (G_1, \ldots), \graphseq' = (G'_1, \ldots)$ be graph sequences. The two sequences are node-adjacent \emph{on user-level} if there exists a node $v^*$ and a sequence of graph sequences $\mathcal{S} = (\graphseq_1, \ldots, \graphseq_\ell)$ so that $\graphseq_1 = \graphseq$, $\graphseq_\ell = \graphseq'$ and, for any $i \in [\ell-1]$, $\graphseq_i$ and $\graphseq_{i+1}$ are node-adjacent on $v^*$. An algorithm is $\eps$-edge-differentially private on user-level if it is $\eps$-differentially private when considering node-adjacency on user-level.
        \end{definition}

\begin{definition}[adjacency transformation]
        \label{def:transform}
        Let $G_1$ and $G_2$ be a pair of graphs. We define $\trans{G_1}{G_2}$ to be an update sequence $u_1, \ldots, u_\ell$ of minimum length that transforms $G_1$ into $G_2$.
We denote the graph sequence that results from applying $\trans{G_1}{G_2}$ to $G_1$ by $\transseq{G_1}{G_2}$, i.e., $\transseq{G_1}{G_2} = (G_1, \apply{G_1}{u_1}, \apply{(\apply{G_1}{u_1})}{u_2}, \ldots, G_2)$.
\end{definition}

\begin{definition}[spread]
        \label{def:spread}
        Let $s, \ell: \nats \rightarrow \{ 2i \mid i \in \nats \}$ be functions. A graph function $f$ has spread $(s(n), \ell(n))$ on inputs of size $n$ if, for every $n$, there exist two graphs $G_1, G_2$ of size $n$ so that $\lvert f(G_1) - f(G_2) \rvert \geq s(n)$
and $\transnum{G_1}{G_2} = \ell(n)$. Furthermore, $f$ \emph{spares an edge $e$} if $e \in E(G_1) \cap E(G_2)$.
\end{definition}

        \TechnicalLowerBound*
        \LowerBoundProof
}

}

\end{document}